%% file: main.tex
\documentclass{article}

\usepackage[letterpaper, margin=1in]{geometry} 
\input{macros}

\title{A Framework for Ruling Out Quantum Speedups}

\date{}

\author{%
\begin{tabular}{c@{\hspace{2.2em}}c}
Thomas Huffstutler$^{\ast}$ & Upendra Kapshikar$^{\dagger}$ \\
David Miloschewsky$^{\ast}$ & Supartha Podder$^{\ast}$
\end{tabular}\\[0.9em]
{\small $^{\ast}$Stony Brook University, USA}\\
{\small $^{\dagger}$University of Ottawa, Canada}\\[0.4em]
{\small $^{\ast}$\texttt{\{thuffstutler, dmiloschewsk, supartha\}@cs.stonybrook.edu}}\\
{\small $^{\dagger}$\texttt{ukapshik@uottawa.ca}}
}

\begin{document}

\maketitle

\begin{abstract}
We study when partial Boolean functions can (and cannot) exhibit superpolynomial quantum query speedups, and develop a general framework for ruling out such speedups via two complementary lenses: \emph{promise-aware} complexity measures and \emph{function completions}. 

First, we introduce promise versions of standard combinatorial measures (including block sensitivity and related variants) and prove that if the relevant promise and completion measures ``collapse'', then deterministic and quantum query complexities are necessarily polynomially related, i.e.,\ $\D(f) = \poly(\Q(f))$. We then analyze structured families of promises, including symmetric partial functions and promises supported on Hamming slices, obtaining sharp (up to polynomial factors) characterizations in terms of a single gap parameter for the symmetric case and refined slice-dependent bounds for $k$-slice domains. 

Next, we formalize \emph{completion complexity} as the minimum of a measure over total completions of a partial function, and show that completability of a measure captures the possibility of superpolynomial quantum speedups. Finally, we apply this viewpoint to derive broad non-speedup criteria for some classes of functions admitting well-behaved completions, such as functions with low maximum influence on both the standard and $p$-biased hypercubes and functions with efficiently identifiable domains, and then show some hardness results for general completion techniques.
\end{abstract}

\newpage
\tableofcontents

\newpage
\input{intro}

\input{prelims}

\input{perp}

\input{other_results}

\input{extension}

\newpage
\bibliographystyle{alpha}  
\bibliography{main}  

\appendix
\input{appendix}

\end{document}

%% file: macros.tex
\usepackage{amsmath,amssymb,amsthm}
\usepackage{mathtools}

\usepackage{graphicx}
\usepackage{tikz}
\usepackage{tikz-3dplot}
\usetikzlibrary{decorations.pathreplacing,decorations.pathmorphing,arrows.meta}
\usepackage{pgfplots}
\pgfplotsset{compat=1.18}
\usepackage{tikz-cd}
\usepackage{lmodern}

\usepackage{colortbl}
\definecolor{headerblue}{RGB}{220,230,245}

\usepackage{comment}
\usepackage{bbm}
\usepackage{xparse}      
\usepackage{changepage}
\usepackage{algorithm}
\usepackage{algpseudocode}
\usepackage{dirtytalk}
\usepackage[framemethod=TikZ]{mdframed}
\usepackage{extarrows}
\usepackage{arydshln}
\usepackage[most]{tcolorbox}
\usepackage{setspace}
\usepackage[table,dvipsnames]{xcolor}
\usepackage{aliascnt}
\usepackage{xspace}

\usepackage[pagebackref=true]{hyperref}
\usepackage{cleveref}

\renewcommand*{\backref}[1]{}
\renewcommand*{\backrefalt}[4]{%
  \ifcase #1 %
  \or
    \space (cit.\ on p.~#2)%
  \else
    \space (cit.\ on pp.~#2)%
  \fi
}

\pgfplotsset{compat=1.18}

\definecolor{headerblue}{RGB}{220,230,245}  

\newtheorem{theorem}{Theorem}

\newaliascnt{conjecture}{theorem}
\newtheorem{conjecture}[conjecture]{Conjecture}
\aliascntresetthe{conjecture}

\newaliascnt{corollary}{theorem}
\newtheorem{corollary}[corollary]{Corollary}
\aliascntresetthe{corollary}

\newaliascnt{definition}{theorem}
\newtheorem{definition}[definition]{Definition}
\aliascntresetthe{definition}

\newaliascnt{question}{theorem}

\aliascntresetthe{question}

\newaliascnt{example}{theorem}
\newtheorem{example}[example]{Example}
\aliascntresetthe{example}

\newaliascnt{lemma}{theorem}
\newtheorem{lemma}[lemma]{Lemma}
\aliascntresetthe{lemma}

\newaliascnt{fact}{theorem}
\newtheorem{fact}[fact]{Fact}
\aliascntresetthe{fact}

\newaliascnt{claim}{theorem}
\newtheorem{claim}[claim]{Claim}
\aliascntresetthe{claim}

\newaliascnt{observation}{theorem}
\newtheorem{observation}[observation]{Observation}
\aliascntresetthe{observation}

\newaliascnt{proposition}{theorem}

\aliascntresetthe{proposition}

\newaliascnt{result}{theorem}
\newtheorem{result}[result]{Result}
\aliascntresetthe{result}

\newaliascnt{remark}{theorem}
\newtheorem{remark}[remark]{Remark}
\aliascntresetthe{remark}

\newaliascnt{problem}{theorem}
\newtheorem{problem}[problem]{Problem}
\aliascntresetthe{problem}

\newcommand{\polyeq}{=_{\mathrm{poly}}}
\newcommand{\polyleq}{\leq_{\mathrm{poly}}}
\newcommand{\dom}{\ensuremath{\mathrm{Dom}}}

\newcommand{\cmplcom}[1]{\overline{#1}}

\crefname{question}{question}{questions}
\Crefname{question}{Question}{Questions}
\crefname{conjecture}{conjecture}{conjectures}
\Crefname{conjecture}{Conjecture}{Conjectures}
\Crefname{fact}{Fact}{Fact}

\usetikzlibrary{decorations.pathmorphing}
\usetikzlibrary{arrows.meta}

\newcommand{\wt}{\ensuremath{\mathrm{wt}}}
\newcommand{\Q}{\ensuremath{\mathrm{Q}}}

\newcommand{\poly}{\ensuremath{\mathrm{poly}}}
\newcommand{\polylog}{\ensuremath{\mathrm{polylog}}}
\newcommand{\norm}[1]{\ensuremath{\left\lVert #1 \right\rVert}}

\newcommand{\R}{\mathrm{R}}

\newcommand{\fbs}{\mathrm{fbs}}

\newcommand{\D}{\mathrm{D}}
\newcommand{\Inf}{\mathrm{Inf}}

\newcommand{\T}{\mathrm{T}}
\newcommand{\QO}{\mathrm{O}}
\newcommand{\rc}{\mathrm{r}_{\mathrm{C}}}
\newcommand{\spar}{\mathrm{spar}}
\newcommand{\approxspar}{\widetilde{\mathrm{spar}}}
\newcommand{\Ker}{\mathrm{Ker}}

\newmdenv[
  topline=false,
  bottomline=false,
  rightline=false,
  linewidth=3pt,
  linecolor=black!20,
  innerleftmargin=10pt,
  innerrightmargin=0pt,
  innertopmargin=2pt,
  innerbottommargin=2pt,
  skipabove=6pt,
  skipbelow=0pt,
]{algbox}

\newtcolorbox[blend into=figures]{myfigure}[2][]{float=htb,
title={#2},#1}

\newcommand{\approxdegree}{\ensuremath{\widetilde{\mathrm{deg}}}}
\newcommand{\domain}[0]{\ensuremath{\textup{Dom}}}

\RenewDocumentCommand{\domain}{g}{%
  \ensuremath{%
    \textup{Dom}%
    \IfNoValueF{#1}{\!\left(#1\right)}%
  }%
}
\newcommand{\bxcertificate}[2]{\ensuremath{\mathrm{C}^{#1}(#2,x)}}
\newcommand{\bcertificate}[2]{\ensuremath{\mathrm{C}^{#1}(#2)}}
\newcommand{\certificate}[1]{\ensuremath{\mathrm{C}(#1)}}
\newcommand{\bxperpcertificate}[2]{\ensuremath{\mathrm{C}^{\{#1, *\}}(#2,x)}}
\newcommand{\bperpcertificate}[2]{\ensuremath{\mathrm{C}^{\{#1, *\}}(#2)}}
\newcommand{\perpcertificate}[1]{\ensuremath{\mathrm{C}^{\prime}(#1)}}
\newcommand{\xsensitivity}[2]{\ensuremath{\mathrm{s}(#2, #1)}}
\newcommand{\sensitivity}[1]{\ensuremath{\mathrm{s}(#1)}}
\newcommand{\xblocksensitivity}[2]{\ensuremath{\mathrm{bs}(#2, #1)}}
\newcommand{\blocksensitivity}[1]{\ensuremath{\mathrm{bs}(#1)}}
\newcommand{\xperpsensitivity}[2]{\ensuremath{\mathrm{s}^{\prime}(#2, #1)}}
\newcommand{\perpsensitivity}[1]{\ensuremath{\mathrm{s}^{\prime}(#1)}}
\newcommand{\xperpblocksensitivity}[2]{\ensuremath{\mathrm{bs}^{\prime}(#2, #1)}}
\newcommand{\perpblocksensitivity}[1]{\ensuremath{\mathrm{bs}^{\prime}(#1)}}
\newcommand{\criticalblocksensitivity}[1]{\ensuremath{\mathrm{cbs}(#1)}}
\newcommand{\balancedblocksensitivity}{\ensuremath{\mathrm{bs}}}

\newcommand{\deterministicquery}[1]{\ensuremath{\mathrm{D}(#1)}}
\newcommand{\FC}{\ensuremath{\mathrm{FC}}}   
\newcommand{\RC}{\ensuremath{\mathrm{RC}}}   
\newcommand{\polyequal}[2]{ \ensuremath{#1 =_{\mathrm{poly} }#2}}
\DeclarePairedDelimiter\abs{\lvert}{\rvert}
\newcommand{\criticalcertificate}[1]{\ensuremath{\mathrm{C}_{\mathrm{cri}}(#1)}}
\newcommand{\gap}{\mathrm{GAP}}

\renewcommand{\dom}{\domain}
\newcommand{\Dom}{\dom}

\usepackage[normalem]{ulem}

\newcommand{\pf}{\textup{PF}\xspace}
\newcommand{\lma}{\textup{LMA}\xspace}
\newcommand{\sat}{3-SAT\xspace}
\newcommand{\NP}{{\ensuremath{\mathsf{NP}}\xspace}}

%% file: intro.tex
\section{Introduction}

One of the central questions in quantum complexity theory is identifying problems which exhibit superpolynomial quantum speedups.
While polynomial quantum speedups are relatively common and can be constructed for many problems, finding exponential quantum speedups is significantly more challenging. 
When it comes to showing exponential quantum speedups and/or designing quantum algorithms, the query complexity model is often the preferred choice, as it is simple enough to facilitate proving lower bounds, yet expressive enough to capture the varying levels of problem difficulty.
In this model, query access is given to an unknown $n$-bit input $x$ with the goal of computing a predefined function $f$ on $x$. The queries are given in a bit-wise manner, i.e., given an index $i$ as a query, we get $x_i$ as the response. The goal is to minimize the number of queries necessary to compute $f(x)$ in the worst case. In the randomized setting, algorithms are allowed to output incorrect answers with probability $\tfrac{1}{3}$. When we consider quantum algorithms in this model, we allow the algorithms to make queries in superposition, thus raising the question of how much advantage quantum algorithms provide over classical algorithms in this setting.

Over the years, only a handful of structured problems---such as period-finding in Shor’s algorithm or Simon's problem~\cite{shor1999polynomial, simon1997power}---have been shown to admit exponential quantum speedups. In contrast, unstructured problems generally allow only polynomial speedups.
Despite being an important problem, only a few works have attempted to formalize quantum speedups and characterize what constitutes a superpolynomial quantum speedup. 
It was shown by~\cite{beals2001quantum} that when Boolean functions are defined on all inputs (i.e., $f: \{0,1\}^n \rightarrow \{0,1\}$), quantum and deterministic query complexities are polynomially (power 6) related. The relation was later improved to power 4 by~\cite{aaronson2021degree}, using Huang's sensitivity theorem~\cite{Hua19}. 
\cite{nisan1989crew} showed that deterministic and randomized complexities are polynomially related for total functions and
in fact, for total functions, all well-studied query complexity measures are polynomially related\footnote{A table of polynomial relationships between well-studied measures for total functions can be found in~\cite{aaronson2021degree}}~\cite{buhrman2002complexity}. 
Thus, to achieve superpolynomial quantum speedups, we must consider partial functions (i.e., Boolean functions defined only on a subset of inputs, $f: X \subseteq \{0,1\}^n \rightarrow \{0,1\}$). 
In fact, we know\footnote{This is because $\D(f) = O\left(\Q(f)^2 \cdot \log^2 |\domain(f)| \right)$~\cite{aaronson2015sculptingquantumspeedups}.} that the domain of any function achieving superpolynomial quantum speedup needs to be large~\cite{aaronson2015sculptingquantumspeedups}. 

However, not all partial functions (even on large domain size) exhibit quantum speedups. Thus, an important question is determining which partial functions exhibit superpolynomial quantum speedups, which do not, and more importantly, why. In pursuit of this, \cite{Need_For_Structure} showed that any partial symmetric function does not exhibit such speedup\footnote{They showed it for randomized vs.\ quantum.}. \cite{chailloux2019note} later strengthened this result to a more general notion of symmetry and showed that certain types of permutation invariant functions admit no superpolynomial speedup. 
However, symmetric functions appear highly irregular as the function remains invariant under all possible permutations of the indices. \cite{ben2024symmetries}~raised the question of how symmetric must a function be before it cannot exhibit a large quantum speedup and characterized functions invariant under different group actions, studying which ones allow superpolynomial speedups and which do not.

On the other hand, symmetric partial functions, even under various group actions, are only a tiny fraction of all possible partial functions. To shed light on the quantum speedups of non-symmetric functions,~\cite{bendavid2014structurepromisesquantumspeedups} showed that functions defined on the slice, i.e., $f: \binom{[n]}{k} \rightarrow \{0,1\}$, for $k\in[n]$, do not admit superpolynomial quantum speedups. This can be viewed as a generalization of~\cite{beals2001quantum}. Arguably in this case, the structure is on the domain, where every input $x$ is \emph{promised} to have Hamming weight $k$, but the function is free to take any possible value on these inputs. 

In another direction, \cite{aaronson2015sculptingquantumspeedups} studied the quantum advantage question from a different perspective. 
Given a problem which is intractable for both quantum and classical algorithms, can we find a sub-problem for which quantum algorithms provide an exponential advantage? 
Towards this end, they introduced the idea of \emph{sculpting} functions, 
and asked if it is possible to turn a total function into a partial function by removing some of the inputs, so that the newly formed partial function has a superpolynomial quantum speedup. They characterized the total Boolean functions for which an exponential quantum speedup can always be found in this manner and,
as an implication, showed that when the domain size of a partial function is small, i.e., it contains only $\poly(n)$ inputs, it cannot admit superpolynomial quantum speedups.

Other related works, such as \cite{yamakawa2024verifiable, ben2024separations, bendavid2014structurepromisesquantumspeedups, balaji2015graph, kulkarni2016quantum, podder2025fine}, explore quantum speedups from other perspectives, such as in the context of relational problems or under fixed query budgets etc.
Before the work of \cite{yamakawa2024verifiable}, it was widely believed that achieving exponential quantum advantage required some form of global regularity or structure. However, their result suggests that quantum speedup is more subtle and less well-understood than previously thought.
In this work, we revisit Boolean functions and provide alternative approaches to proving the absence of superpolynomial quantum speedups.

\subsection{Our contributions}

For the first result, we look at the promise version of block sensitivity. 
Consider a function $f: \lbrace 0,1 \rbrace ^n \to \lbrace 0,1 \rbrace $. A \emph{block} $B$ is a subset of indices, $B \subset [n]$. 
For a total function, a block $B$ is considered sensitive for input $x$ if $f(x) \neq f(x^B)$, where $x^B$ is obtained by flipping coordinates in $B$. For partial functions, we relax this notion: a block is \emph{also} considered sensitive if flipping it causes the input to leave the promise (the function output is undefined)~\cite{chakraborty2022certificate}. We call this measure \emph{promised block sensitivity} and label it $\perpblocksensitivity{f}$. A formal definition of $\perpblocksensitivity{f}$ and other measures discussed may be found in~\Cref{sec:partial_function_measures}. For two measures $A$ and $B$, we use $A \polyeq B$ to denote that they are polynomially related. Using the promise version of block sensitivity we show that large quantum query advantages cannot arise when other complexity measures collapse.

\begin{result}[\textbf{Requirements for superpolynomial quantum speedup};~\Cref{thm:measures_no_speedup}]\label{res:promise_measure_speedup}
    For any partial Boolean function $f$, if either (i) critical block sensitivity \criticalblocksensitivity{f} is polynomially related to \emph{promised block sensitivity} \perpblocksensitivity{f}, or (ii) critical block sensitivity is polynomially related to critical certificate complexity \criticalcertificate{f}, then quantum and deterministic query complexities are polynomially related. Formally,
    \begin{align*}
         \criticalblocksensitivity{f} \polyeq \perpblocksensitivity{f}&\textbf{ or } \criticalblocksensitivity{f}\polyeq \criticalcertificate{f}\\
        &\;\Downarrow \\
        \D(f)&\;\polyeq \Q(f).
    \end{align*}
\end{result}

Note that there are multiple ways one may define measures for partial functions which are equivalent for total functions~\cite{chakraborty2022certificate, anshu2020querytocommunicationliftingadversarybounds}, we chose our definitions as we found them the most natural for studying the role of $\domain{f}$ when comparing $\deterministicquery{f}$ and $\Q(f)$.

Many natural Boolean functions exhibit a high degree of permutation symmetry: their value depends only on the Hamming weight of the input, or they could be defined only on a single orbit under the action of the symmetric group $S_n$.
Formally, we study two related families of partial functions with $S_n$-symmetry:
\begin{itemize}
    \item \emph{fully symmetric} partial functions whose value depends only on $\mathrm{wt}(x)$, and
    \item functions whose domain is a single $k$-slice of the Boolean cube (equivalently, an $S_n$-orbit), where the function may be arbitrary on that slice.
\end{itemize}
Our first set of results gives a clean, tight characterization (up to polynomial factors) of several query complexity measures for partial symmetric functions in terms of a single parameter we denote {$\gap$},
\[
\gap(f)\;=\;\min_{x,y\in\domain(f),\,f(x)\neq f(y)} |\mathrm{wt}(x)-\mathrm{wt}(y)|.
\]
Informally, $\gap(f)$ captures the minimal Hamming-weight separation between any two inputs on which the function differs. 

Secondly, we look at the family of functions defined on a single slice. These functions were previously studied by \cite{bendavid2014structurepromisesquantumspeedups}, who proved a general bound $\D(f)=O(\Q(f)^{18})$.

\begin{result}[\textbf{Partial functions and $S_n$}; Lemma~\ref{lem:Q_for_sym}, \ref{lem:d_for_sym}, \ref{lemma:slice}]
For any partial symmetric function~$f$,
\begin{align*}
    \approxdegree(f) &\polyeq \Q(f) \polyeq \R(f) \polyeq \frac{n}{\gap} \hspace{0.5cm} \text{ and} \\
    \D(f) &\polyeq n - \gap.
\end{align*}
Combining these results, we get a complete picture of when quantum speedups over deterministic algorithms are possible for symmetric functions. As a corollary, we also obtain that, for partial symmetric functions, the polynomial method is essentially tight: $\approxdegree$ and $\Q$ are polynomially related.

Next, for any function on the $k$-slice, 
\begin{align*}
    \D(f)\;\le\;\frac{3n}{\min\{k,n-k\}}\;\certificate{f}\;\balancedblocksensitivity(f),
\end{align*}
which, together with known relationships between $\certificate{f}$, $\balancedblocksensitivity(f)$ and $\Q(f)$, implies $\D(f)=O\!\big(\tfrac{n}{\min\{k,n-k\}} \, \Q(f)^6\big)$.
\end{result}

This bound is not comparable to the result of \cite{bendavid2014structurepromisesquantumspeedups}. It is stronger for slices near the middle ($k\approx n/2$) and weaker for very small or very large slices.

We then continue the line of work of sculpting~\cite{aaronson2015sculptingquantumspeedups}, but in the reverse direction, i.e., instead of removing inputs from a total function, we will start with a partial function and add inputs to its domain to obtain a total one. This method is known as a \emph{completion} and has been studied in the communication complexity model~\cite{goos2025pseudodeterministic, gavinsky2025unambiguous}. {However, as far as we know, in the query-complexity model, a systematic study of \emph{completions} has been limited in the literature.}

We ask how much different complexity measures can increase under this transformation. 
More specifically, given a partial Boolean function $f$, a total Boolean function $F$ is a
\emph{completion} of $f$ if $F(x)=f(x)$ for all $x \in \domain(f)$, while values on
inputs outside the promise are assigned arbitrarily. Naturally, a partial function may
admit many distinct completions. The \emph{completion complexity} of $f$ with respect to any query complexity measure $M$ is then defined as $ \cmplcom{M}(f) := \min M(F)$, where the minimum is over all possible completions of $f$.

As our third result, we show that completions exactly characterize superpolynomial quantum speedups.

\begin{result}[\textbf{Completions characterize superpolynomial quantum speedups}; \Cref{{lemma:general}}]\label{res:completions}
        For any \emph{partial} function $f$, and for any\footnote{Any measure $M$ which for total functions is polynomially related to $\D$ \cite{beals2001quantum}.} query complexity measure $M,$ deterministic query complexity $\D(f)$ is polynomially related to $M(f)$ if and only if the measure $M$ is \emph{extendable}, that is, 
    \[ \D(f) \polyeq M(f) \textit{ if and only if
 } \cmplcom{M}(f) \polyeq M(f).  \]
        
\end{result}

Specifically, a partial function extends to a total function while preserving a polynomial relationship in any complexity measure, if and only if no superpolynomial speedup over classical query complexity is possible for that measure. All details related to this result can be found in \Cref{sec:completion}.

We can use this result to establish quantum non-speedup in the following way: For any (possible partial) function $f$, $\D(f) \geq \Q(f) \geq \approxdegree(f)$. If a partial function $f$ has approximate degree $T$, then by definition for every input $x \in \domain(f)$, the approximating polynomial $p$ outputs a value $p(x)$ with $|p(x)-f(x)| \leq \epsilon$. However, for the inputs $x \notin \domain(f)$, the polynomial can obtain any value in $[0,1]$. Thus, if we can construct a \emph{completed} polynomial $P$ of degree $\poly(T)$ such that $|P(x)-f(x)| \leq \epsilon'$ for every $x\in \domain(f)$ and $P(x)$ is $\epsilon'$ close to either 0 or 1 for all $x\notin \domain(f)$, then $\D(f) = O({\poly}(\approxdegree(f)))$. See \Cref{fig:polynomial} for an illustration of this concept.

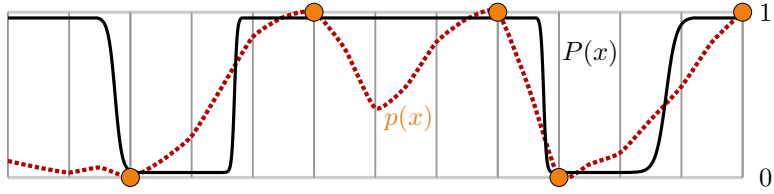
\begin{figure}[h]
    \centering
    \input{poly_approx}
    \caption{The orange points represent the original inputs in the domain, and the red dotted polynomial $p(x)$ is a polynomial that approximates the function $f$ on these points. The solid black polynomial $P(x)$ is a completion that provides an approximation across the entire Boolean hypercube.}
    \label{fig:polynomial}
\end{figure}

Using the completion framework, we show that superpolynomial quantum speedups exist only when the complexity of identifying the domain is superpolynomial with respect to the approximate degree.

\begin{result}[\textbf{No superpolynomial quantum speedups for functions with efficiently verifiable domains};~\Cref{lemma:naive_completion}]
    Let $X \subseteq \{0,1\}^{n}$ be a domain whose membership can be checked by an algorithm in $T$ queries. Then, for \emph{all} functions $f: X \rightarrow \{0,1\}$ with approximate degree $\Omega(\poly(T))$,
    \begin{align*}
        \D(f)\polyeq\approxdegree(f).
    \end{align*}
\end{result}

Next, we demonstrate the usefulness of completions by proving quantum non-speedup results for various function families. We consider various smoothness properties of polynomials, the main property being Lipschitz continuity which has been studied in non-smooth and nonconvex optimization, stochastic convex optimization~\cite{Larson_2021}, and geometric analysis and optimal transport~\cite{magnani2009contactequationslipschitzextensions}. Furthermore, we consider functions $f$ with low influence which as well as functions which are defined on the generalized $p$-biased hypercube. Functions on the $p$-biased hypercube have been studied for their relationships to the Erdős–Rényi random graph model in~\cite{Sousi_2020, nagda2025optimaldistinguishersplantedclique}, as well as with regards to the effects of the $p$-biased hypercube on noise stability~\cite{noisey_p-biased} and linearity testing~\cite{khot2025biasedlinearitytesting1}. These $p$-biased functions have also been studied with respect to their relationships to juntas and low degree functions~\cite{biased_agreement_test, Dinur_2024}, have been used to establish a hypercontractive inequality for general $p$ and prove an analog of the invariance principle of Mossel, O'Donnell and Oleszkiewicz for the $p$-biased hypercube~\cite{mossel2005noisestabilityfunctionslow, keevash2021globalhypercontractivityapplications}.

\begin{result}[\textbf{No superpolynomial quantum speedups for functions with low influence and sparsity};~\Cref{thm:natural_influence} and~\Cref{lem:biased}]
    Let $f:\domain(f)\rightarrow \{-1,1\}$ have an approximating polynomial $p(x)$ such that for all $i\in [n]$, the influence $\Inf_i(p)$ and sparsity $\spar_i(p)$ are low. Then,
    $\D(f)\polyeq \Q(f).$
  
    Furthermore, we generalize the results to the $p$-biased hypercube.
\end{result}

Finally, we propose a general completion technique concerning finding a perturbation vector $\Delta$ for an approximating polynomial $p(x)$ for a partial function $f$, so that adding this perturbation to the Fourier coefficients of $p(x)$ results in a polynomial $p_{\Delta}(x)$ such that $|p_{\Delta}(x)|\geq \frac{1}{\poly\left(\approxdegree(f)\right)}$ for all $x\in \{-1,1\}^{n}$, resulting in an approximating polynomial for a total function. Furthermore, we show via a reduction from \sat that the problem of finding such a perturbation in general is \NP-complete.

\begin{result}
    Given multilinear polynomials $p,q_{1},\dots,q_{n}: \{-1,1\}^{n}\rightarrow \mathbb{R}$, and constants $\epsilon,B>0$, deciding whether there exists a vector $\Delta\in\mathbb{R}^{n}$ such that for all $x \in \{-1,1\}^n$,
    \begin{align*}
        \biggl| p(x)+\sum_{j=1}^{n}\Delta_{j}q_{j}(x) \biggr| \geq \epsilon \text{  and $||\Delta||_{\infty}\leq B$ } 
    \end{align*}
    is \NP-complete.
\end{result}

\subsection{Our techniques}

\noindent\textbf{Promise measures}

To obtain the requirements for speedup using the two variations of promise measures, we follow the steps of the proof that for total functions $f$, $\polyequal{\deterministicquery{f}}{\Q(f)}$. For a survey of the proofs, see~\cite{BUHRMAN200221}. First, we define the alternative promise versions of  $\blocksensitivity{f}$ and $\certificate{f}$, $\perpblocksensitivity{f}$ and $\perpcertificate{f}$. As $\perpblocksensitivity{f}$ was already described alongside Result~\ref{res:promise_measure_speedup}, let us describe $\perpcertificate{f}$. Standardly, a certificate $c$ for a promise function certifies that each input in the promise which agrees with $c$ has the same output. The alternative $\perpcertificate{f}$ is defined such that no input which agrees with $c$ is outside of $\domain{f}$.

Transferring the relationships from the total to our promise versions of the measures relies on the role of inputs outside of the promise. Specifically, many steps use the assumption that changing the input, for example by flipping multiple indices, keeps it in the promise, which is not necessarily true. For example, consider the proof that the size of a minimal block is bounded by sensitivity. If we attempt to show a promise-version of such a statement, meaning that the size of a minimal block $B$ such that $x,x^B\in \domain(f)$ is bounded by its sensitivity, we find that removing an index $i$ from $B$ might take it out of the promise, meaning $x^{B\setminus \{i\}} \not\in \domain(f)$ and therefore $i$ might not be a sensitive index. This is underscored by the fact that there are partial functions for which $\sensitivity{f}=0$, such as functions on the slice. On the other hand, using the same argument, we bound the size of a block which certifies that takes an output outside of the promise or changes its output by $\perpsensitivity{f}$ ($\perpblocksensitivity{f}$ restricted to blocks of size $1$). The other proofs follow similarly, meaning that one version of the measure makes the proof impossible, while the other does not.

\noindent \textbf{Efficiently identifying domains is sufficient}

When trying to think of explicit constructions for a completion, the na\"ive idea is to just set all inputs not in the promise of a partial function to have an output of $0$. To characterize the complexity in using this na\"ive completion, we examine the difficulty of determining, for any partial function $f:X\rightarrow \{0,1\}$, whether a given input $x\in X$ or if $x\notin X$. This is equivalent to the condition that there exists a low degree approximation of the indicator function $\mathbbm{1}_{X}$ on $X$, denoted by $\widetilde{\mathbbm{1}}_{X}$. We take the degree $d$ $\epsilon-$approximating polynomial $p(x)$ for $f(x)$, and the low degree ($\leq \poly(d)$) $\epsilon'-$approximator $q(x)=\widetilde{\mathbbm{1}}_{X}$ and define the new polynomial $r(x)=q(x)p(x)$. This new polynomial approximates the na\"ive completion $F$ of $f$ everywhere on the Boolean hypercube to error at most $\epsilon+\epsilon'$ by the triangle inequality. Since $\widetilde{\mathbbm{1}}_{X}$ is defined for a total function, this implies that being able to efficiently (classically) distinguish between $X$ and $\{0,1\}^{n}/X$ is sufficient to show that the na\"ive completion is enough to prove that no superpolynomial speedup exists for the given partial function $f$.

\noindent \textbf{Characterizing the natural completion}

A natural completion one can think of is based on the sign of an  approximating polynomial. If we are given a partial function $f:\domain{f}\rightarrow \{-1,1\}$ which has a degree $d$ approximating polynomial $p(x)$, then $p(x)$ will still output a value on inputs $x\notin \domain{f}$. If we take any $x\notin \domain{f}$ and evaluate $p(x)$, then as long as $p(x)$ outputs a value that is sufficiently bounded away from $0$, by standard boosting techniques we can use $p(x)$ as the degree $d$ approximating polynomial for the completion $F(x) = \textup{sign}(p(x))$ (i.e. $F(x)=1$ if $p(x)>0$ and $F(x)=-1$ if $p(x)<0$). The problem is that we are only guaranteed this holds for $x\in\domain{f}$. For $x\not\in\domain{f}$, we could have $\abs{p(x)}\leq \epsilon$ where $\epsilon \ll \poly(d)^{-1}$.

We would like to characterize the notion of this \emph{natural} completion more precisely and give a condition which captures when we can guarantee that $\abs{p(x)}\geq \frac{1}{\poly(n)}$ on inputs outside of the domain. To this end, we consider global smoothness properties of polynomials, namely the Lipschitz continuity of an approximating polynomial for $f$. We give the optimal Lipschitz continuity condition for an approximating polynomial $p(x)$ in order to ensure that $|p(x)|\geq \frac{1}{\poly(\approxdegree(f))}$ so that we can boost $p(x)$ to the canonical $\frac{1}{3}$ error approximating polynomial for $f$ while only adding a polynomial blow-up in the degree of $p(x)$. We then give conditions on the influence and sparsity of partial functions $f$ that ensure this global smoothness condition for an approximating polynomial for $f$, and generalize this condition to partial functions defined on the $p$-biased hypercube. Lastly, we discuss other ways one may obtain such completions by perturbing the approximating polynomial.

\subsection{Discussion and further work}
In this work we introduced two new frameworks for proving non-superpolynomial quantum speedups and analyzed the relationship between partial functions and $S_n$. Furthermore, we showed several applications of the completion technique. We also introduced a new technique using the smoothness of an approximating polynomial to prove that no superpolynomial speedup exists. We list some open problems that would further the usefulness of our techniques. 

\begin{itemize}
    \item Completion complexity characterizes the possibility of superpolynomial speedups in comparison to deterministic query complexity.
    Is there a stronger characterization when comparing quantum to randomized query complexity? 
    \item Since the completion of an approximate degree captures the gap between deterministic and approximate degree -- and consequently, between deterministic and quantum complexity -- but not all functions exhibit quantum speedups, it follows that not all polynomials are extendable. What intrinsic properties of a polynomial determine whether it admits such a completion?
    \item For what classes of functions is a low degree approximation of the domain indicator function achievable? An exact characterization would quantify precisely when the na\"ive completion is sufficient to prove no superpolynomial speedup.

    \item Theorem~\ref{thm:measures_no_speedup} directly implies that partial functions on domains with a large promise (i.e., low number of undefined inputs) do not attain superpolynomial speedup. Can one define a measure of \emph{connectedness} on the Boolean hypercube and show that on highly-connected domains one cannot obtain superpolynomial speedup?

    \item Are there more fine-grained versions of the promise measures discussed in~\Cref{sec:promise_measures} which allow us to show the non-existence of a superpolynomial speedup?

\end{itemize}

\subsection{Organization of the paper}

The rest of the paper is divided into four main sections, with an additional result deferred to the Appendix.
\Cref{sec:prilim} presents preliminaries, including query complexity, (traditional and a few new) complexity measures for partial functions, and key Fourier analysis tools.
\Cref{sec:promise_measures} develops the necessary lemmas and presents the proof of Result~\ref{res:promise_measure_speedup}.
Then \Cref{sec:other_section} discusses classical vs.\ quantum results for functions under symmetric promises. 
Finally, \Cref{sec:quantum_speedup_completions} introduces completion complexity, one of the core concepts of our framework and applies this framework to analyze quantum speedups through completions, covering different families of Boolean functions.

\subsection*{Acknowledgments}
We thank Srijita Kundu for many helpful discussions during the early phase of this project. We also thank Nengkun Yu and Kunal Marwaha for helpful discussions. This research was supported by US Department of Energy (grant no DE-SC0023179) and US National Science Foundation (award no 1954311).

%% file: poly_approx.tex

\begin{tikzpicture}
\begin{axis}[
    width=12cm,
    height=4cm,
    xmin=-0.2, xmax=12.6,
    ymin=-0.05, ymax=1.05,
    axis lines=none,
    xtick=\empty,
    ytick=\empty,
    clip=false,
    enlarge x limits=false,
]

\def\railcol{gray!40}
\def\gridcol{black!40}
\def\gridw{0.7pt}
\def\gap{0.06}

\addplot[very thick, \railcol] coordinates {(0,0) (12,0)};
\addplot[very thick, \railcol] coordinates {(0,1) (12,1)};

\foreach \x in {0,1,...,12}{
  \addplot[\gridcol, line width=\gridw] coordinates {(\x,0) (\x,1)};
}
\foreach \x in {2,9}{
  \addplot[\gridcol, line width=\gridw] coordinates {(\x,0) (\x,\gap)};
  \addplot[\gridcol, line width=\gridw] coordinates {(\x,\gap) (\x,1)};
}
\foreach \x in {5,8,12}{
  \addplot[\gridcol, line width=\gridw] coordinates {(\x,0) (\x,1-\gap)};
  \addplot[\gridcol, line width=\gridw] coordinates {(\x,1-\gap) (\x,1)};
}

\node[anchor=west] at (axis cs:12.12,1) {1};
\node[anchor=west] at (axis cs:12.12,0) {0};

\addplot[
  ultra thick,
  densely dotted,
  red!70!black,
  smooth,
  tension=0.35,
  restrict y to domain=0:1,
] coordinates {
  (0,0.10) (0.5,0.06) (1,0.03) (1.5,0.06) (2,0.00)
  (2.5,0.10) (3,0.25) (3.5,0.55) (4,0.85) (4.5,0.97) (5,1.00)
  (5.5,0.78) (6,0.42) (6.5,0.55) (7,0.82) (7.5,0.96) (8,1.00)
  (8.5,0.55) (9,0.00)
  (9.5,0.08) (10,0.15) (10.5,0.35) (11,0.55) (11.5,0.82) (12,1.00)
};

\addplot[
  very thick,
  black,
  domain=0:12,
  samples=4000,
] {
  0.03 + 0.94 * (
    min(0.995, max(0,
      1
      - 1/(1+exp(-18*(x-1.75)))
      + 1/(1+exp(-45*(x-3.70)))
      - 1/(1+exp(-45*(x-8.75)))
      + 1/(1+exp(-12*(x-10.75)))
    ))
  )
};

\addplot[
  only marks, mark=*,
  mark size=3.4pt,
  mark options={fill=orange!80!brown, draw=black, line width=0.35pt},
] coordinates {
  (2,0) (5,1) (8,1) (9,0) (12,1)
};

\node[anchor=west, text=orange!60!brown] at (axis cs:6,0.36) {$p(x)$};
\node[anchor=west, text=black!60!black]  at (axis cs:8.9,0.75) {$P(x)$};

\end{axis}
\end{tikzpicture}

%% file: prelims.tex
\section{Preliminaries}\label{sec:prilim}

For a positive integer $n$, let $[n]=\{1,2,\ldots,n\}$.
For $x\in\{0,1\}^n$ and $S\subseteq[n]$, we write $x_S$ for the restriction of $x$ to the coordinates in $S$.
For $i\in[n]$, let $x^{i}$ denote the string obtained from $x$ by flipping its $i$th bit; more generally, for a
set\footnote{We will use terms set and block interchangeably, as is common in query complexity literature.} $\mathrm{B}\subseteq[n]$, let $x^{\mathrm{B}}$ denote the string obtained by flipping all bits in $\mathrm{B}$.

Let $\dom \subseteq \lbrace 0,1 \rbrace^n$ and consider a function $f: \dom \rightarrow \lbrace 0,1 \rbrace$.
Alternatively, we can express the same function $f: \lbrace 0,1 \rbrace^n \rightarrow \lbrace{0,1, * \rbrace}$ with $f(x) = *$
if and only if $x \notin \dom$.
We will switch between these two representations depending on the context.
We also view a function $f$ either as $f:\Dom(f)\subseteq\{0,1\}^n\rightarrow\{0,1\}$ or as $f:\Dom(f)\subseteq\{\pm1\}^n\rightarrow\{-1,1\}$. These two viewpoints are equivalent. The complement of a domain is denoted $\domain(f)^c = \{x\in \{0,1\}^n : x\not\in \domain{f}\}$.

Given a string $x \in \lbrace 0,1\rbrace^n$, we use $\wt(x)$ to denote its Hamming weight.

\subsection{Query complexity}
\noindent\textbf{Deterministic Query Complexity.} A decision tree (or query algorithm) $\T$ computing $f$ is a binary tree where each internal node has exactly two children. Each internal node $v$ is labeled by a variable $x_v$, and each leaf $\ell$ is labeled by a bit $b_\ell \in \{0,1\}$. The computation on input $x$ proceeds as follows: It starts at the root of $\T$ and follows a path based on queries. At each internal node $v$, the variable $x_v$ is queried, and the computation moves to the left child if $x_v = 0$ and to the right child if $x_v = 1$. When a leaf $\ell$ is reached, the output is $\T(x) := b_\ell$, and the process terminates. The deterministic decision tree complexity of $f$, denoted $\D(f)$, is the depth of the shallowest decision tree computing $f$, equivalently stated as the worst-case number of queries made by the optimal decision tree.

\noindent\textbf{Randomized Query Complexity.} A randomized decision tree $\mathcal{T}$ is a probability distribution over deterministic decision trees $\T$. It is said to compute a Boolean function $f$ if, for every valid input $x$,
\[
\Pr_{\T \sim \mathcal{T}}[\T(x) \neq f(x)] \leq 1/3.
\]
The randomized decision tree complexity of $f$, denoted $\R(f)$, is the minimum depth $d$ for which there exists a randomized decision tree $\mathcal{T}$ supported on deterministic trees of depth at most $d$ that satisfies this condition.
Although the error bound of $1/3$ may seem arbitrary, it can be replaced by any constant strictly smaller than $1/2$ without affecting the complexity measure by more than a constant factor.

\noindent\textbf{Quantum Query Complexity.} A quantum oracle for an input $x$ is a unitary transformation $\QO_x:\mathbb{C}^m \to \mathbb{C}^m$ on an $m$-dimensional complex space, where $m \geq \lceil\log n \rceil + 1$. It acts on basis states of the form $|i, b, z\rangle$, where $i$ (an index in ${1,\dots,n}$) takes $\lceil\log n \rceil$ bits, $b$ takes 1 bit, and $z$ is the remaining workspace. The oracle applies the transformation:
\[
\QO_x |i, b, z\rangle = |i, b\oplus x_i, z\rangle.
\]
By linearity, this defines $\QO_x$ on all of $\mathbb{C}^m$.

A quantum query algorithm with $q$ queries is a unitary transformation
\[
\mathsf{A} = \mathsf{U}_q \QO_x \mathsf{U}_{q-1} \QO_x \dots \QO_x \mathsf{U}_1 \QO_x \mathsf{U}_0,
\]
where each $\mathsf{U}_i$ is a fixed unitary independent of $x$. The algorithm’s output is determined by measuring the first qubit of $\mathsf{A}|0^m\rangle$.

The algorithm computes $f$ with bounded error if, for every $x$, its output equals $f(x)$ with probability at least $2/3$. The bounded-error quantum query complexity of $f$, denoted $\Q(f)$, is the minimum number of queries $q$ needed for such an algorithm.

For an overview of key results and open problems in query complexity, see \cite{aaronson2021open, ambainis2018understanding, buhrman2002complexity}.

\subsection{Definition of measures for partial functions}\label{sec:partial_function_measures}

Due to the nature of partial functions, which are undefined on certain inputs, one must be careful when defining complexity measures.
Indeed, when it comes to partial functions, even for standard combinatorial functions, there is no \emph{canonical choice} of complexity measure, and there are often multiple definitions in the literature~\cite{chakraborty2022certificate, anshu2020querytocommunicationliftingadversarybounds}.  
Ideally, we would want these measures to be \emph{meaningful} and to be related to some algorithmic measures, just like in the case of total functions.
For example, some combinatorial measures, such as approximate degree, adversary bound etc., of total functions lower bound quantum query complexity $\Q(f)$.
Therefore, a useful definition of  these measures in the case of partial functions should also lower bound $\Q(f)$.
Here, we state the definitions we will use throughout the paper, and show that they satisfy the requisite properties. Note that various measures for partial functions were already considered in the literature; see~\cite{anshu2020querytocommunicationliftingadversarybounds, chakraborty2022certificate, bun2022approximate} for reference.

\begin{definition}[Degree]
For a \emph{(possibly partial)} Boolean function $f$, its degree is defined as the minimum degree of a real polynomial $p$ that satisfies  
\[
p(x) = f(x) \quad \text{for all } x \in \mathrm{Dom}(f),
\]  
as well as  
\[
p(x) \in [0,1] \quad \text{for all } x \in \{0,1\}^n.
\]  
\end{definition}
If $f$ is a total function, then there is a unique polynomial $p$ satisfying the above conditions.
We can similarly define approximate degree.

\begin{definition}[Approximate degree]
For a (possibly partial) Boolean function $f$, its approximate degree measure, denoted by $\approxdegree_\epsilon(f)$, is the minimum degree of a real polynomial $p$ that approximates $f$ in the sense that  
\[
|p(x) - f(x)| \leq \epsilon \quad \text{for all } x \in \mathrm{Dom}(f).
\]  
Additionally, \( p(x) \) must satisfy  
\[
p(x) \in [0,1] \quad \text{for all } x \in \{0,1\}^n.
\]  
\end{definition}
As it is common in the literature, we will use the shorthand $\approxdegree$ to denote $\approxdegree_{\frac{1}{3}}$. Note that for both $\deg(f)$ and $\approxdegree(f)$, we are asserting that the polynomial $p(x)$ is bounded (as opposed to unbounded $p(x)$ outside of the domain) as this provides a stronger lower bound of $\Q(f)$.

In a seminal result, \cite{beals2001quantum} showed that one can lower bound quantum query complexity by the approximate degree, which leads to the following chain of inequalities.
\begin{lemma}[\cite{beals2001quantum}]\label{lem:degree_lower_bound}
    For every Boolean function $f$, we have, $\D(f) \geq \R(f) \geq \Q(f) = \Omega\left({\approxdegree(f)}\right)$.
\end{lemma}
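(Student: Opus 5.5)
The chain $\D(f) \geq \R(f) \geq \Q(f) = \Omega(\approxdegree(f))$ breaks into three inequalities. I would prove them from left to right, checking that each argument only uses inputs in $\domain(f)$ except where boundedness on the whole cube is needed.

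First, $\D(f) \geq \R(f)$ is immediate. A deterministic decision tree of depth $d$ computing $f$ on $\domain(f)$ is a randomized tree supported on a single deterministic tree, and its error on every valid input is $0 \leq 1/3$. Second, $\R(f) \geq \Q(f)$ holds because a randomized algorithm of depth $d$ can be simulated by a $d$-query quantum algorithm. The quantum algorithm prepares a superposition over the random seed, using workspace registers to encode which tree is sampled. Each query of the sampled tree is implemented coherently: a seed-controlled unitary computes the next index from the seed and the previously recorded answers, then $\QO_x$ is applied once. This uses exactly one oracle call per level, padding shorter paths with dummy queries. Measuring the output register reproduces the output distribution of the randomized tree exactly, so the success probability is at least $2/3$ on every $x \in \domain(f)$.

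The substantive step is $\Q(f) = \Omega(\approxdegree(f))$, following \cite{beals2001quantum}. I would show by induction on $t$ that after $t$ queries, every amplitude of the state $\mathsf{U}_t\QO_x\cdots\QO_x\mathsf{U}_0|0^m\rangle$ is a complex multilinear polynomial in $x_1,\dots,x_n$ of degree at most $t$. The inductive step uses that $\QO_x$ maps the amplitude of $|i,b,z\rangle$ to $(1-x_i)\alpha_{i,b,z} + x_i\alpha_{i,1-b,z}$, which raises the degree by at most one, while the unitaries $\mathsf{U}_j$ are fixed linear maps and do not raise degree. The acceptance probability is then $p(x) = \sum |\alpha|^2$ over basis states with first qubit $1$; it is a real polynomial of degree at most $2q$, multilinearized using $x_i^2 = x_i$.

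The point needing care for partial functions is the boundedness requirement in the definition of $\approxdegree$. This holds automatically: for every $x \in \{0,1\}^n$, including $x \notin \domain(f)$, $\QO_x$ is still a legitimate unitary, so $p(x)$ is a genuine probability and lies in $[0,1]$. On $\domain(f)$, correctness gives $|p(x) - f(x)| \leq 1/3$. Hence $\approxdegree(f) \leq 2\Q(f)$. I expect no real obstacle here; the only things to verify carefully are this boundedness off the promise and the bookkeeping in the coherent simulation of the randomized tree.
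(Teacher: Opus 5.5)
Your proposal is correct and takes essentially the same approach as the paper, which gives no proof of its own but cites the standard argument of \cite{beals2001quantum}: the trivial simulations $\D(f) \geq \R(f) \geq \Q(f)$, and the polynomial method showing that the acceptance probability of a $\Q(f)$-query algorithm is a polynomial of degree at most $2\Q(f)$. Your observation that this polynomial lies in $[0,1]$ even for $x \notin \domain(f)$, because $\QO_x$ is unitary for every $x$, is exactly what justifies the paper's remark that the bound carries over to partial functions under its bounded definition of $\approxdegree$.
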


The above results were initially proved for total functions, but as mentioned in \cite{anshu2020querytocommunicationliftingadversarybounds}, one can define these measures for partial functions, and the results also hold for them.

\begin{definition}[Partial assignment]
A partial assignment $\alpha$ is an element of $\{0,1,*\}^n$. We say that $\alpha$ is consistent with an input $x\in \{0,1\}^n$ if for all $i\in [n]$, either $\alpha_i=x_i$ or $\alpha_i=*$.
\emph{Support} of a partial assignment $\alpha$, denoted as $\mathrm{Support}(\alpha)$, is the set $\lbrace i: \alpha_i \neq * \rbrace$.
\end{definition}

\begin{definition}[Certificate, \cite{chakraborty2022certificate}]
    Let  $f:\{0,1\}^n \rightarrow \{0,1, *\}$, and $f(x)=b\in \{0,1\}$. Let $\alpha$ be a partial assignment that is consistent with $x$.
    We say that $\alpha$ is 
    \begin{itemize}
        \item A $b$-certificate (for $x$) if for any $x^\prime$ consistent with $\alpha$, $f(x^\prime)=b$.
        \item A $\{b,*\}$-certificate (for $x$) if for any $x^\prime$ consistent with $\alpha$, $f(x^\prime)\in \{b, *\}$.
    \end{itemize}
Moreover, we define
    \begin{itemize}
        \item $\bxcertificate{b}{f}$ as the size of the smallest $b$-certificate of $x$ and
        \item $\bxperpcertificate{b}{f}$ as the size of the smallest $\{b,*\}$-certificate of $x$.
    \end{itemize}
\end{definition}

Essentially, a $b$-certificate certifies that we are always within the promise and output $b$, while a $\{b,*\}$-certificate certifies that \emph{if} we are in the promise, we output $b$. We define the following measures.

\begin{definition}[Certificate measure, \cite{chakraborty2022certificate}]\label{def:certificate_perp}
    Given $f:\{0,1\}^n \rightarrow \{0,1,*\}$ and $b\in \{0,1\}$, we have that:
    \begin{align*}
        \bcertificate{b}{f} &= \max_{x\in f^{-1}(b)} \bxcertificate{b}{f} \\
        \perpcertificate{f}&= \max \{\bcertificate{0}{f}, \bcertificate{1}{f} \}\\
        \bperpcertificate{b}{f} &= \max_{x\in f^{-1}(b)} \bxperpcertificate{b}{f}\\
        \certificate{f} &= \max \{\bperpcertificate{0}{f}, \bperpcertificate{1}{f} \}.
    \end{align*}
\end{definition}

While there are multiple ways to define the certificate measure for partial function, the definition of \certificate{f} above appears the most natural as it lower-bounds deterministic query complexity.\footnote{Although the notation \bcertificate{}{f} has been canonically used in the literature for total Boolean functions to represent what we denote here as $\mathrm{C'}$, we follow the notational convention established in \cite{chakraborty2022certificate} for the certificate complexity of partial functions.}
By definition, we have that $\certificate{f} \leq \perpcertificate{f}$ for all $f$ and for any total function $f$, we have, $\certificate{f} = \perpcertificate{f}$.
One should note that, $\certificate{f}$ could be significantly smaller than $\perpcertificate{f}$. For example, the function $f$ defined on the $1$-slice, which asks whether the location $i\in[n]$ such that $x_i =1$ is less than $\frac{n}{2}$, maximally separates these measures with $\certificate{f} = O(1)$ and $\perpcertificate{f} = \Omega(n)$~\cite{chakraborty2022certificate}.

\begin{definition}[Block notation]
Let $x\in\{0,1\}^n$ and $B\subseteq[n]$. We let $x^B$ denote the string obtained from $x$
by flipping the bits in $B$. Equivalently,
\[
(x^B)_i \;=\;
\begin{cases}
x_i, & i\notin B,\\
1-x_i, & i\in B.
\end{cases}
\]
We refer to such a set $B$ as a \emph{block}.
If $B$ is a block of size one (say $B=\{j\}$), then we simply write $x^j$ (instead of $x^{\{j\}}$).
\end{definition}

Following the ideas of~\cite{chakraborty2022certificate}, we define two promise versions of sensitivity and block sensitivity with the main difference being how we process undefined inputs.

\begin{definition}[Sensitivity and block sensitivity]\label{def:sensitivity_perp}
    Given $f: \{ 0,1 \}^n \rightarrow \{0,1, *\}$ and $x\in \domain(f)$, let:
    \begin{itemize}
        \item \xsensitivity{x}{f} be number of variables $i \in [n]$ such that $f(x^i) = 1 - f(x)$.
        \item \xperpsensitivity{x}{f} be number of variables $i \in [n]$ such that $f(x^i)\in\{ 1-f(x), *\}$.
        \item \xblocksensitivity{x}{f} be the number of disjoint blocks $B \subseteq [n]$ such that $f(x^B) = 1 -f(x)$.
        \item \xperpblocksensitivity{x}{f} be the number of disjoint blocks $B \subseteq [n]$ such that $f(x^B)\in\{ 1-f(x), *\}$.
    \end{itemize}
    
    Then we define,
    \begin{align*}
        \sensitivity{f} & := \max_{x\in\domain(f)} \xsensitivity{x}{f}\\
        \perpsensitivity{f} &:= \max_{x\in\domain(f)} \xperpsensitivity{x}{f}\\
        \blocksensitivity{f} &:= \max_{x\in\domain(f)} \xblocksensitivity{x}{f}\\
        \perpblocksensitivity{f} &:= \max_{x\in\domain(f)} \xperpblocksensitivity{x}{f}.
    \end{align*}
    Note that the maximum is taken only over points in the domain, that is, over all those $x$ such that $f(x) \neq *$. 
\end{definition}

By definition, for any total function $f$, $\perpsensitivity{f}=\sensitivity{f}$ and $\perpblocksensitivity{f}=\blocksensitivity{f}$. However, these equalities need not hold for partial functions. 
The dictator function on the $1$-slice maximally separates $\sensitivity{f}$ and \blocksensitivity{f} from \perpsensitivity{f} and \perpblocksensitivity{f}, respectively.

\begin{observation}\label{obs:bs_s_c_basics}
    By Definitions~\ref{def:certificate_perp} and \ref{def:sensitivity_perp},  for all $f,b,x$, we get the following relations:
    \begin{align*}
        \xperpsensitivity{x}{f} &\leq \xperpblocksensitivity{x}{f} \leq \bxcertificate{f(x)}{f} \\
        \xsensitivity{x}{f} &\leq \xblocksensitivity{x}{f} \leq \bxperpcertificate{f(x)}{f} \\
        \xsensitivity{x}{f} &\leq \xperpsensitivity{x}{f}\\
        \xblocksensitivity{x}{f} &\leq  \xperpblocksensitivity{x}{f}.
    \end{align*}
\end{observation}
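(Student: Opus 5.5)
The plan is to verify the four chains directly from Definitions~\ref{def:certificate_perp} and~\ref{def:sensitivity_perp}, fixing an arbitrary $x\in\domain(f)$ with $f(x)=b$. Every inequality is either a containment of the objects being counted or a hitting-set argument. So no earlier result is needed, only careful bookkeeping of which outputs ($1-b$ versus $*$) each notion allows.

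\textbf{Sensitivity versus block sensitivity.} The inequalities $\xsensitivity{x}{f}\le\xblocksensitivity{x}{f}$ and $\xperpsensitivity{x}{f}\le\xperpblocksensitivity{x}{f}$ hold because singleton blocks $\{i\}$ are pairwise disjoint. Hence the sensitive indices themselves form a family of sensitive disjoint blocks of the corresponding type. The last two inequalities, $\xsensitivity{x}{f}\le\xperpsensitivity{x}{f}$ and $\xblocksensitivity{x}{f}\le\xperpblocksensitivity{x}{f}$, follow from the condition $f(x^B)=1-b$ implying $f(x^B)\in\{1-b,*\}$. Thus any family of indices or disjoint blocks admissible for the unprimed measure is also admissible for the primed one.

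\textbf{Block sensitivity versus certificates.} The two certificate bounds use a hitting-set argument.
\begin{itemize}
\item For $\xperpblocksensitivity{x}{f}\le\bxcertificate{b}{f}$: let $\alpha$ be a minimum $b$-certificate for $x$, and let $B_1,\dots,B_m$ be disjoint blocks with $f(x^{B_j})\in\{1-b,*\}$. Suppose $\mathrm{Support}(\alpha)\cap B_j=\emptyset$ for some $j$. Then $x^{B_j}$ agrees with $x$ on $\mathrm{Support}(\alpha)$, so it is consistent with $\alpha$. The certificate property then forces $f(x^{B_j})=b$, a contradiction. So the support meets every $B_j$, and by disjointness $|\mathrm{Support}(\alpha)|\ge m$.
\item For $\xblocksensitivity{x}{f}\le\bxperpcertificate{b}{f}$: the same argument applies with a minimum $\{b,*\}$-certificate and disjoint blocks satisfying $f(x^{B_j})=1-b$. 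Here the contradiction is that $1-b\notin\{b,*\}$.
\end{itemize}

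The only point needing care is this pairing. The weaker, $*$-tolerant sensitivity notion must be matched with the stronger certificate $\mathrm{C}^b$, which excludes $*$, and the strict notion with the $*$-tolerant certificate $\mathrm{C}^{\{b,*\}}$. I expect no genuine obstacle beyond checking this matching.
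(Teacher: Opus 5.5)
Your proposal is correct and matches the paper's route. The paper gives no written proof; it states the observation as immediate from Definitions~\ref{def:certificate_perp} and~\ref{def:sensitivity_perp}. Your singleton-block, containment, and hitting-set arguments are exactly what that amounts to, including the right pairing: the $*$-tolerant $\mathrm{bs}'(f,x)$ against the strict $f(x)$-certificate, and the strict $\mathrm{bs}(f,x)$ against the $\{f(x),*\}$-certificate.
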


\begin{definition}[$\mathcal{A}$-sensitive blocks]
    Let $\mathcal{A} \subsetneq \{0,1,*\}$.
We say that a block $B$ is $\mathcal{A}$-sensitive (for an input $x$) if $f(x)\not \in \mathcal{A}$ and $f(x^B)\in \mathcal{A}$. In case that $\mathcal{A} = \{a\}$,  we simply say $a$-sensitive.
\end{definition}

Recall that we say a function $f^\prime$ is a completion of $f$ if $f'$ is total and $f(x)=f'(x)$ for all $x \in \dom(f)$.
\begin{definition}[Critical Block Sensitivity~\cite{huynh2012virtue}]
    Let $f: \{0,1\}^n \rightarrow \{0,1, *\}$.
    The critical block sensitivity of $f$ is
    \begin{align*}
        \criticalblocksensitivity{f} = \min_{f^\prime} \max_{x\in \domain(f)} \xblocksensitivity{x}{f^\prime}
    \end{align*}
    where $f^\prime$ is a completion of $f$.
\end{definition}

Furthermore, using  critical block sensitivity $\criticalblocksensitivity{f}$, we define critical certificate,
\begin{definition}[Critical Certificate~\cite{anshu2020querytocommunicationliftingadversarybounds}]
    Let $f: \{0,1\}^n \rightarrow \{0,1, *\}$.
    Let $\mathcal{F} = \lbrace f' \text{ such that } f' \text{ is a }\\ \text{completion of } f \text{and } \blocksensitivity{f^\prime} = \criticalblocksensitivity{f}   \rbrace$.
    The critical certificate complexity of $f$ is,
    \begin{align*}
        \criticalcertificate{f} = \min_{f^\prime \in F} \max_{x\in \domain(f)} \certificate{f^\prime,x}.
    \end{align*}
\end{definition}

By definition, for any total function $f$, we have, $\criticalblocksensitivity{f} = \blocksensitivity{f}$ and $\criticalcertificate{f} = \certificate{f}$. Lastly, the following relationship between $\blocksensitivity{f}, \criticalblocksensitivity{f}$ and $\approxdegree(f)$ holds for any Boolean function $f$.
\begin{lemma}[\cite{anshu2020querytocommunicationliftingadversarybounds}]\label{lem:bs_leq_approxdeg}
    For every partial Boolean function $f$,
    \begin{align*}
        \blocksensitivity{f} \leq \criticalblocksensitivity{f} =O( \approxdegree (f)^2).
    \end{align*}
\end{lemma}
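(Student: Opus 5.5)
The plan has two parts, one for each inequality. For $\blocksensitivity{f} \leq \criticalblocksensitivity{f}$ I will argue directly from the definitions. Fix any completion $f'$ of $f$ and any $x\in\domain(f)$. Take disjoint blocks $B_1,\dots,B_b$ with $f(x^{B_i}) = 1-f(x)$. Then $x^{B_i}\in\domain(f)$, so $f'(x^{B_i}) = f(x^{B_i}) = 1-f(x) = 1-f'(x)$. Hence the same blocks are sensitive for $f'$ at $x$, which gives $\xblocksensitivity{x}{f}\le \xblocksensitivity{x}{f'}$. Maximizing over $x\in\domain(f)$ and then minimizing over $f'$ gives the first inequality.

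For $\criticalblocksensitivity{f} = O(\approxdegree(f)^2)$ I will exhibit one specific completion whose block sensitivity on $\domain(f)$ is $O(d^2)$, where $d=\approxdegree(f)$. Let $p$ be a degree-$d$ polynomial with $|p(x)-f(x)|\le 1/3$ on $\domain(f)$ and $p(y)\in[0,1]$ for \emph{all} $y\in\{0,1\}^n$. This global boundedness is built into our definition of approximate degree, and it is exactly what the argument needs. Define the rounding completion $f'(y)=1$ if $p(y)\ge 1/2$ and $f'(y)=0$ otherwise. On $\domain(f)$ this agrees with $f$, so $f'$ is a completion.

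Now fix $x\in\domain(f)$, say with $f(x)=0$ (the other case is symmetric, replacing $p$ by $1-p$). Let $B_1,\dots,B_b$ be disjoint blocks with $f'(x^{B_i})=1$, so $p(x)\le 1/3$ and $p(x^{B_i})\ge 1/2$. Define $q:\{0,1\}^b\to\mathbb{R}$ by $q(z)=p(y_z)$, where $y_z$ is $x$ with exactly the blocks $B_i$ for $z_i=1$ flipped. Each variable of $y_z$ is either constant or equal to some $z_i$ or $1-z_i$, so $\deg q\le d$. Moreover $q\in[0,1]$ on the whole cube $\{0,1\}^b$, because $p$ is bounded everywhere, including at points $y_z$ outside $\domain(f)$. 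Symmetrizing (Minsky–Papert) gives a univariate polynomial $r$ of degree $\le d$ with the following properties:
\begin{itemize}
\item $r(k)\in[0,1]$ for every integer $0\le k\le b$;
\item $r(0)=q(0)\le 1/3$;
\item $r(1)$ is the average of the values $p(x^{B_i})$, so $r(1)\ge 1/2$.
\end{itemize}
By the mean value theorem, $|r'(\xi)|\ge 1/6$ for some $\xi\in[0,1]$. The Ehlich–Zeller/Rivlin–Cheney (Nisan–Szegedy) bound then gives $d=\Omega(\sqrt{b})$. Therefore $\xblocksensitivity{x}{f'}=O(d^2)$ for every $x\in\domain(f)$, and so $\criticalblocksensitivity{f}=O(\approxdegree(f)^2)$.

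The main obstacle is the choice of completion. The sensitive blocks of $f'$ may lead outside $\domain(f)$, where the approximation guarantee for $p$ says nothing. Rounding $p$ resolves this: every $f'$-sensitive block moves $p$ across the threshold $1/2$, which yields a constant gap of $1/6$ even outside the domain. Boundedness of $p$ on the entire cube is what keeps the univariate polynomial $r$ bounded on all of $\{0,\dots,b\}$, and this is what lets the derivative bound apply.
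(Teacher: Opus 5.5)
Your proof is correct and takes essentially the route the paper indicates for this cited lemma. The first inequality is immediate because every completion agrees with $f$ on $\mathrm{Dom}(f)$, as in the paper's later lemma $\mathrm{bs}\le\mathrm{cbs}\le\mathrm{bs}'$. The second follows by rounding an approximating polynomial that is bounded on the whole cube into a completion, turning the disjoint blocks into variables, and symmetrizing. You correctly use the integer-point form of the Ehlich--Zeller/Rivlin--Cheney derivative bound, together with the global boundedness of $p$, which is what makes the argument cover sensitive blocks that leave the domain.
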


It should be noted that the proof that $\blocksensitivity{f} = O(\approxdegree(f)^2)$ follows by constructing a polynomial which turns the blocks into variables and via symmetrization argues that the approximate degree must be large. In particular, this means that we can not replace $\blocksensitivity{f}$ with $\perpblocksensitivity{f}$ in the above lemma.

\subsection{Fourier analysis}\label{sec:fourier_analysis}

We review some basic Fourier analysis. Refer \cite{odonnell2021analysisbooleanfunctions} for a more detailed treatment of the subject. We note that $S \ni i$ refers to all sets $S\subseteq [n]$ that contain the index $i\in [n]$.

\begin{definition}[Discrete derivative]\label{def:discrete_derivative}
    Given a function $f: \{-1,1\}^{n}\rightarrow \mathbb{R}$, we define the discrete derivative of $f$ for all $i\in [n]$ as
    \begin{align*}
        D_{i}f(x)=\frac{f(x)-f(x^{i})}{2}.
    \end{align*}
\end{definition}

\begin{definition}[Influence]\label{def:influence}
    Given a function $f(x): \{-1,1\}^n \rightarrow \mathbb{R}$, we define the influence of the bit $i\in [n]$ as
    \begin{align*}
        \Inf_i[f] = \sum_{S\ni i} \hat{f}(S)^2.
    \end{align*}
\end{definition}

\begin{definition}[$p$-biased hypercube]
Let $\pi_p$ denote the distribution over $\{ 0,1 \}$ with $\pi_p(0)=p$.
With slight abuse of notation, we extend the notation to $x \in \lbrace -1, 1\rbrace^n$ : $\pi_p(x) = \prod\limits_{i \in [n]}\pi_p (x_i)$.
We will denote this distribution as $\lbrace -1,1\rbrace^n_p$.
\end{definition}

\begin{definition}\label{def:p_biased_phi} Let $p \in [0,1]$.
    Define the function $\phi: \{-1,1\} \rightarrow \mathbb{R}$ by
    \[
    \phi(x_{i})=\frac{x_{i}-\mu}{\sigma}
    \]
    where,
    \[
\mu=\mathop{\mathbb{E}}\limits_{x_{i}\sim \pi_{p}}[x_{i}]=1-2p \quad \text{ and }  \hskip .3cm \sigma=\mathop{\mathrm{stddev}}\limits_{x_{i}\sim\pi_{p}}[x_{i}]=2\sqrt{p(1-p)}.
    \]
\end{definition}

\begin{definition}[Fourier basis for the $p$-biased hypercube]\label{def:p_biased_basis}
    We define the Fourier basis for functions on the $p$-biased hypercube as $\{\phi_{S}\}_{S\subseteq[n]}$ where
    \[
    \phi_{S}(x)=\prod_{i\in S}\phi(x_{i}).
    \]
    The Fourier coefficients are defined as
    \[
    \hat{f}(S)=\mathop{\mathbb{E}}\limits_{x\sim\pi_{p}^{n}}[f(x)\phi_{S}(x)]
    \]
    and so we have the biased Fourier expansion
    \[
    f(x)=\sum_{S\subseteq[n]}\hat{f}(S)\phi_{S}(x).
    \]
\end{definition}

We also note the following property of smooth functions.

\begin{definition}[Lipschitz continuous function between $x$ and $y$]
    A function $f:\mathbb{R}^{n} \rightarrow \mathbb{R}$ is \textit{Lipschitz continuous} between two points $x$ and $y$ if $|f(x)-f(y)|\leq L|x-y|$ where $L>0$ is called the Lipschitz constant. We say that a function $f$ with Lipschitz constant $L$ is $L$-Lipschitz continuous.
\end{definition}

\begin{lemma}[Lemma 25, \cite{Gily_n_2019}]
\label{lem:chebyshev_sign}
    For all $\delta>0, \epsilon\in (0,1/2)$ there exists an efficiently computable odd polynomial $P\in \mathbb{R}[x]$ of degree $n=O\left( \frac{\log(1/\epsilon)}{\delta} \right)$, such that
    \begin{itemize}
        \item for all $x\in [-2,2]: |P(x)|\leq 1$ and,
        \item for all $x\in [-2,2]\setminus (-\delta,\delta): |P(x)-sign(x)|\leq \epsilon$.
    \end{itemize}
\end{lemma}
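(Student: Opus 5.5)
This is the low-degree polynomial approximation of the sign function from Gilyén et al., which the paper imports as a known tool. If a self-contained argument is wanted, I would build it in three steps.

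\textbf{Step 1: smooth the sign function.} Use the rescaled error function $g(x)=\mathrm{erf}(kx)$ with $k=\Theta(\sqrt{\log(1/\epsilon)}/\delta)$. The Gaussian tail bound gives $|g(x)-\mathrm{sign}(x)|\leq \epsilon/2$ whenever $|x|\geq\delta$. This fixes the target function on the whole interval $[-2,2]$.

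\textbf{Step 2: approximate the smooth function by an odd polynomial.} Write $\mathrm{erf}(kx)$ via the Jacobi--Anger type expansion of $e^{-k^2x^2}$, integrated term by term. This produces an odd Chebyshev series whose coefficients decay like modified Bessel functions $I_j(k^2/2)e^{-k^2/2}$. Truncating at degree $O(k\sqrt{\log(1/\epsilon)})$, which is $O(\log(1/\epsilon)/\delta)$, gives uniform error at most $\epsilon/2$ on $[-1,1]$. To cover $[-2,2]$, apply this after rescaling $x\mapsto x/2$ and replace $\delta$ by $\delta/2$; this only changes constants. Oddness is preserved because only odd terms appear.

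\textbf{Step 3: enforce the bound $|P|\leq 1$.} Multiply the polynomial by $(1-\epsilon)$, or equivalently shrink the target error, so that $|P|\leq 1$ everywhere on $[-2,2]$ while the approximation error stays $O(\epsilon)$. Efficient computability holds because the coefficients are explicit Bessel-function values.

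I expect the main obstacle to be the coefficient tail bound in Step 2. Concretely, one must show that $\sum_{j>J} |I_j(\beta)|e^{-\beta}\leq\epsilon$ once $J=\Omega(\sqrt{\beta\log(1/\epsilon)})$, with $\beta=k^2/2$. Combined with the extra factor $k$ coming from the integration, this gives the degree bound stated in the lemma.
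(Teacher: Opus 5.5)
The paper gives no proof of this statement; it imports Lemma~25 of Gily\'en et al. Your outline is the standard argument behind that lemma: smooth $\mathrm{sign}$ by $\mathrm{erf}(kx)$, approximate $\mathrm{erf}(kx)$ by integrating the Bessel--Chebyshev expansion of the Gaussian as in Low--Chuang, then rescale so that $|P|\le 1$. Steps~1 and~3 and the rescaling to $[-2,2]$ are fine.

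The step that fails as written is how your last paragraph handles the factor $k$. Since $\mathrm{erf}(kx)=\frac{2k}{\sqrt{\pi}}\int_0^x e^{-k^2t^2}\,dt$, a Bessel tail of size $\epsilon$ multiplied by this prefactor only gives error $O(k\epsilon)$. The obvious compensation is to require a tail of size $\epsilon/k$. Your own tail estimate then forces $J=\Omega(\sqrt{\beta\log(k/\epsilon)})$, i.e.\ degree $O\bigl(\tfrac{1}{\delta}\sqrt{\log(1/\epsilon)\,\log\tfrac{\log(1/\epsilon)}{\epsilon\delta}}\bigr)$. For constant $\epsilon$ and $\delta\to 0$ this is a factor $\sqrt{\log(1/\delta)}$ worse than the claimed $O(\log(1/\epsilon)/\delta)$. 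That is exactly the regime in which the paper applies the lemma ($\epsilon=1/3$, $\delta=d^{-c}$). The loss would be harmless for the paper's polynomial-equivalence conclusions, but it does not prove the lemma as stated.

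The missing observation is that integrating Chebyshev polynomials gains a factor $1/j$, and since $j>J\gtrsim k$ this cancels the $k$. With $\beta=k^2/2$,
\[
e^{-k^2t^2}=e^{-\beta}\Bigl(I_0(\beta)+2\sum_{j\ge 1}(-1)^j I_j(\beta)\,T_{2j}(t)\Bigr),
\qquad
\int_0^x T_{2j}(t)\,dt=\frac12\Bigl(\frac{T_{2j+1}(x)}{2j+1}-\frac{T_{2j-1}(x)}{2j-1}\Bigr).
\]
The right-hand integral has absolute value at most $\frac{1}{2j-1}$ on $[-1,1]$. So truncating after $j=J$ costs at most
\[
\frac{4k}{\sqrt{\pi}\,(2J+1)}\; e^{-\beta}\sum_{j>J} I_j(\beta).
\]
Choose $J=\Theta(k\sqrt{\log(1/\epsilon)})$ with a constant large enough that $J\ge k$; this is possible since $\epsilon<1/2$. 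Then the prefactor is $O(1)$, a Bessel tail of $O(\epsilon)$ suffices, and the degree $2J+1=O(\log(1/\epsilon)/\delta)$ follows.

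Separately, your tail claim that $J=\Omega(\sqrt{\beta\log(1/\epsilon)})$ suffices is only valid when $\beta\gtrsim\log(1/\epsilon)$. The choice that works for all $\beta$ is $J=\Theta(\sqrt{(\beta+\log(1/\epsilon))\log(1/\epsilon)})$. This does not hurt you here, because $\beta=\Theta(\log(1/\epsilon)/\delta^2)$ and only $\delta\le 2$ is nontrivial.
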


Let $\mathcal{P}_n$ denote the set of all multilinear polynomials in $n$ variables that are bounded between $[-1,1]$. 
Similarly, let $\mathcal{P}_{n,d} \subseteq \mathcal{P}_n$ denote the set of mulilinear polynomials with degree $d$.
\begin{definition}[Sparsity of a multilinear polynomial]
    The sparsity of a multilinear polynomial $p$, denoted $\spar(p)$, is defined as the number of non-zero coefficients in the exact Fourier representation of $p$. That is, $\spar(p)=|\{ S \subseteq [n] : \hat{p}(S)\neq 0\}|$. Furthermore, for an index $i\in [n]$, $\spar_i(p) = \abs{\{ S \subseteq [n] : i\in S \text{ and }\hat{p}(S)\neq 0 \}}$.
\end{definition}

\begin{definition}[Sparsity of partial functions]
    Let $f: \domain{f} \rightarrow \{-1,1\} \text{  with }  \domain{f}\subseteq\{-1,1\}^{n}$ and $\approxdegree(f)=d$.
    Then, 
    \[ \spar(f) = \min_{p \in \mathcal{P}_{n,d}} \lbrace \spar(p):\ p(x)=f(x) \text{ for all } x \in \dom(f)  \rbrace. \]
Similarly, for $\epsilon>0$, we define
 \[ \approxspar_\epsilon(f) = \min_{p \in \mathcal{P}_{n,d}} \lbrace \spar(p):\ \vert p(x) - f(x) \vert \leq \epsilon \text{ for all } x \in \dom(f).  \rbrace \]
        For the ease of notation, we denote $\approxspar_\frac{1}{3}$ as $\approxspar$.
\end{definition}

%% file: perp.tex
\section{Promise measures}\label{sec:promise_measures}

A central question in quantum query complexity is whether superpolynomial quantum speedups require a nontrivial promise. This has been formalized in several settings: symmetric and permutation-symmetric partial functions admit at most polynomial quantum advantages, and similar no-speedup results are known for functions on structured domains such as slices~\cite{Need_For_Structure,chailloux2019note,bendavid2014structurepromisesquantumspeedups}. This section is motivated by the goal of identifying a more local obstruction that does not rely on global symmetry. The key point is that for partial functions, local perturbations may leave the domain rather than reach an opposite-labeled input. By analyzing the promise-aware versions of the usual combinatorial measures we defined in~\Cref{sec:partial_function_measures} and studying when they still satisfy total-function-type relations, this section aims to identify conditions for superpolynomial quantum speedups.

\begin{figure}[ht]
    \centering
    \input{measure_relationships}
    \caption{Relationships between the measures discussed.  Measures introduced in this work are in \textcolor{BurntOrange}{orange} and the relationships proved in this work are in \textcolor{ForestGreen}{green}.}
    \label{fig:measure_relationships}
\end{figure}
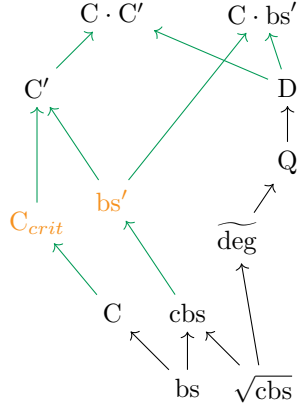

\begin{lemma}\label{lem:cf_leq_bs_sperp}
    For any partial function $f:\domain(f)\rightarrow \{0,1\}$, we have, 
    \begin{align*}
        \certificate{f} \leq \blocksensitivity{f}\perpsensitivity{f}.
    \end{align*}
\end{lemma}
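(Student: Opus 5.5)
We adapt the classical Nisan argument $\certificate{f}\le \sensitivity{f}\,\blocksensitivity{f}$ to the promise setting. Fix $x\in\domain(f)$ with $f(x)=b$. The aim is a $\{b,*\}$-certificate for $x$ of size at most $\blocksensitivity{f}\,\perpsensitivity{f}$; this bounds $\bxperpcertificate{b}{f}$, and maximizing over $x$ and $b$ gives the lemma.

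\textbf{Choosing the blocks.} Take a maximum family of disjoint blocks $B_1,\dots,B_k$ such that $f(x^{B_j})=1-b$, each chosen \emph{minimal by inclusion} among blocks $B$ with $f(x^{B})=1-b$. Minimal choices exist: shrinking a block keeps it disjoint from the others, and a maximum-size family can be rebuilt with minimal members. Here $k=\xblocksensitivity{x}{f}\le\blocksensitivity{f}$. Let $\alpha$ be the restriction of $x$ to $\bigcup_j B_j$.

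\textbf{$\alpha$ is a $\{b,*\}$-certificate.} Let $x'$ be consistent with $\alpha$ and suppose $f(x')=1-b$. The set $B$ where $x$ and $x'$ differ is disjoint from every $B_j$ and satisfies $f(x^B)=1-b$. Adding $B$ to the family would then give $k+1$ disjoint sensitive blocks, contradicting maximality. Hence $f(x')\in\{b,*\}$.

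\textbf{Bounding the size.} It remains to show $|B_j|\le\perpsensitivity{f}$ for each $j$. Set $y=x^{B_j}$, which lies in $\domain(f)$ with $f(y)=1-b$. For $i\in B_j$, we have $y^i=x^{B_j\setminus\{i\}}$. By minimality of $B_j$, this point does not have value $1-b$, so $f(y^i)\in\{b,*\}=\{1-f(y),*\}$. Thus every $i\in B_j$ is counted in $\xperpsensitivity{y}{f}$, giving $|B_j|\le \xperpsensitivity{y}{f}\le\perpsensitivity{f}$.

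The main subtlety is exactly this step. One cannot use ordinary sensitivity, because removing $i$ may take the input out of the promise. This is why minimality has to be taken with respect to reaching the value $1-b$ (not with respect to staying in the domain), and why the $*$-tolerant measure $\perpsensitivity{f}$ is the right one. Summing over the blocks gives $|\mathrm{Support}(\alpha)|\le k\,\perpsensitivity{f}\le \blocksensitivity{f}\,\perpsensitivity{f}$.
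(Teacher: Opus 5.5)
Your proof is correct and follows the same route as the paper. You take a maximum family of disjoint, inclusion-minimal sensitive blocks and show that the restriction of $x$ to their union is a $\{f(x),*\}$-certificate; you then bound each minimal block by $\perpsensitivity{f}$, since every $i\in B_j$ is $\{1-f(y),*\}$-sensitive at $y=x^{B_j}$.
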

\begin{proof}
    Consider an arbitrary input $x\in \domain(f)$.
    Without loss of generality, assume $f(x)=1$. Let $b= \xblocksensitivity{x}{f}$ and $B_1,\ldots, B_b$ be disjoint minimal blocks that achieve this block sensitivity for $x$.
    Since $B_1, \ldots, B_b$ are sensitive blocks, we get that $f(x^{B_i}) = 0$ for all $i\in[b]$.
    Consider the partial assignment $C$ which agrees with $x$ on $B_1,\ldots,B_b$ and has $*$ everywhere else.
    
    We claim that $C$ is a $\{1,*\}$-certificate.
    For contradiction, let's assume this is not the case.  
    Then, there exists some $y\in \domain(f)$ that is consistent with $C$, but $f(y)=0$.
    Let $B$ be the set of indices where $x$ and $y$ differ, that is, $x^B=y$.
    By construction, for all $i$, $B_i \in \mathrm{Support(C)}$.
    Since $C$ is consistent with both $x$ and $y$, we get that $x$ and $y$ must agree on $B_i$. 
    In particular, $B$ must be disjoint from all $B_i$. 
    Moreover $f(x^B=y)=0$, making $B$ a sensitive block of $x$. 
    Thus, $B_1 \ldots, B_b, B$ form disjoint sensitive blocks for $x$, contradicting the assumption $b=\xblocksensitivity{x}{f}$.
  Thus $C$ with $\mathrm{Support}(C) = \bigsqcup B_i$ must be a $\{1,*\}$-certificate.
  
    Next, we claim that the size of any \emph{minimal} $B_i$ can be at most $\perpsensitivity{f}$. 
    This can be seen as follows.
    Consider any $j\in B_i$.
    Then, $f(x^{B_i\setminus \{j\}}) \neq 0$; otherwise,  $B_i\setminus \{j\}$ becomes a sensitive block of $x$, contradicting the minimality of $B_i$. 
    Thus, $f(x^{B_i\setminus \{j\}}) \in \{ 1, *\}$.
    Moreover, since $B_i$ is a sensitive block of $x$, we have $f(x^B_i)=0$. 
    Hence, $j$ is a $\{1,*\}$-sensitive index for $x^{B_i}$. Since each index $j \in B_i$ is sensitive for  $x^{B_i}$, we get, $\abs{B_i} \leq \xperpsensitivity{x^{B_i}}{f} \leq \perpsensitivity{f}$. 
    
    Combining the two claims, we get that the size of certificate for $x$ is at most $\blocksensitivity{f}\perpsensitivity{f}$.
    Since we did this for arbitrary $x \in \domain(f)$, this completes the proof of the lemma.
\end{proof}

Note that in the above lemma, we showed that $\vert B_i \vert \leq s'(f)$.
For a total function $f$, it also holds that $\abs{B_i} \leq \sensitivity{f}$. This does not transfer to partial functions.  Moreover, $\{1-f(x),*\}$-sensitive blocks $B^*$ can neither be bounded by $s$ or by $s'$. For example, consider the constant $0$ function whose domain is $\{0,1\}^n\setminus \{0^n\}$.

\begin{lemma}\label{lemma_df_leq_c1_bs}
    For any partial function $f$ the following inequalities hold:
    \begin{enumerate}
        \item $ \deterministicquery{f} \leq \min\{\bperpcertificate{1}{f},\bperpcertificate{0}{f}\} \cdot  \perpblocksensitivity{f} $\label{eq:d_leq_c_perpbs}
        \item $\deterministicquery{f} \leq \min\{\bcertificate{1}{f},\bcertificate{0}{f}\}  \cdot \blocksensitivity{f}$ \label{eq:d_leq_perpc_bs}
        \item   $\deterministicquery{f} \leq \criticalcertificate{f} \cdot \criticalblocksensitivity{f}$. \label{eq:d_leq_crit_cbs}
    \end{enumerate}
 
\end{lemma}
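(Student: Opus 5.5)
The plan adapts the classical certificate-guessing argument for total functions (as in \cite{BUHRMAN200221}), which gives $\D(F)\le \mathrm{C}^1(F)\,\mathrm{bs}(F)$, to the three promise settings. For item~1, take $b$ to be the output value attaining the minimum, say $b=1$. The algorithm maintains the set of inputs in $\domain(f)$ consistent with the queries made so far. In each round it picks some still-consistent $x\in f^{-1}(1)$ together with a minimal $\{1,*\}$-certificate $\alpha_x$ of size at most $\bperpcertificate{1}{f}$, and queries every unqueried variable of $\alpha_x$. If the answers agree with $\alpha_x$, the algorithm outputs $1$. This is correct: the true input lies in $\domain(f)$ and is consistent with a $\{1,*\}$-certificate, so its value must be $1$. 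If no consistent $1$-input remains, the algorithm outputs $0$.

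The main step is to bound the number of rounds by $\perpblocksensitivity{f}$. Fix any $z\in f^{-1}(0)$ and run the algorithm on it. In round $t$ the queried certificate $\alpha_{x_t}$ must contradict $z$ somewhere. Otherwise $z$ would be consistent with a $\{1,*\}$-certificate, forcing $f(z)\in\{1,*\}$. Let $B_t$ be the set of positions in $\mathrm{Support}(\alpha_{x_t})$ where $z$ disagrees with $\alpha_{x_t}$. These positions are unqueried before round $t$, since $x_t$ agreed with $z$ on everything queried so far, and they are queried in round $t$. So the blocks $B_1,B_2,\ldots$ are pairwise disjoint.

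The obstacle is showing that each $B_t$ is a $\{1,*\}$-sensitive block for $z$. Flipping $z$ on $B_t$ gives a string $z^{B_t}$ that agrees with $\alpha_{x_t}$ on its whole support. Because $\alpha_{x_t}$ is a $\{1,*\}$-certificate, this yields $f(z^{B_t})\in\{1,*\}$, which is exactly $\{1-f(z),*\}$. Hence the number of rounds on $z$ is at most $\xperpblocksensitivity{z}{f}\le\perpblocksensitivity{f}$. If the run ends instead when no consistent $1$-input is left, the round count is bounded the same way. The total number of queries is therefore at most $\bperpcertificate{1}{f}\cdot\perpblocksensitivity{f}$. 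The same argument with $b=0$ gives the minimum. Note that using $\{b,*\}$-certificates is what forces the sensitivity notion to be the primed one, which is why item~1 pairs $\mathrm{C}^{\{b,*\}}$ with $\mathrm{bs}'$.

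For item~2, run the same algorithm using genuine $1$-certificates of size at most $\bcertificate{1}{f}$. Correctness still holds, since a $1$-certificate is in particular a $\{1,*\}$-certificate. Now $z^{B_t}$ is consistent with a true $1$-certificate, so $z^{B_t}\in\domain(f)$ and $f(z^{B_t})=1=1-f(z)$. The blocks are therefore honestly sensitive, and the number of rounds is at most $\blocksensitivity{f}$.

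For item~3, fix a completion $f'$ achieving $\criticalcertificate{f}$, whose block sensitivity on $\domain(f)$ is $\criticalblocksensitivity{f}$. Run the algorithm on $f'$, restricting both the chosen certificate inputs $x_t$ and the adversarial input $z$ to $\domain(f)$. That suffices, because we only need correctness on $\domain(f)$. Since $f'$ is total, certificates of points in $\domain(f)$ have size at most $\criticalcertificate{f}$. For $z\in\domain(f)$ the blocks $B_t$ are disjoint sensitive blocks of $f'$ at $z$, so there are at most $\max_{z\in\domain(f)}\xblocksensitivity{z}{f'}=\criticalblocksensitivity{f}$ of them. The one subtlety to check is that every certificate-holder $x_t$ can be taken in $\domain(f)$. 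This works: if no $1$-input of $\domain(f)$ remains consistent with the answers, the algorithm can safely output $0$.
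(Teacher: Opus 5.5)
\textbf{Gap: no round bound on $1$-inputs.} Your certificate-querying loop and the disjoint-block argument are the paper's, but you only bound the number of rounds when the input is a $0$-input. On a $1$-input nothing bounds it. There the ``output $0$'' exit never fires, because the true input is always a consistent $1$-input. Also, the disagreement sets $B_t$ at a $1$-input $x$ satisfy $f(x^{B_t})\in\{1,*\}$. So they are not $\{0,*\}$-sensitive blocks for $x$ and cannot be charged to $\mathrm{bs}'(f)$.

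This is not just a missing sentence: your uncapped loop can genuinely overrun, even for a total $f$. Take
$f=[a=d]\vee[a=1,\,d=0,\,G(g)=1]\vee[a=0,\,d=1,\,E(e)=1]$,
with $E=G=\mathrm{OR}_k\circ\mathrm{AND}_k$ on disjoint variables $e_{t,j}$, $g_{t,j}$ ($t,j\in[k]$). Then $\mathrm{bs}(f)=k+2$ and $\mathrm{C}^1(f)=k+1<\mathrm{C}^0(f)=k+2$. On the all-zeros input, which is a $1$-input, your rule (any consistent $1$-input, any minimal certificate of size at most $\mathrm{C}^1(f)$) allows choosing, in order:
\begin{itemize}
\item $\{a=1,d=1\}$;
\item $\{a=0,\,e_{t,1}=\cdots=e_{t,k}=1\}$ for $t=1,\dots,k$;
\item $\{d=0,\,g_{t,1}=\cdots=g_{t,k}=1\}$ for $t=1,\dots,k$.
\end{itemize}
Each is a minimal $1$-certificate consistent with the answers so far, and each is falsified. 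So all $2k^2+2$ variables get queried, which exceeds $(k+1)(k+2)$ once $k\ge 4$. The same omission affects items 2 and 3.

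\textbf{Fix.} Use the cap built into the paper's algorithm: run at most $\mathrm{bs}'(f)$ rounds, then output $0$ if no consistent $1$-input remains and $1$ otherwise. This final check costs no queries. Your own block argument justifies it. Suppose the input is a $0$-input $z$ that survives all $\mathrm{bs}'(f)$ rounds and some consistent $1$-input $w$ remains. Then $z$ contradicts $w$'s $\{1,*\}$-certificate only on unqueried positions. That gives a $(\mathrm{bs}'(f)+1)$-st disjoint $\{1,*\}$-sensitive block at $z$, a contradiction. On a $1$-input the true input stays consistent, so the output $1$ is correct.

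The paper handles this final step slightly differently. It hard-codes the common value of the consistent inputs. It gets the extra block as the set $B_{b+1}$ on which two consistent inputs with different values differ. With the cap in place, your treatment of items 2 and 3 goes through as written.
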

\begin{proof}
    All statements follow a similar argument to the result $\deterministicquery{f} \leq \certificate{f} \blocksensitivity{f}$ for total functions~\cite{beals2001quantum}. We show $\deterministicquery{f}\leq \bperpcertificate{1}{f}\perpblocksensitivity{f}$ and  $\deterministicquery{f}\leq \bperpcertificate{0}{f}\perpblocksensitivity{f}$ follows by symmetry, proving \Cref{eq:d_leq_c_perpbs}.
    The proof holds due to the following algorithm. 
    
    \begin{algorithm}\caption{}
        Let $p$ be the set of queried indices so far, initialized as $p=\emptyset$. Repeat the following for $b = \perpblocksensitivity{f}$ rounds:
    \begin{enumerate}
        \item Let $i\in [b]$ be the current round. Let $c_i$ be some  $\{1,*\}$-certificate for an input $x\in \domain(f)$ such that $f(x)=1$ which agrees with $p$.
        \item If no such $c_i$ exists, output $0$.
        \item Query the indices in $c_i$. If they agree with $c_i$, output $1$. Otherwise, add all the queried indices and their results to $p$.
    \end{enumerate}
    Suppose the algorithm did not terminate. As will be shown below, for all $x,y\in \domain(f)$ which agree with $p$, $f(x)=f(y)$. Therefore, based on $p$, we may output $f(x)$ (this may be hard-coded into the algorithm).
    \end{algorithm}

    Let us explain the correctness of the algorithm. If we terminate during the $b$ rounds, we output the correct result. To show that after $b$ rounds all inputs which agree with $p$ have the same output, assume the statement is false and there are two inputs $x,y\in \domain(f)$ which agree with $p$ and $f(x)\neq f(y)$. Without loss of generality, assume $f(y)=0$. Let $B_{b+1}$ be the block such that $x^{B_{b+1}}=y$. For each $c_i$, we may define a corresponding block $B_i$ as the set of indices on which $y$ and $c_i$ disagree. Each $B_i$ is non-empty, disjoint from all other $B_j$, and $f(y^{B_i}) \in \{1, *\}$, meaning that $B_1,\dots B_{b+1}$ form a valid set of blocks. However, this would contradict the assumption that the block sensitivity $\perpblocksensitivity{f}=b$. Thus, the algorithm must output the correct value. Items~\ref{eq:d_leq_perpc_bs} and~\ref{eq:d_leq_crit_cbs} follow analogously.
\end{proof}

Alternatively, we may bound $\deterministicquery{f}$ using $\certificate{f}$ and $\perpcertificate{f}$.

\begin{lemma}\label{lem:df_c_cperp}
    For any partial function $f:\domain(f)\rightarrow \{0,1\}$,
    \begin{align*}
        \deterministicquery{f} \leq \min\{\bcertificate{1}{f} \bperpcertificate{0}{f}, \bcertificate{0}{f} \bperpcertificate{1}{f}\}.
    \end{align*}
\end{lemma}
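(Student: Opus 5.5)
We prove $\deterministicquery{f} \leq \bcertificate{1}{f}\,\bperpcertificate{0}{f}$; the other term follows by swapping the roles of $0$ and $1$. The plan adapts the classical argument that $\D(f)\le \mathrm{C}^0(f)\mathrm{C}^1(f)$ for total functions: a $0$-certificate and a $1$-certificate must overlap in some position with conflicting values. The task is to check which certificate needs the stronger ($b$) notion and which can be the weaker ($\{b,*\}$) notion for this overlap argument to go through on a partial domain.

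\textbf{Key claim.} Let $c$ be a $1$-certificate (every input consistent with $c$ lies in $\domain(f)$ and has value $1$). Let $c'$ be a $\{0,*\}$-certificate (every input consistent with $c'$ has value in $\{0,*\}$). Then $c$ and $c'$ conflict on some common index. Suppose instead they were consistent. Then there is an $x\in\{0,1\}^n$ consistent with both. Since $x$ is consistent with $c$, we get $x\in\domain(f)$ and $f(x)=1$. Since $x$ is consistent with $c'$, we get $f(x)\in\{0,*\}$. This is a contradiction. Note that this is exactly where pairing the strong certificate of one color with the weak certificate of the other matters: two weak certificates could be simultaneously consistent with an input outside the domain.

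\textbf{Algorithm.} Maintain the partial assignment $p$ of queried values, and run for $\bcertificate{1}{f}$ rounds. In each round:
\begin{itemize}
\item If no $x\in\domain(f)$ with $f(x)=0$ is consistent with $p$, output $1$.
\item Otherwise, pick such an $x$ and a minimal $\{0,*\}$-certificate $c'$ for it, of size at most $\bperpcertificate{0}{f}$, and query all its indices.
\item If the answers agree with $c'$, output $0$. This is correct because the true input lies in the domain and is consistent with $c'$, so its value is $0$.
\end{itemize}

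\textbf{Progress and correctness.} Whenever the algorithm continues past a round, every $1$-input $y\in\domain(f)$ that remains consistent with $p$ has a minimal $1$-certificate $c_y$ of size at most $\bcertificate{1}{f}$. By the key claim, $c_y$ conflicts with the $c'$ just queried on an index that is now in $p$. That index is fixed by $p$ to the value of $y$ (since $y$ is consistent with $p$), so it is an index of $c_y$ that has now been queried. Each round therefore fixes at least one new variable of $c_y$ for every surviving $y$.

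After $\bcertificate{1}{f}$ rounds, any surviving $1$-input $y$ has all of $c_y$ fixed in $p$. Hence every input consistent with $p$ is consistent with $c_y$, and so lies in the domain with value $1$. At that point no $0$-input is consistent with $p$, and the algorithm correctly outputs $1$. If no $1$-input survives, the remaining consistent domain inputs are all $0$ and the output is likewise determined. The total number of queries is at most $\bcertificate{1}{f}\cdot\bperpcertificate{0}{f}$.

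The main obstacle is choosing the correct pairing of certificate types, which the key claim settles. The only subtle point is that the progress bookkeeping uses the $1$-certificate of each surviving $y$, which is why the strong $\bcertificate{1}{f}$ appears as the round count.
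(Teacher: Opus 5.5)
Your proof is correct, and it runs the classical $\D\le \mathrm{C}^0\mathrm{C}^1$ argument in the opposite direction from the paper.

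\textbf{The two routes.} The paper runs $\mathrm{C}^{\{0,*\}}(f)$ rounds. Each round queries a $1$-certificate (of size at most $\mathrm{C}^1(f)$) consistent with the answers so far. Progress is measured against one fixed object: the smallest $\{0,*\}$-certificate $c_0$ of the actual input, which loses an unqueried index every round. You instead run $\mathrm{C}^1(f)$ rounds, each querying a $\{0,*\}$-certificate of a surviving $0$-input, and measure progress against the $1$-certificates $c_y$ of all surviving $1$-inputs. Both proofs rest on the same key claim: a $1$-certificate and a $\{0,*\}$-certificate must conflict.

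\textbf{What each buys.} The paper's bookkeeping is lighter because it tracks only the true input. That is also why its closing rule (output $0$ if the loop runs out) is justified only when $f(x)=0$. As a check, take the total function $f=(x_1\vee x_2)\wedge(x_3\vee x_4)$, the input $x=(0,1,1,0)$, and an ordering that starts with $\{x_2=1,x_4=1\}$ and then $\{x_1=1,x_3=1\}$. Both rounds fail, so the paper's algorithm outputs $0$ although $f(x)=1$. The repair is to output $0$ only if $p$ is then a $\{0,*\}$-certificate. Your version avoids this issue: tracking every surviving input of the opposite value shows directly that $p$ determines $f$ after the last round.

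\textbf{One step to state explicitly.} You should say why the conflict index between $c_y$ and $c'$ is \emph{new}. The reason is that $c'$ was chosen consistent with the previous $p$, and $c_y$ is consistent with it as well (through $y$). So the two cannot disagree on an already-queried position.
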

\begin{proof}
Let us prove $\deterministicquery{f} \leq \bcertificate{1}{f} \bperpcertificate{0}{f}$, the rest of the statement follows by symmetry. Fix some $x\in \domain(f)$. Let us describe a deterministic algorithm which outputs $f(x)$ using at most $\bcertificate{1}{f} \bperpcertificate{0}{f}$ queries.

\begin{algorithm}\caption{}
    Let $p$ be a partial assignment which is initially empty which records all bits of $x$ queried so far. Additionally, assume there exists a fixed ordering of all $1$-certificates of $f$. Repeat the following for $\bperpcertificate{0}{f}$ rounds:
\begin{enumerate}
    \item If $p$ contains a $\{0,*\}$-certificate, output $0$.
    \item Otherwise, choose the first $1$-certificate $c$ consistent with $p$, query its indices and add them to $p$.
    \item If the updated $p$ agrees with $c$, output $1$.
\end{enumerate}
If the loop does not terminate early, output $0$.
\end{algorithm}

Let us analyze the algorithm. Since the size of the certificate $c$ queried each round is bounded by $\bcertificate{1}{f}$, the query complexity is bounded by $\bcertificate{1}{f} \bperpcertificate{0}{f}$.  Let us argue for correctness. If we terminate in Step 1, we are correct as since $x\in \domain(f)$, a $\{0,*\}$-certificate implies that $f(x)=0$. On the other hand, if we continue to Step 2, we are guaranteed to find some $1$-certificate $c$ which agrees with $p$ as otherwise, we could create a $\{0,*\}$-certificate. Of course, if $p$ agrees with $c$, $f(x)=1$ and we output the correct value.

Lastly, let us show that if the loop terminates without an output, then $f(x)=0$. Let $c_0$ be the minimum $\{0,*\}$-certificate for $x$, meaning that $\abs{c_0}\leq \bperpcertificate{0}{f}$. Additionally, notice that for each $c$ in Step 2, there must exist at least $1$ certificate on which $c$ and $c_0$ disagree as both cannot be true at the same time. Since $c_0$ was not triggered in Step 1 and $c$ can never be triggered in step 2, we must have queried this index. Therefore, at each round of the algorithm, we must query at least one index in $c_0$. Therefore, after $\bperpcertificate{0}{f}$, we must have queried all of its indices and thus are in a $\{0,*\}$-case.
\end{proof}

Next, we find that \criticalblocksensitivity{f} sits between \blocksensitivity{f} and \perpblocksensitivity{f}.

\begin{lemma}
   $
        \blocksensitivity{f} \leq \criticalblocksensitivity{f} \leq \perpblocksensitivity{f}$.

\end{lemma}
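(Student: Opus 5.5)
Both inequalities should follow from the definitions. For the first, $\blocksensitivity{f} \leq \criticalblocksensitivity{f}$, I would fix an arbitrary completion $f'$ of $f$ and an arbitrary $x \in \domain(f)$, and show $\xblocksensitivity{x}{f} \leq \xblocksensitivity{x}{f'}$. Take disjoint blocks $B_1,\dots,B_b$ with $f(x^{B_i}) = 1-f(x)$. Each $x^{B_i}$ is then in $\domain(f)$, and $f'$ agrees with $f$ on $\domain(f)$. Hence $f'(x^{B_i}) = 1 - f(x) = 1 - f'(x)$, so the same blocks are sensitive for $f'$ at $x$. Maximizing over $x\in\domain(f)$ gives $\blocksensitivity{f} \leq \max_{x\in\domain(f)} \xblocksensitivity{x}{f'}$ for every completion $f'$. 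Minimizing over $f'$ yields the first inequality.

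For the second inequality, $\criticalblocksensitivity{f} \leq \perpblocksensitivity{f}$, it suffices to exhibit one completion $f'$ with $\xblocksensitivity{x}{f'} \leq \xperpblocksensitivity{x}{f}$ for all $x\in\domain(f)$. The cleanest choice is to use the reverse containment for an arbitrary completion. Let $f'$ be any completion and $x\in\domain(f)$, and let $B$ be a block with $f'(x^B) = 1 - f'(x) = 1-f(x)$. There are two cases. If $x^B \in \domain(f)$, then $f(x^B) = f'(x^B) = 1-f(x)$. Otherwise $f(x^B) = *$. In both cases $f(x^B) \in \{1-f(x), *\}$, so any family of disjoint sensitive blocks for $f'$ at $x$ is also a family of disjoint $\{1-f(x),*\}$-sensitive blocks for $f$ at $x$. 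This gives $\xblocksensitivity{x}{f'} \leq \xperpblocksensitivity{x}{f}$ for every completion, and therefore the minimum over completions of the maximum over $x\in\domain(f)$ is at most $\perpblocksensitivity{f}$. Any completion works here (for instance the one setting every undefined input to $0$).

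There is no real obstacle. The only point needing care is that both $\criticalblocksensitivity{f}$ and $\perpblocksensitivity{f}$ take the maximum only over $x\in\domain(f)$, not over all of $\{0,1\}^n$. This is what allows us to use $f'(x)=f(x)$ at the base point in both directions.
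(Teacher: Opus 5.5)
Your proof is correct and follows the same argument as the paper. Sensitive blocks of $f$ at $x\in\mathrm{Dom}(f)$ remain sensitive for every completion, and conversely every sensitive block of a completion at such an $x$ is $\{1-f(x),*\}$-sensitive for $f$. The only difference is cosmetic: the paper applies the second step to an optimal completion, while you correctly note that it holds for every completion, which makes the minimization over completions immediate.
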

\begin{proof}
    By definition, the completion of $f$ must have more sensitive blocks than $f$ itself, so $\blocksensitivity{f} \leq \criticalblocksensitivity{f}$. On the other hand, consider the completion $f^\prime$ of $f$ which achieves block sensitivity $c = \criticalblocksensitivity{f}$ and let $B_1,\dots B_c$ be its blocks. Then for that input $x$, each block either points to an input in the domain or not. Regardless, each block may be used as a block for \perpblocksensitivity{f}. Thus, $\criticalblocksensitivity{f} \leq \perpblocksensitivity{f}$.
\end{proof}

Lastly, we prove a statement which shows how a polynomial relationship between various measures discussed in this section implies the impossibility of a quantum speedup.

\begin{theorem}\label{thm:measures_no_speedup}
    Let $f$ be a partial function such that \emph{at least} one of the following conditions is satisfied:
    \begin{enumerate}
        \item\label{itm:blocksensitivity_poly} $\polyequal{\criticalblocksensitivity{f}}{\perpblocksensitivity{f}}$
        \item \label{itm:cbs_ccrit} $\polyequal{\criticalblocksensitivity{f}}{\criticalcertificate{f}}$
        \item \label{itm:perp_cert_bs} $\min\{\bcertificate{1}{f},\bcertificate{0}{f}\} \polyleq \criticalblocksensitivity{f}$.
        \item \label{itm:perpcertificate_bs_perps_poly} $\polyequal{\certificate{f}}{\perpcertificate{f}}$ and $\polyequal{\blocksensitivity{f}}{\perpsensitivity{f}}$.
    \end{enumerate}
    Then, $\polyequal{\deterministicquery{f}}{\Q(f)}$.
\end{theorem}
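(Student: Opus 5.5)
Since $\D(f)\ge\Q(f)$ always holds, the plan is to prove the other direction, $\D(f)\le\poly(\Q(f))$, in each of the four cases. Every case will reduce to bounding $\D(f)$ by a product of measures that are each polynomially bounded by $\criticalblocksensitivity{f}$ or by $\blocksensitivity{f}$. Both of these are $O(\approxdegree(f)^2)$ by Lemma~\ref{lem:bs_leq_approxdeg}, and $\approxdegree(f)=O(\Q(f))$ by Lemma~\ref{lem:degree_lower_bound}. So it suffices in each case to show $\D(f)\le\poly(\criticalblocksensitivity{f})$.

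\emph{Case~\ref{itm:blocksensitivity_poly}.} Use item~\ref{eq:d_leq_c_perpbs} of Lemma~\ref{lemma_df_leq_c1_bs}:
\[
\D(f)\le \certificate{f}\,\perpblocksensitivity{f}.
\]
We still need to bound $\certificate{f}$. Lemma~\ref{lem:cf_leq_bs_sperp} gives $\certificate{f}\le\blocksensitivity{f}\perpsensitivity{f}$. By Observation~\ref{obs:bs_s_c_basics}, $\perpsensitivity{f}\le\perpblocksensitivity{f}$, and the hypothesis gives $\perpblocksensitivity{f}\le\poly(\criticalblocksensitivity{f})$. Combining these,
\[
\D(f)\le \blocksensitivity{f}\,\perpblocksensitivity{f}^2\le\poly(\criticalblocksensitivity{f}).
\]

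\emph{Case~\ref{itm:cbs_ccrit}.} Item~\ref{eq:d_leq_crit_cbs} of Lemma~\ref{lemma_df_leq_c1_bs} gives $\D(f)\le\criticalcertificate{f}\,\criticalblocksensitivity{f}$. The hypothesis turns this into $\poly(\criticalblocksensitivity{f})$ directly.

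\emph{Case~\ref{itm:perp_cert_bs}.} Item~\ref{eq:d_leq_perpc_bs} gives $\D(f)\le\min\{\bcertificate{1}{f},\bcertificate{0}{f}\}\,\blocksensitivity{f}$. The minimum is $\le\poly(\criticalblocksensitivity{f})$ by hypothesis, and $\blocksensitivity{f}\le\criticalblocksensitivity{f}$.

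\emph{Case~\ref{itm:perpcertificate_bs_perps_poly}.} Lemma~\ref{lem:df_c_cperp} gives
\[
\D(f)\le\bcertificate{1}{f}\,\bperpcertificate{0}{f}\le\perpcertificate{f}\,\certificate{f}.
\]
The first hypothesis gives $\perpcertificate{f}\le\poly(\certificate{f})$. Lemma~\ref{lem:cf_leq_bs_sperp} gives $\certificate{f}\le\blocksensitivity{f}\perpsensitivity{f}$, and the second hypothesis bounds this by $\poly(\blocksensitivity{f})$. Hence $\D(f)\le\poly(\blocksensitivity{f})$.

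The main obstacle is bookkeeping rather than a single hard step. In each case the final bound must route through $\blocksensitivity{f}$ or $\criticalblocksensitivity{f}$ and never through $\perpblocksensitivity{f}$ on its own, because the remark after Lemma~\ref{lem:bs_leq_approxdeg} says $\perpblocksensitivity{f}$ need not be bounded by approximate degree. In Case~\ref{itm:blocksensitivity_poly} this is exactly where the hypothesis is used, to convert the $\perpblocksensitivity{f}$ and $\perpsensitivity{f}$ factors into $\criticalblocksensitivity{f}$. I also need to read the relation $\polyeq$ in the hypotheses in the useful direction: in Cases~\ref{itm:blocksensitivity_poly} and~\ref{itm:cbs_ccrit} the needed direction is $\perpblocksensitivity{f},\criticalcertificate{f}\le\poly(\criticalblocksensitivity{f})$, while the other direction, $\criticalblocksensitivity{f}\le\perpblocksensitivity{f}$, holds by the lemma just before the theorem.
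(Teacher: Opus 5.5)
Your proof is correct and follows the paper's argument essentially step for step: in each case you invoke the same item of Lemma~\ref{lemma_df_leq_c1_bs} (or Lemma~\ref{lem:df_c_cperp} for the fourth case), bound the certificate factor via Lemma~\ref{lem:cf_leq_bs_sperp}, and finish with Lemmas~\ref{lem:bs_leq_approxdeg} and~\ref{lem:degree_lower_bound}. The only difference is cosmetic: in the first case you stop at $\mathrm{bs}(f)\,\mathrm{bs}'(f)^2$ before applying the hypothesis, where the paper first collapses this to $\mathrm{bs}'(f)^3$; this changes nothing.
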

\begin{proof}
    Let us begin with Item~\ref{itm:blocksensitivity_poly}. Let $a$ be some constant such that $\perpblocksensitivity{f} \leq \criticalblocksensitivity{f}^a$. Then,
    \begin{flalign*}
        &&\deterministicquery{f} &\leq \certificate{f} \perpblocksensitivity{f} && \mbox{(from~\Cref{eq:d_leq_c_perpbs} in Lemma~\ref{lemma_df_leq_c1_bs})}\\
        &&&\leq\blocksensitivity{f} \perpsensitivity{f} \perpblocksensitivity{f} &&\mbox{(from Lemma~\ref{lem:cf_leq_bs_sperp})}\\
        && &\leq \perpblocksensitivity{f}^3 &&\mbox{(from~\Cref{obs:bs_s_c_basics})}\\
        &&&\leq \criticalblocksensitivity{f}^{3a}\\
        &&&= O(\Q(f)^{6a}). &&\mbox{(from~\Cref{lem:bs_leq_approxdeg})}
    \end{flalign*}
    This completes the proof that Item~\ref{itm:blocksensitivity_poly} is sufficient to ensure that $\polyequal{\deterministicquery{f}}{\Q(f)}$.
    The proof that
    Items~\ref{itm:cbs_ccrit} and~\ref{itm:perp_cert_bs} are also sufficient follows similarly (due to Items~\ref{eq:d_leq_crit_cbs} and~\ref{eq:d_leq_perpc_bs} in Lemma~\ref{lemma_df_leq_c1_bs} and~\Cref{lem:bs_leq_approxdeg}, respectively).

    Item~\ref{itm:perpcertificate_bs_perps_poly}. Let $b$ be the smallest constant such that $\blocksensitivity{f} \leq \perpsensitivity{f}^b$ and $\perpsensitivity{f} \leq \blocksensitivity{f}^b$, while $c$ is the smallest constant such that $\perpcertificate{f} \leq \certificate{f}^c$. Then,
    \begin{flalign*}
        &&\D(f) &\leq \certificate{f} \perpcertificate{f} &\mbox{(from Lemma~\ref{lem:df_c_cperp})}\\
        &&&\leq \certificate{f}^{c+1} \\
        &&&\leq \blocksensitivity{f}^{(b+1)(c+1)} &\mbox{(from Lemma~\ref{lem:cf_leq_bs_sperp})}\\
        &&&= O(\Q(f)^{8bc}). &\mbox{(from~\Cref{lem:bs_leq_approxdeg})} &\qedhere
    \end{flalign*}
\end{proof}

Theorem~\ref{thm:measures_no_speedup} implies that several structured domains cannot attain a speedup. For example, Item~\ref{itm:blocksensitivity_poly} implies that $\polyequal{\blocksensitivity{f}}{\perpblocksensitivity{f}}$ leads to no superpolynomial quantum speedups.
It also implies that for a domain to attain speedup, there must exist at least one input $x$ with many disjoint blocks $B$ such that $x^B\not\in \domain(f)$. Alternatively, if the graph induced by adjacent inputs in the domain is \say{well-connected} such that the degree of each node is closer to $n$ than the block sensitivity of the function, it cannot attain speedup.

Item~\ref{itm:cbs_ccrit} of Theorem~\ref{thm:measures_no_speedup} is consistent with the discussion in \Cref{sec:completion}: if a partial function admits a completion that does not increase a relevant parameter, then the corresponding deterministic and quantum query complexities are polynomially related.

As a simple example, fix some input $x^* \in \{0,1\}^n$ and consider a partial Boolean function $f$ with domain
$
\domain(f) = \{x : \forall i \in [n],\ x_i \leq x_i^* \}.
$
On this domain, define a completion $g$ by
\[
g(x) = f(x \land x^*), \text{ where $(x \land x^*)_i = x_i \land x_i^*$.}
\]
 This completion preserves the relevant measures: every minimal sensitive block and every certificate depends only on coordinates $i$ such that $x_i^* = 1$. In particular,
\[
\max_{x \in \domain(f)} \xblocksensitivity{x}{g}
= \criticalblocksensitivity{f}
= \blocksensitivity{f},
\qquad\text{and}\qquad
\criticalcertificate{f} = \certificate{f}.
\]

This argument extends to domains of the form $\bigcup_i d(x_i^*)$, provided that
\[
\max_{i,j} \wt(x_i^* - x_j^*) = O(1),
\]
as in this case, the coordinates on which the various $x_i^*$ differ can be queried to identify the subdomain containing the input.

Interestingly, the conclusion of Theorem~\ref{thm:measures_no_speedup} does not follow from $\polyequal{\certificate{f}}{\perpcertificate{f}}$, meaning the second condition in Item~\ref{itm:perpcertificate_bs_perps_poly} is necessary. This is due to the Deutsch-Jozsa algorithm which solves the problem of whether some input $x\in \{0,1\}^n$ is constant (i.e., $\abs{x}=0$ or $n$) or perfectly balanced ($\abs{x}=n/2$)~\cite{deutsch_jozsa}. In this case, all three of $\deterministicquery{f}, \certificate{f}$ and $\perpcertificate{f}$ are $O(n)$, while $\Q(f)=1$.

%% file: measure_relationships.tex
\begin{tikzpicture}
    \node (bs) at (0,0) {$\mathrm{bs}$};
    \node (cbs) at (0,1) {$\mathrm{cbs}$};
    \node (cbssquare) at (1,0) {$\sqrt{\mathrm{cbs}}$};
    \node (bsperp) at (-1,2.5) {$\color{BurntOrange}\mathrm{bs}^\prime$};
    \node (cperp) at (-2,4) {$\mathrm{C}^\prime$};
    \node (c) at (-1,1) {$\mathrm{C}$};
    \node (deg) at (0.66,2) {$\widetilde{\mathrm{deg}}$};
    \node (q) at (1.33,3) {$\Q$};
    \node (d) at (1.33,4) {$\D$};
    \node (ccrit) at (-2,2.2) {$\color{BurntOrange}\mathrm{C}_{crit}$};
    \node (csquare) at (-1,5) {$\mathrm{C}\cdot \mathrm{C}^\prime$};
    \node (bsperp_c) at (1,5) {$\mathrm{C}\cdot \mathrm{bs}^\prime$};

     \foreach \source/\target in {bs/cbs, bs/c, cbssquare/deg, deg/q, q/d, cbssquare/cbs}
    \draw[->] (\source) -- (\target);
    \foreach \source/\target in {cbs/bsperp, bsperp/cperp, c/ccrit, ccrit/cperp, cperp/csquare, d/csquare, bsperp/bsperp_c, d/bsperp_c}
    \draw[->, ForestGreen] (\source) -- (\target);
\end{tikzpicture}

%% file: other_results.tex
\section{Functions with $S_n$ symmetry}\label{sec:other_section}
In this section, we look at functions that have some invariance under the symmetric group $S_n$. 
In the first part, we consider functions having the highest order of $S_n$ symmetry (Definition~\ref{def:symmetric_function}).
In the second part, we will consider functions defined on a single orbit of $S_n$, also known as the Boolean slice.
However, in this case, the function may be arbitrarily defined on the slice.
\subsection{Symmetric functions}
In this part, we will study functions that are symmetric as defined below.
\begin{definition}[Symmetric function] \label{def:symmetric_function}
Let $\dom(f)  \subseteq \lbrace 0,1 \rbrace^n$.
We say that a function $f: \domain(f) \rightarrow \lbrace 0,1\rbrace$ is symmetric under $S_n$ (or simply symmetric) if for all $\sigma \in S_n$ and for all $x \in \lbrace 0,1 \rbrace^n$:
\begin{enumerate}
    \item $x \in \dom(f) $ if and only if $\sigma(x) \in \dom(f) $.
    \item $f(x) = f(\sigma(x))$.
\end{enumerate}
\end{definition}

Note that for a symmetric function $f$, its evaluation at a point $x$ depends only on $\wt(x)$.

Our main focus will be on characterizing various measures such as $\D$ and $\Q$ for partial symmetric functions.
This problem has already been considered for $\R$ and $\Q$ in~\cite{chailloux2019note}, proving that there is no superpolynomial speedup between the two quantities. Furthermore, ~\cite{podder2025fine} analyzed these quantities and proved that $\Q$ can be written as a certain maximization problem, whereas our expression is free from such optimization and has a much simpler proof technique.
Specifically, we give a characterization of various measures (up-to a polynomial factor) using $\gap$, a parameter for symmetric functions we define below. Additionally, we also consider the relationship between $\approxdegree$ and $\D$, of which the former has been characterized for total Boolean functions by~\cite{paturi1992degree}, and the latter is trivial; for total symmetric functions as it is $\Theta(n)$ for non-constant functions~\cite{buhrman2002complexity}. To the best of our knowledge, no earlier works attempt to analyze these measures for partial symmetric functions. Furthermore, our results exactly characterize when symmetric functions attain a superpolynomial separation between $\Q$ and $\D$ (or $\approxdegree$ and $\D$).
As an added result, our proof also shows that $\approxdegree$ and $\Q$ are polynomially related for every such function, hence the polynomial method is tight for partial symmetric functions. Let us formally define $\gap$.

\begin{definition}[Gap of a symmetric function] For any partial symmetric function $f:\domain(f)\rightarrow \{0,1\}$, define $\gap(f)$ as
    \[ \gap(f) = \min_{x,y: f(x) \neq f(y)} d_H(x,y) =\footnote{This equality need not be true for an arbitrary partial function, but holds for symmetric functions.}  \min_{x,y: f(x) \neq f(y)} \vert \wt(x)-\wt(y) \vert. \]
\end{definition}

First, consider the following standard fact.

\begin{fact}[\cite{Shalev}] \label{fact: symmetrization_2}
Let $p : \mathbb{R} \to \mathbb{R}$ be a real polynomial. Let $n \in \mathbb{N}$ be a positive integer, and suppose that
$p(x) \in [0,1] \text{ for all } x \in [0,n].$
Then
\[
\deg(p) \ge \sqrt{\frac{n c}{1 + c}},
\text{\hspace{1cm} where }
c = \max_{x \in [0,n]} \left| p'(x) \right|.
\]
\end{fact}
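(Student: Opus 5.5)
The plan is to derive the bound from the classical Markov brothers' inequality: a real polynomial $q$ of degree $d$ with $|q(t)|\le 1$ on $[-1,1]$ satisfies $|q'(t)|\le d^2$ on $[-1,1]$. Since the bound has the Ehlich--Zeller / Nisan--Szegedy form, I will prove the stronger version in which $p$ is only assumed to lie in $[0,1]$ at the integer points $0,1,\dots,n$. This is the version the symmetrization arguments below actually use, and the case where $p\in[0,1]$ on all of $[0,n]$ is contained in it.

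\textbf{Step 1: bound $p$ on the whole interval.} Let $c=\max_{x\in[0,n]}|p'(x)|$. Every $x\in[0,n]$ lies within distance $1/2$ of some integer $k\in\{0,\dots,n\}$. By the mean value theorem, $|p(x)-p(k)|\le c/2$. Since $p(k)\in[0,1]$, this gives
\[ p(x)\in\left[-\tfrac{c}{2},\,1+\tfrac{c}{2}\right] \quad\text{for all } x\in[0,n]. \]
So on the real interval, $p$ ranges over an interval of length $1+c$ centered at $1/2$. (If $p$ is already bounded in $[0,1]$ on the whole interval, the range has length $1$ and the argument only gets stronger.)

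\textbf{Step 2: rescale and apply Markov.} Define
\[ q(t)=\frac{2\,p\!\left(\tfrac{n(t+1)}{2}\right)-1}{1+c},\qquad t\in[-1,1]. \]
By Step 1, $|q(t)|\le 1$ on $[-1,1]$, and $\deg q=\deg p=d$. By the chain rule,
\[ q'(t)=\frac{n}{1+c}\,p'\!\left(\tfrac{n(t+1)}{2}\right). \]
Choosing $t$ so that the argument of $p'$ is a point where $|p'|=c$ gives $\max_{t\in[-1,1]}|q'(t)|=\frac{nc}{1+c}$. Markov's inequality then yields $\frac{nc}{1+c}\le d^2$, which rearranges to $d\ge\sqrt{nc/(1+c)}$.

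\textbf{Main obstacle.} The only delicate point is Step 1. The hypothesis controls $p$ only on a discrete set, or only inside $[0,1]$, so we need a uniform bound on all of $[0,n]$ before Markov's inequality applies. The mean-value argument handles this using the same quantity $c$ that appears in the conclusion. This is exactly why the bound has the factor $1+c$ in the denominator rather than the cleaner $d\ge\sqrt{nc}$, which one gets directly when $p\in[0,1]$ on the entire interval. The remaining steps are routine rescaling.
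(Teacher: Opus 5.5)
Your proof is correct. The mean-value step gives $p\in[-c/2,\,1+c/2]$ on $[0,n]$. The affine rescaling preserves the degree and gives $\max_{[-1,1]}|q'| = nc/(1+c)$, and Markov's inequality then yields $d^2\ge nc/(1+c)$.

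The paper gives no proof of this Fact; it only cites Ben-David's notes. Your argument is the standard Markov-inequality proof of this bound (Ehlich--Zeller, Rivlin--Cheney, Nisan--Szegedy).

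You actually prove the integer-point version. That is the version needed when the Fact is applied to the symmetrized polynomial $p_0$ in Lemma~\ref{lem:Q_for_sym}, because $p_0$ is only guaranteed to lie in $[0,1]$ at $t\in\{0,\dots,n\}$, not on all of $[0,n]$.
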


We will use the standard symmetrization of a polynomial.
Given a polynomial $p:\{0,1\}^n \to \mathbb{R}$, define its symmetrization $p_0:\{0,1,\dots,n\}\to \mathbb{R}$ by
\[
p_0(t) \;=\; \mathbb{E}_{x: \wt(x)=t}\,[\,p(x)\,].
\]
That is, $p_0(t)$ is the average value of $p(x)$ over all inputs of Hamming weight $t$.
In this case, we say that $p_0$ is the symmetrized version of $p$.

Now, let us characterize $\approxdegree, \Q$ and $\R$ using $\gap$.

\begin{lemma} \label{lem:Q_for_sym}
    Let $f$ be a symmetric function. 
     Then, 
     
     $\approxdegree(f) \polyeq \Q(f) \polyeq \R(f) \polyeq \frac{n}{\gap(f)}$.
\end{lemma}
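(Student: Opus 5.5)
We will prove the chain by establishing an upper bound $\R(f) = O(\poly(n/\gap(f)))$ and a lower bound $\approxdegree(f) = \Omega(\poly(n/\gap(f)))$. Together with \Cref{lem:degree_lower_bound}, which gives $\R(f)\geq \Q(f) = \Omega(\approxdegree(f))$, these sandwich all three quantities between polynomials in $n/\gap(f)$. Write $g=\gap(f)$. We may assume $f$ is non-constant on its domain, since otherwise every measure is $0$ and the statement is degenerate. Since $f$ is symmetric, it is determined by a partial function $h$ on the weight set $W=\{\wt(x): x\in\domain(f)\}\subseteq\{0,\dots,n\}$. By the definition of $\gap$, any two weights in $W$ with different $h$-values differ by at least $g$.

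\textbf{Upper bound.} The randomized algorithm samples $m$ uniformly random coordinates with replacement. It computes the empirical fraction $\hat t$ of ones, estimates the weight as $n\hat t$, and outputs $h(w)$ for the $w\in W$ nearest to $n\hat t$. By Hoeffding, $|n\hat t-\wt(x)| < g/2$ with probability at least $2/3$ once $m = O(n^2/g^2)$. On this event, the nearest weight $w$ satisfies $|w-\wt(x)|<g$, so $h(w)=h(\wt(x))$. Hence $\R(f)=O((n/g)^2)$. One could use approximate counting to get $\Q(f)=O(n/g)$, but the quadratic bound suffices.

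\textbf{Lower bound.} Pick $a,b\in W$ with $h(a)\neq h(b)$ and $|a-b|=g$; say $a<b$, $h(a)=0$, $h(b)=1$. Let $p$ be an approximating polynomial of degree $d=\approxdegree(f)$ with $p\in[0,1]$ on the whole cube. Its symmetrization $p_0$ is a univariate polynomial of degree at most $d$ (Minsky–Papert) with $p_0(t)\in[0,1]$ for all integers $t\in[0,n]$, $p_0(a)\leq 1/3$ and $p_0(b)\geq 2/3$. By the mean value theorem, $|p_0'|\geq 1/(3g)$ somewhere in $[a,b]$.

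The main obstacle is that \Cref{fact: symmetrization_2} requires boundedness on the whole real interval $[0,n]$, while we only know it on integers. We handle this with the standard two-case argument (Ehlich–Zeller / Rivlin–Cheney, as in Nisan–Szegedy and Paturi). Let $c=\max_{[0,n]}|p_0'|$. If $p_0$ leaves $[-1,2]$ somewhere on $[0,n]$, then since it is bounded on integers, it must move by at least $1$ within distance $1/2$ of an integer, so $c\geq 2$. In that case, applying Markov-type bounds to the rescaled polynomial gives $d=\Omega(\sqrt n)$. Since $g\geq 1$ and we only need a polynomial relationship, we instead need $d\geq \poly(n/g)$, and $\sqrt n\geq \sqrt{n/g}$ suffices. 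Otherwise $q=(p_0+1)/3\in[0,1]$ on $[0,n]$, with $\max|q'|\geq 1/(9g)$. \Cref{fact: symmetrization_2} then gives $d\geq \sqrt{n\cdot\frac{1/(9g)}{1+1/(9g)}}=\Omega(\sqrt{n/g})$.

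In either case $\approxdegree(f)=\Omega(\sqrt{n/g})$. Combined with the upper bound and \Cref{lem:degree_lower_bound},
\[
\Omega\bigl(\sqrt{n/g}\bigr)\leq \approxdegree(f)=O(\Q(f)),\qquad \Q(f)\leq \R(f)=O\bigl((n/g)^2\bigr),
\]
so all four quantities are polynomially related to $n/\gap(f)$. The step needing the most care is the integer-versus-real boundedness in the lower bound. If the case split is delicate, we would alternatively cite Paturi's theorem directly for the symmetrized polynomial, which gives the sharper bound $\Omega(\sqrt{n\cdot \min(a,n-a)}/g)\geq\Omega(\sqrt{n}/\sqrt{g})$ up to adjustment.
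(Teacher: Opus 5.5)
Your proposal is correct and follows essentially the same route as the paper: symmetrization plus a derivative (Markov-type) bound to get $\approxdegree(f)=\Omega(\sqrt{n/\gap(f)})$, and weight estimation from $O(n^2/\gap(f)^2)$ random samples to upper-bound $\R(f)$, with everything sandwiched via Lemma~\ref{lem:degree_lower_bound}. Your extra case split is a genuine gain in rigor: the paper applies Fact~\ref{fact: symmetrization_2} directly to $p_0$, but that fact assumes boundedness on the whole real interval $[0,n]$, whereas $p_0$ is only known to lie in $[0,1]$ at integer points. Alternatively, the Ehlich--Zeller form $\deg(p_0)\ge\sqrt{cn/(c+1)}$, which is valid under integer-point boundedness alone, closes this gap in one step.
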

\begin{proof}
We will show this in two parts. 
First, $\frac{n}{\gap(f)} \polyleq \approxdegree(f)$.
Our proof will follow the standard symmetrization technique.
Consider $x,y$ that achieve the minimization of $\gap(f)$.
That is, let $x$ and $y$ be such that  $f(x)=0$, $f(y)=1$ and $\vert \wt(x) - \wt(y) \vert =\gap(f)$.
Let $p(x)$ be an optimal polynomial that approximates $f$.
That is, 
\begin{enumerate}
\item  $p(x) \in [0,1]$  for all $x$.
    \item $\vert p(x) - f(x) \vert \leq \frac{1}{3}$ for all $x \in \domain{f}$.
    \item  $\approxdegree(f) = \deg(p)$.
\end{enumerate}
Since $f$ is symmetric, $f(x')=0$ for all $x'$ such that $\wt(x')=\wt(x)$.
Thus, $p(x') \leq \frac{1}{3}$. Similarly, $p(y') \geq \frac{2}{3}$ for all $y'$ that have the same Hamming weight as $y$.
Let $p_0$ be the symmetrized version of $p$, then $p_0(\wt(x)) \leq \frac{1}{3}$ and $p_0(\wt(y)) \geq \frac{2}{3}$. 
Thus, $\max_{x \in [0,n]} \vert p_0^\prime (x)\vert \geq \frac{1}{3 \vert \wt(x) - \wt(y)\vert} = \frac{1}{3 \gap(f)}$.
Using Fact~\ref{fact: symmetrization_2}, $\deg(p_0) \geq \sqrt{\frac{ n }{3 \gap(f)}}$.
Thus, the standard symmetrization arguments gives us that $\approxdegree(f) = \deg(p)  \geq \deg(p_0) \geq \sqrt{\frac{n }{3 \gap(f)}}$.
 
For the other direction, we will give a randomized algorithm which outputs $f(x)$ with query complexity $t= O \left(\frac{n^2}{\gap(f)^2}\right)$. As $f(x)$ depends only on the Hamming weight $\mathrm{wt}(x)$, it suffices to estimate the value of $\wt(x)$ by taking uniformly random samples of indices of the input. Letting $\widehat{w}$ denote the weight estimate, since the hamming weights of any two conflicting inputs (i.e., $x,y \in \domain{f}$ with $f(x) \neq f(y)$) are at least $\gap(f)$ apart, we want to bound the probability $\widehat{w}$ differs by more that $\frac{1}{2} \cdot \gap(f)$. Suppose the randomized algorithm uniformly samples $t$ coordinates with replacement and queries them, meaning that $\widehat{w}$ is the number of $1$'s among these samples, scaled by $n/t$. By a Chernoff bound, after $t$ samples, the estimate satisfies
\[
\Pr\!\left[\, |\widehat{w} - \mathrm{wt}(x)| > \tfrac{1}{2}\,\gap(f) \,\right] \le 
e^{-\Omega(t\,(\gap(f)/n)^2)}.
\]
With $\sqrt t = \Theta\left(\frac{n}{\gap(f)}\right)$, we can make the probability a small enough constant,
therefore, $\R(f) \leq t = \Theta\left(\frac{n^2} {(\gap(f))^2}\right)$.
Thus all these measures are polynomially related. This completes the proof of the lemma.
\end{proof}

Lastly, we evaluate $\D$ using $\gap$.

\begin{lemma} \label{lem:d_for_sym}
    Let $f$ be a symmetric function. 
    Then, $\D(f) = n-\gap(f) + 1$.
\end{lemma}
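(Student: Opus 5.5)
The plan is to prove matching upper and lower bounds. Both rest on the fact from Definition~\ref{def:symmetric_function} that $f(x)$ depends only on $\wt(x)$, and on the second expression for $\gap(f)$ in its definition: any two domain inputs whose weights differ by less than $\gap(f)$ get the same value. Throughout, $f$ is assumed non-constant; otherwise $\gap(f)$ is undefined and $\D(f)=0$ trivially.

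\textbf{Upper bound $\D(f)\le n-\gap(f)+1$.} I would use the non-adaptive algorithm that queries any fixed $q=n-\gap(f)+1$ coordinates. Suppose it sees $a$ ones among them. Every input consistent with these answers has weight in the interval $[a,\,a+n-q]$, which has length $n-q=\gap(f)-1<\gap(f)$. So any two consistent domain inputs have weights differing by less than $\gap(f)$, and hence they have the same $f$-value. The algorithm outputs that common value, hard-coded per leaf; if there is no consistent domain input, the output is arbitrary.

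\textbf{Lower bound $\D(f)\ge n-\gap(f)+1$.} I would use an adversary argument. Fix domain weights $w_0<w_1$ with $w_1-w_0=\gap(f)$ and different $f$-values; the case $w_0>w_1$ is symmetric after swapping the roles of $0$ and $1$. The adversary answers each query with $1$ as long as fewer than $w_0$ ones have been revealed, and with $0$ afterwards. After $q\le n-\gap(f)$ queries, the number of ones revealed is $\min(q,w_0)\le w_0$. The number of zeros revealed is $q-\min(q,w_0)\le \max(0,\,n-\gap(f)-w_0)=n-w_1$.

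From this state I would exhibit two completions, with weights exactly $w_0$ and $w_1$. The weight-$w_0$ completion needs $w_0$ minus the revealed ones among the $n-q$ unqueried coordinates. This is possible because the revealed zeros plus $w_0$ is at most $n-w_1+w_0\le n$. The weight-$w_1$ completion is possible because the revealed zeros are at most $n-w_1$ and the revealed ones are at most $w_0\le w_1$. By symmetry both completions lie in $\domain(f)$, and they take different values. So no decision tree of depth $n-\gap(f)$ can be correct.

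The main obstacle is the bookkeeping in the lower bound: checking that the ``ones first, then zeros'' adversary never exceeds the ones budget $w_0$ or the zeros budget $n-w_1$ within $n-\gap(f)$ queries. The two displayed inequalities settle this. The upper bound is immediate once it is phrased as a weight interval of length $\gap(f)-1$.
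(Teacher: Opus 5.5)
Your proof is correct and has the same skeleton as the paper's. The upper bound is the same non-adaptive algorithm querying any $n-\gap(f)+1$ positions, and the lower bound is an adversary argument producing two consistent inputs whose weights differ by $\gap(f)$.

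The real difference is the adversary, and it matters. The paper answers every query consistently with a fixed string $x$ of the smaller weight. It then asserts that the weight-$\wt(y)$ completion is feasible via $\wt(x)+\gap(f)\le t_1+(n-t)$. That inequality does not follow from $n-t\ge\gap(f)$ unless $t_1=\wt(x)$. A tree that happens to query zeros of $x$ can leave fewer than $\gap(f)$ unqueried zeros to flip. For example, take $n=3$, domain weights $1$ and $3$ with different values, and $x=100$. If the tree queries position $2$, it sees a $0$, and only two unqueried positions remain for three ones.

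Your ``ones first, then zeros'' rule spends the query budget as $n-\gap(f)=w_0+(n-w_1)$, split between revealed ones and revealed zeros. That split is exactly what makes both completions feasible, so your lower bound is the complete version of the paper's argument. You also skip the paper's preliminary restriction to the subfunction on the two extremal orbits, which is unnecessary with your adversary.
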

\begin{proof}
    Let $x,y$ be points that achieve the $\gap(f)$. Without loss of generality, assume that $\wt(x) \leq \wt(y)$. 
    Define the domain $\mathcal{D}^{*} = \lbrace z \ :\ z=\sigma(x) \text{ or } z=\sigma(y) \rbrace \subseteq \domain{f}$ and define $f^{*}: \mathcal{D}^* \rightarrow \lbrace 0,1 \rbrace$ with $f^*(x)=f(x)$ for $x \in \mathcal{D}^*$.
    Since $f^{*}$ is a subfunction of $f$, any algorithm that computes $f$ must also compute $f^{*}$. 
    Hence, $\D(f^{*}) \leq \D(f)$.
    Moreover, by construction, $\gap(f) = \gap(f^*)$.
    
    We will show that  $n-\gap(f^{*}) +1 \leq \D(f^*)$, establishing the lower bound.
    Suppose, for contradiction, that there exists a deterministic algorithm that makes at most $t \le n-\gap(f)$ queries.

Let $x,y\in \dom(f)$ be such that $f(x)\neq f(y)$ and 
\[
|\wt(x)-\wt(y)|=\gap(f).
\]
Without loss of generality, assume $\wt(x)\le \wt(y)$.

We construct an adversarial transcript as follows.
Whenever the algorithm queries a position, we answer consistently with the string $x$.
Let $S\subseteq [n]$ be the set of queried positions, so $|S|=t$.
At the end of the execution, the algorithm has observed the restriction $x|_S$.

We now construct two inputs $x_0$ and $x_1$ that are both consistent with this transcript but satisfy $f(x_0)\neq f(x_1)$.

Let $t_1$ be the number of queried positions in $S$ on which the answer was $1$, i.e., $t_1 = |\{i\in S : x_i=1\}|$.
Since all answers were consistent with $x$, we have $t_1 \le \wt(x)$.

For the unqueried positions $[n]\setminus S$, whose size is $n-t \ge \gap(f)$, we define:
\begin{itemize}
    \item $x_0$ by placing exactly $\wt(x)-t_1$ ones among the unqueried positions (and zeros elsewhere),
    \item $x_1$ by placing exactly $\wt(y)-t_1$ ones among the unqueried positions.
\end{itemize}

Both constructions are feasible: for $x_0$, we use that $t_1 \le \wt(x)$, and for $x_1$, we use that
\[
\wt(y)-t_1 \le \wt(y) \le \wt(x)+\gap(f) \le t_1 + (n-t),
\]
since $n-t \ge \gap(f)$.

By construction, both $x_0$ and $x_1$ agree with the transcript on $S$, so the algorithm cannot distinguish them.
Moreover,
\[
\wt(x_0)=\wt(x), \qquad \wt(x_1)=\wt(y),
\]
and hence, by symmetry of $f$, we have
\[
f(x_0)=f(x) \neq f(y)=f(x_1).
\]

Thus, the algorithm produces the same output on two inputs with different function values, contradicting correctness.
Therefore, $\D(f)\ge n-\gap(f)+1$.

For the matching upper bound, we give the trivial algorithm: query any set of $t = n-\gap(f)+1$ positions.
Suppose the transcript contains $t_1$ ones and $t - t_1$ zeros.

Any completion of the remaining $n-t = \gap(f)-1$ unqueried positions can contribute at most $\gap(f)-1$ additional ones.
In particular, any two inputs consistent with the same transcript differ in Hamming weight by at most $\gap(f)-1$.
By definition of $\gap(f)$, no two such inputs can be conflicting.
Hence, any transcript of size $t$ uniquely determines $f(x)$.
Therefore, $\D(f) \le n-\gap(f)+1$, completing the proof.
\end{proof}

Combining Lemma~\ref{lem:Q_for_sym} and Lemma~\ref{lem:d_for_sym} we can completely characterize $\D$ vs $\Q$ behavior partial symmetric functions.
We give some interesting regimes in \Cref{tab:gap-speedup}. It is worth mentioning that the speed-up in the Deutsch-Jozsa problem~\cite{deutsch_jozsa} falls under the last case in~\Cref{tab:gap-speedup} as $\gap(f) = \frac{n}{2}$, meaning that $\Q(f)=\Theta(1)$, while $\D(f)=\Theta(n)$.

\begin{table}[h]
\centering
\renewcommand{\arraystretch}{1.3}
\begin{tabular}{|c|c|c|c|c|}
\hline
\rowcolor{headerblue}
$\mathbf{\gap}(f)$ 
& $\mathbf{\Q}(f)$ 
& $\mathbf{\D}(f)$ 
& Speedup $\mathbf{\D/\Q}$ 
& Type of separation \\
\hline
$\Theta(1)$ 
& $\Theta(n)$ 
& $\Theta(n)$ 
& $\Theta(1)$ 
& None \\
\hline
$n^{o(1)}$ 
& $n^{1-o(1)}$ 
& $\Theta(n)$ 
& Subpolynomial 
& None \\
\hline
$n^{\alpha}$, $0<\alpha<1$ 
& $\Theta(n^{1-\alpha})$ 
& $\Theta(n)$ 
& $\Theta(n^{\alpha})$ 
& Polynomial \\
\hline
$\Theta(\log n)$ 
& $\Theta(n/\log n)$ 
& $\Theta(n)$ 
& Polylogarithmic 
& Polynomial \\
\hline
$\Theta(n^{1-\varepsilon})$, $0<\varepsilon<1$ 
& $\Theta(n^{\varepsilon})$ 
& $\Theta(n)$ 
& $\Theta(n^{1-\varepsilon})$ 
& Polynomial \\
\hline
$\Theta(n/\polylog\,n)$ 
& $\Theta(\polylog\,n)$ 
& $\Theta(n)$ 
& Quasipolynomial 
& Superpolynomial \\
\hline
$n-n^{\beta}$, $0<\beta<1$ 
& $\Theta(1)$ 
& $\Theta(n^{\beta})$ 
& $\Theta(n^{\beta})$ 
& Exponential \\
\hline
$(1-\delta) n, 0<\delta<1$ 
& $\Theta(1)$ 
& $\Theta(n)$ 
& $\Theta(n)$ 
& Exponential \\
\hline
\end{tabular}
\caption{Deterministic versus quantum query complexity for partial symmetric functions as a function of the gap parameter $\gap(f)$.}
\label{tab:gap-speedup}
\end{table}

\subsection{Functions defined on a slice}\label{sec:balanced_slice}

In this section, we will consider partial functions whose domain is a single slice of the Boolean cube.  
The deterministic and quantum query complexities of such functions were already considered by~\cite{bendavid2014structurepromisesquantumspeedups}, who showed that $\D(f)=O(\Q(f)^{18})$.
Here, we prove an alternate bound on $\D(f)$.
Our main motivation is to give a fine-grained bound based on the slice's location.
As such, our result is not comparable to that of  \cite{bendavid2014structurepromisesquantumspeedups}.
Their result is tighter for slices near the edges, whereas our bound is tighter for slices near the middle section.
In particular, we give a tighter bound for roughly balanced domains. 
\begin{definition}[Balanced blocks] 
We say that a block $B \subseteq [n]$ is balanced for $x \in \lbrace 0,1 \rbrace^n$ if \[
       \ \abs{\{i\in [n]:i\in B\text{ and } x_i=1\}} = \abs{\{i\in [n]: i\in B\text{ and } x_i=0\}}.\]
\end{definition}

\begin{remark}\label{rem:balanced}
    For a function $f$ defined on a slice, $x \in \dom(f)$ if and only if $x^B \in \dom(f)$ for every balanced block $B$. Hence, the block sensitivity $\balancedblocksensitivity(f)$ (Definition \ref{def:sensitivity_perp})  can be restricted to only balanced blocks.
\end{remark} 

\begin{lemma}\label{lemma:slice} Let $f: \domain(f) \rightarrow \{0,1\}$ such that $\domain{f}=\{x\in \{0,1\}^n: \wt(x) = k \}$ for some fixed $k\in [n]$.
Then,
    $\D(f) \leq \frac{3n}{\min\{k, n-k\}} \certificate{f} \balancedblocksensitivity(f)$.
\end{lemma}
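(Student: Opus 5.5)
The plan is to reuse the certificate-querying algorithm from the proof of \Cref{eq:d_leq_c_perpbs} in Lemma~\ref{lemma_df_leq_c1_bs}, but to account with $\balancedblocksensitivity(f)$ instead of $\perpblocksensitivity{f}$. By Remark~\ref{rem:balanced}, on the slice only balanced blocks matter. The price of converting unbalanced blocks into balanced ones will be the factor $\frac{n}{\min\{k,n-k\}}$. Write $m=\min\{k,n-k\}$, $C=\certificate{f}$ and $b=\balancedblocksensitivity(f)$.

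\textbf{Step 1 (trivial case).} If $Cb\geq m/3$, then $\frac{3n}{m}Cb\geq n\geq\D(f)$ and there is nothing to prove. So I will assume $Cb<m/3$.

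\textbf{Step 2 (the algorithm).} Run $b$ rounds as before. In round $i$, pick a $\{1,*\}$-certificate $c_i$ of size at most $C$ that is consistent with the partial assignment $p$ queried so far, and query its support. Output $1$ if the answers agree with $c_i$, and output $0$ if no such $c_i$ exists. The total number of queries is at most $Cb$, which in this case is below the claimed bound. It remains to show correctness after $b$ rounds.

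\textbf{Step 3 (building balanced blocks).} Suppose, for contradiction, that $x,y\in\domain(f)$ both agree with $p$, with $f(x)=1$ and $f(y)=0$. For each round $i$, let $B_i\subseteq\mathrm{Support}(c_i)$ be the set where $y$ disagrees with $c_i$. These sets are nonempty, pairwise disjoint, and each has size at most $C$, but they need not be balanced. I will balance each $B_i$ by adding a compensating set $E_i$ of unqueried coordinates on which $y$ has the bit needed to cancel the imbalance of $B_i$; this requires at most $|B_i|\leq C$ coordinates.

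Since $E_i$ lies outside $\mathrm{Support}(c_i)$, the string $y^{B_i\cup E_i}$ agrees with $c_i$. It also has weight $k$ because $B_i\cup E_i$ is balanced. Hence $f(y^{B_i\cup E_i})=1$, because $c_i$ is a $\{1,*\}$-certificate and the string lies in the domain.

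The supply of unqueried coordinates is sufficient. Among them, $y$ has at least $k-Cb\geq m-Cb$ ones and at least $(n-k)-Cb\geq m-Cb$ zeros. The total demand of each type over all rounds is at most $(b+1)C\leq 2Cb$. Since $m-Cb>2Cb$ when $Cb<m/3$, all the $E_i$ can be chosen pairwise disjoint. The constant $3$ comes exactly from this inequality.

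\textbf{Step 4 (the extra block).} The $(b+1)$-st block is $D$, the set on which $x$ and $y$ differ. It is automatically balanced because $\wt(x)=\wt(y)=k$, it lies outside the queried coordinates, and $y^D=x$ has $f(x)=1$. Together with $B_1\cup E_1,\dots,B_b\cup E_b$, this gives $b+1$ disjoint balanced sensitive blocks for $y$, contradicting $\xblocksensitivity{y}{f}\leq b$.

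\textbf{Main obstacle.} The compensators $E_i$ must also avoid $D$, and $D$ may be large. My first attempt will be to choose the pair $(x,y)$ with $|D|$ minimal. If $D$ intersects many compensators, I will try swapping: replace $x$ by $x^{D'}$ for a balanced $D'\subseteq D$. This either shrinks $D$ or turns the sub-block $D'$ itself into the extra sensitive block. Failing that, I will draw the compensators from coordinates inside $D$ and absorb them into a shrunken version of $D$, arguing that the remainder of $D$ is still balanced. This is where I expect the counting to be delicate, and it is possibly the source of the slack factor $3$ rather than $2$.
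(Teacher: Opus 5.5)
Your overall route is the paper's: the same certificate-querying loop run for $b$ rounds, the same conversion of each disagreement set $B_i$ into a balanced sensitive block using unqueried compensating coordinates, and the difference set of the conflicting pair as the extra block. The paper handles the trivial case with an in-loop threshold $|\mathrm{Support}(q)|\geq k/3$ rather than your upfront split $Cb\geq m/3$; the two are equivalent. The gap is exactly the step you flag and leave open. As written, nothing ensures the $E_i$ avoid $D$, and none of your three fallback strategies is carried out. The third one, drawing compensators from inside $D$, would in particular need its own argument that the remainder of $D$ stays balanced and sensitive, and you do not give one.

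The missing idea is short, and it is what the paper does: you may assume $|D|=2$. Your first attempt already contains it if you push it through.
\begin{itemize}
\item Choose the conflicting pair consistent with $p$ with $|D|$ minimal, and suppose $|D|\geq 4$. Since $D$ is balanced, there are $j,j'\in D$ with $x_j=1$, $y_j=0$ and $x_{j'}=0$, $y_{j'}=1$.
\item The string $x'=x^{\{j,j'\}}$ lies on the slice. It is consistent with $p$ because $D$ contains no queried coordinate.
\item If $f(x')=0$, then $(x,x')$ is a conflicting pair with difference set $\{j,j'\}$. If $f(x')=1$, then $(x',y)$ is one with difference set $D\setminus\{j,j'\}$. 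Either way minimality is contradicted.
\end{itemize}
Equivalently, walk from $x$ to $y$ by balanced $2$-swaps inside $D$ and stop at the first change of label.

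Do this reduction first. Then rename so that the $0$-valued endpoint is $y$, and only then define the $B_i$ from it. Now $D$ uses up only one unqueried coordinate of each type. The demand for compensators of each type is at most $\sum_i|B_i|\leq Cb$, and at least $m-Cb-1$ are available. So you need $2Cb\leq m-1$, which follows from $3Cb\leq m-1$, i.e.\ from $Cb<m/3$ over the integers. No delicate counting remains. The constant $3$ is just convenient slack, not something forced by handling a large $D$.
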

\begin{proof}
    We follow a similar structure to the proof of $\D(f)\leq \certificate{f} \balancedblocksensitivity(f)$ for total functions~\cite{beals2001quantum}. The main difference is that in order to turn the indices which disagreed with the queried certificate into blocks, by~\Cref{rem:balanced}, we must ensure they are balanced. We \say{balance} them out by adding indices which were not queried by the algorithm.
    
    We may assume $f$ is non-constant and that $k\leq n/2$ as the case for $k> n/2$ follows by symmetry. Furthermore, one may observe that for all functions $f$ on the $k$-slice, we have that $\certificate{f} \leq k$. This is due to the fact that for any $x\in \domain(f)$, every other input which agrees with the $k$ indices $i$ such that $x_i=1$ is either outside of the domain or $x$ itself. The deterministic algorithm proceeds as follows.
    
    \begin{algorithm}\caption{}
        Let $q$ denote the partial assignment which records all the bits queried so far. Start with the trivial assignment, $q=*^n$ and run the following subroutine $\balancedblocksensitivity(f)$-times.
        \begin{enumerate}
            \item Let $r\in[\balancedblocksensitivity(f)]$ be the current round and $c_r$ be a $\{1,*\}$-certificate which agrees with $q$. If no such $c_r$ exists, output $0$.
            \item Query $c_r$ and update $q$ accordingly. If our queries agree with $c_r$, output $1$.
            \item If $\abs{\mathrm{Support}(q)} \geq \frac{k}{3}$, query the remainder of 
            $x$ and output $f(x)$.
        \end{enumerate}
    
        If we have not output a value after $\balancedblocksensitivity(f)$ rounds, all inputs in $\domain(f)$ which agree with $q$ must map to the same value $b$. Therefore, output $b$.
    \end{algorithm}

    Let us prove the correctness of the algorithm. If we terminate during the loop, the correctness follows by construction. Hence, it suffices to show that after $\balancedblocksensitivity(f)$ rounds, all inputs that agree with $q$ have the same output. Thus, we may hard-code the output for this sub-case. Assume the statement is false, meaning there is some $y\in\domain(f)$ such that $f(y)\neq f(x)$. We show that this implies that there are more than $\balancedblocksensitivity(f)$ balanced blocks for some input $y$.

    Without loss of generality, let $f(y)=0$. Firstly, as $f(x)\neq f(y)$ and $x,y$ are on the $k$-slice, there must exist a balanced block $B^\prime$ such that $x^{B^\prime}=y$. Suppose we turn $x$ into $y$ by flipping a pair of variables $b$ in $B^\prime$ such that $\abs{b}=2$ and $b$ is balanced. By definition of $B^\prime$, letting $\Tilde{x}$ being the current input, we must come across some $b$ such that, for which $f(\Tilde{x})\neq f(\Tilde{x}^{b})$. As both $\Tilde{x}, \Tilde{x}^b\in \domain(f)$, $b$ is a balanced sensitive block for both inputs. Without loss of generality, we may assume that $x=\Tilde{x},y=\Tilde{x}^b$, meaning that $\abs{B^\prime}=2$.
    
    Let $i_{r}$ be the set of indices on which $q$ disagreed with $c_r$ and $z_{r,b} = \{j\in i_r: y_i=b\}$. As the algorithm did not terminate during the loop, $i_r\neq \emptyset$. Since $\certificate{f}\leq k$, we have that $\abs{i_r} \leq k$. Per~\Cref{rem:balanced}, we shall argue that we may create $\balancedblocksensitivity(f)$ balanced blocks $B_r$ using $i_r$. Specifically, iterating $r$ from $0$ to $\balancedblocksensitivity(f)$, let us describe some balancing set $s_r$ which we use to define the block $B_r = i_r \cup s_r$ as follows:
    \begin{itemize}
        \item If $\abs{z_{r,1}} > \abs{z_{r,0}}$, let $s_r$ be some subset of $\{j\in [n]: y_j = 0 \text{ and } j\not\in (\cup_{l\in [r-1]} B_l )\cup (\cup_{l\geq r} z_{l,0})\}$ such that $\abs{s_r} = \abs{z_{r,1}} - \abs{z_{r,0}}$. Since we queried at most $k/3$ indices and $k\leq n/2$, there is always at least $n/2 \geq k/3$ indices $j$ which satisfy the conditions.
        \item If $\abs{z_{r,1}} = \abs{z_{r,0}}$, let $s_r=\emptyset$.
        \item If $\abs{z_{r,1}} < \abs{z_{r,0}}$, let $s_r$ be some subset of $\{j\in [n]: y_j = 1 \text{ and } j\not\in (\cup_{l\in [r-1]} B_l )\cup (\cup_{l\geq r} z_{l,1})\}$ such that $\abs{s_r} = \abs{z_{r,0}} - \abs{z_{r,1}}$. As there are $k$ indices which are $1$ and we made at most $k/3$ queries, we are guaranteed that there always exists some valid $s_r$ in this case.
    \end{itemize}
    In each case, we find that the set $B = \{B_1,\dots B_r, B^\prime \}$ satisfies the properties of balanced sensitive blocks. Lastly, if we don't terminate by reaching the $k/3$ threshold of queried values, we query at most $\bperpcertificate{1}{f} \balancedblocksensitivity(f) \leq \certificate{f} \balancedblocksensitivity(f)$ indices. On the other hand, if we reach the threshold, we find that $\bperpcertificate{1}{f} \balancedblocksensitivity(f) \geq k/3$, completing the proof.
\end{proof}

It has been shown that $\certificate{f} \leq 3\balancedblocksensitivity(f)^2$~\cite{bendavid2014structurepromisesquantumspeedups}. Therefore, combined with Lemmas~\ref{lem:degree_lower_bound} and~\ref{lem:bs_leq_approxdeg}, we have $\D(f) = O(\frac{n}{\min\{k, n-k\}} \Q(f)^6)$.

%% file: extension.tex
\section{Quantum speedups through completions}\label{sec:quantum_speedup_completions}

\subsection{Completion complexity}
\label{sec:completion}

As we have mentioned before, we wish to find ways to show that a given function $f$ either admits a superpolynomial separation between $\Q(f)$, $\R(f)$ and $\D(f)$ or has a polynomial relationship between these measures. In this section, we show results in this direction by giving a new technique for showing non-speedup results for any partial function $f$ by exhibiting a completion $F$ (which we define below) that satisfies certain properties.

For a partial function $f:\domain{f}\rightarrow \{0,1\}$ we say that the total function $F: \{0,1\}^n\rightarrow \{0,1\}$ is a \emph{completion} of $f$ if $F(x)=f(x)$ for all $x \in \domain(f)$.
We denote the set of all completions of $f$ as $\widehat{C}(f)$. 

\begin{definition}[Completion complexity of $f$ with respect to query complexity measure $M$]
Let $f$ be a partial Boolean function and $M$ be a query complexity measure.
Then, the completion complexity of $f$ (with respect to $M$), denoted as $\cmplcom{M}(f)$ is 
\[ \cmplcom{M}(f) = \min_{F \in \widehat{C}(f)} M(F).  \]
\end{definition}
  For a partial function $f$ we use $F_M(f)$ to denote the completion that achieves the above minimum\footnote{If there are multiple completions that achieve the minimum, choose any one of them.}. 
When $f$ is clear from context, we drop it and simply write $F_M$.
We say that $F_M$ is an optimal completion of $f$ with respect to $M$.

\begin{definition}[Polynomially tight measures]
We say that measures $M_1,M_2, \ldots, M_l$ are polynomially tight if for all total functions $F$, we have, $M_1 (F)  \polyeq M_2(F) \polyeq \cdots \polyeq M_{l} (F)$.
\end{definition}

\begin{observation} \label{lem:completions of polynomially tight measures}
    Let $M_1,M_2$ be two measures that are polynomially tight, then for any 
    (possibly partial) function $f$, we have 
    $\cmplcom{M_1}(f) \polyeq \cmplcom{M_2}(f)$.
\end{observation}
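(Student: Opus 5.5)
The plan is to work with the optimal completions directly. Polynomial tightness is only assumed on total functions, and every completion is total, so tightness can be applied to each completion individually.

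Write $M_1(F) \le a(M_2(F))$ and $M_2(F) \le b(M_1(F))$ for all total $F$, where $a, b$ are fixed polynomials. We may take them monotone nondecreasing, for instance by replacing each with $C(1+t)^k$ for suitable constants $C, k$.

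Now let $F_{M_2}$ be an optimal completion of $f$ with respect to $M_2$. Since $F_{M_2}\in\widehat{C}(f)$ is total,
\[
\cmplcom{M_1}(f) \;\le\; M_1(F_{M_2}) \;\le\; a\bigl(M_2(F_{M_2})\bigr) \;=\; a\bigl(\cmplcom{M_2}(f)\bigr).
\]
The first inequality holds because $\cmplcom{M_1}(f)$ is a minimum over all of $\widehat{C}(f)$. Applying the same argument to $F_{M_1}$ with the roles of $M_1$ and $M_2$ exchanged gives $\cmplcom{M_2}(f)\le b(\cmplcom{M_1}(f))$. Together these two bounds are exactly $\cmplcom{M_1}(f)\polyeq\cmplcom{M_2}(f)$.

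There is no real obstacle here. The only points needing care are these:
\begin{itemize}
\item The minima are attained. This holds because $\widehat{C}(f)$ is finite, with at most $2^{2^n}$ elements.
\item The relating polynomials are uniform over all total functions. This is exactly what the definition of polynomially tight provides.
\item The relating polynomials are monotone. This is needed so that they compose correctly with the minimum, and it is why we normalized $a$ and $b$ at the start.
\end{itemize}

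Note that the argument never uses the partial function's own measures $M_i(f)$. This is important because tightness may fail for partial functions.
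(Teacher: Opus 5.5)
Your proof is correct and matches the paper's argument. You bound $\overline{M_1}(f)\le M_1(F_{M_2})$ by minimality over completions, apply polynomial tightness to the total function $F_{M_2}$, and then swap the roles of $M_1$ and $M_2$; the monotone normalization of $a$ and $b$ is never actually used, since you apply $a$ directly to the exact value $M_2(F_{M_2})=\overline{M_2}(f)$.
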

\begin{proof} It suffices to show that $\cmplcom{M_1}(f) \polyleq \cmplcom{M_2}$.
The other inequality $\cmplcom{M_2}(f) \polyleq \cmplcom{M_1}$, follows similarly.
Recall that $F_{M_1}$ and $F_{M_2}$ are completions that achieve $\cmplcom{M_1}(f)$ and $\cmplcom{M_2}(f)$, respectively. Therefore,
    \begin{align*}
        \cmplcom{M_1}(f) = M_1(F_{M_1}) &  \leq M_1(F_{M_2})\\
        & \polyeq M_2(F_{M_2})\\
        & = \cmplcom{M_2}(f).
    \end{align*}
The first inequality comes from the definition of $\cmplcom{M_1}(f)$ and the second equality is because $M_1$ and $M_2$ are polynomially tight. 
\end{proof}
The statement above shows that for polynomially tight measures, their optimal completions are polynomially tight. In fact, we can prove a stronger statement. The following lemma shows that the quantities remain polynomially tight when measures are considered not only with their optimal completions, but also with respect to optimal completions of any other measure.

\begin{lemma}
\label{lem:polymeasures}
    Let $\mathcal{M} = \lbrace M_1, M_2, \ldots, M_ {\vert\mathcal{M} \vert} \rbrace$ be a set of polynomially tight measures.
    Let $f$ be a partial function and for $i \in [l]$, let \[\mathcal{F} = \lbrace F_{i} \ : \  F_{i}  \text{ is the optimal completion of } f \text{ with respect to } M_i \in \mathcal{M} \rbrace.\]
    Then, for every $M_i, M_j \in \mathcal{M}$ and every $F_k, F_l \in \mathcal{F}$, we have, $M_i(F_k) \polyeq M_j(F_l)$.    
    \end{lemma}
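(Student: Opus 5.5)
The plan is to reduce everything to a chain of inequalities through a single anchor quantity, namely $\cmplcom{M_1}(f)$, using only the definition of optimal completions and polynomial tightness on total functions. Since every $F_k\in\mathcal{F}$ is a total function, polynomial tightness applies to it directly. So for any $M_i,M_j\in\mathcal{M}$ and any $F_k$ we get $M_i(F_k)\polyeq M_j(F_k)$. This reduces the statement to showing that, for a fixed measure (say $M_k$), the values $M_k(F_k)$ and $M_k(F_l)$ are polynomially related for all $k,l$. Chaining then gives
\[
M_i(F_k)\polyeq M_k(F_k)\polyeq M_k(F_l)\polyeq M_j(F_l).
\]

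For the middle relation, one direction is immediate from optimality. We have $M_k(F_k)=\cmplcom{M_k}(f)\le M_k(F_l)$, because $F_l$ is some completion of $f$. For the other direction we use tightness on $F_l$ together with optimality of $F_l$ for $M_l$:
\[
M_k(F_l)\polyeq M_l(F_l)=\cmplcom{M_l}(f).
\]
Observation~\ref{lem:completions of polynomially tight measures} then gives $\cmplcom{M_l}(f)\polyeq\cmplcom{M_k}(f)=M_k(F_k)$. Hence $M_k(F_l)\polyleq M_k(F_k)$, and the two sides are polynomially equivalent.

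The main point to be careful about, rather than a real obstacle, is uniformity of the polynomial relations. Polynomial tightness should give fixed constants $a$ with, e.g., $M_i(F)\le O(M_j(F)^a)$ for all total $F$. Since $\mathcal{M}$ is a finite set, we can take the maximum exponent over all pairs. The chain above has a bounded number of steps (at most four or five applications), so the composed exponents remain constants independent of $f$ and $n$. I would also note explicitly that the minimum defining $\cmplcom{M}$ exists, because the set of completions is finite. This makes $F_i$ well defined, with ties broken arbitrarily as in the footnote, so the argument holds for any choice of optimal completions.
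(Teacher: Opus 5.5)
Your proof is correct and is essentially the paper's argument: both combine polynomial tightness on the total functions $F_k$ and $F_l$ with the optimality inequality $\cmplcom{M}(f)\le M(F)$, which holds for every completion $F$. The only difference is packaging. You route the reverse comparison $M_k(F_l)\polyleq M_k(F_k)$ through the preceding Observation, which is itself proved by exactly this optimality-plus-tightness step. The paper instead writes the single chain $M_1(F_{M_3})\polyeq M_3(F_{M_3})\le M_3(F_{M_4})\polyeq M_4(F_{M_4})\polyleq M_2(F_{M_4})$ and gets the other direction by symmetry.
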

\begin{proof}
For ease of readability, we show this with $(i,j,k,l)=(1,2,3,4)$, the proof remains same for any values of $i,j,k,l$.
Again, we show  $M_1(F_{M_3}) \polyleq M_2(F_{M_4})$, $M_2(F_{M_4}) \polyleq M_1(F_{M_3})$ follows analogously.
\begin{align*}
    M_1(F_{M_3})  &  \polyeq  M_3(F_{M_3}) \\
    & \leq M_{3}(F_{M_4}) \\
    & \polyeq M_4(F_{M_4}) \\ 
    & \polyleq M_2(F_{M_4}). \qedhere
\end{align*}
\end{proof}
Consider any partial function $f$ and let $\mathcal{A}$  be the optimal deterministic algorithm.
Note that $\mathcal{A}$ produced some fixed outcome for each $x \notin \dom(f)$ as well. Thus we can define a completion of $f$, defined via $F_D(x)= \mathcal{A}(x)$. Thus, we get the following:
\begin{observation}\label{fact:dequalscompletion}
    For any partial function $f$, $D(f) = D(F_D)$.
\end{observation}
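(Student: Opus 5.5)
The statement is that $\D(f) = \D(F_D)$ for every partial $f$, where $F_D$ is the completion induced by an optimal deterministic algorithm. I will prove it by two inequalities, each coming directly from the definitions.

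\textbf{Step 1: $\D(F_D) \leq \D(f)$.} Let $\mathcal{A}$ be an optimal decision tree for $f$, of depth $\D(f)$. Every input $x \in \{0,1\}^n$, including those outside $\dom(f)$, follows a root-to-leaf path of $\mathcal{A}$ and reaches a leaf. So $F_D(x) := \mathcal{A}(x)$ is a well-defined total function. Since $\mathcal{A}$ computes $f$ correctly on $\dom(f)$, we have $F_D(x) = f(x)$ there, so $F_D$ is a completion of $f$. The tree $\mathcal{A}$ computes $F_D$ exactly on every input by construction. Hence $\D(F_D) \leq \operatorname{depth}(\mathcal{A}) = \D(f)$.

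\textbf{Step 2: $\D(f) \leq \D(F_D)$.} Take any decision tree $\T$ computing the total function $F_D$. It is correct on all of $\{0,1\}^n$, so in particular on $\dom(f)$. There $F_D$ agrees with $f$, so $\T$ is a valid decision tree for $f$. Therefore $\D(f) \leq \D(F_D)$. The same argument shows more generally that $\D(f) \leq \D(F)$ for every completion $F \in \widehat{C}(f)$, which gives $\D(f) \leq \cmplcom{\D}(f)$.

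\textbf{Conclusion.} Combining the two steps gives $\D(f) = \D(F_D)$. Together with the general bound from Step 2, this also shows that $F_D$ attains $\cmplcom{\D}(f)$, i.e.\ it is an optimal completion with respect to $\D$ in the sense defined above.

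There is no real obstacle here. The only point to state explicitly is that the partial-function model of the preliminaries places no requirement on inputs outside $\dom(f)$. Thus $\mathcal{A}$ runs on such inputs, outputs a definite bit at a leaf, and does so within the same depth bound.
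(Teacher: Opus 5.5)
Your proof is correct and follows the paper's argument: the optimal decision tree for $f$ outputs a bit on every input, so it defines a completion $F_D$ computed within depth $\D(f)$. You also make explicit the reverse inequality $\D(f)\le \D(F)$ for every completion $F$, which the paper leaves implicit. That inequality reconciles the two readings of $F_D$ (the algorithm-induced completion and the $\D$-optimal completion $F_M$ defined earlier), since it gives $\D(f)=\cmplcom{\D}(f)$.
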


Combining everything, we get the following lemma: a measure $M$ is extendable without a blow-up for $f$ if and only if it is polynomially related to deterministic query complexity.

\begin{lemma}\label{lemma:general}
    For any \emph{partial function} $f:\Dom(f) \to \{0,1\} $, and for any complexity measure $M$ which is polynomially related to $\D$ for \emph{total functions},
    \[ \D(f) \polyeq M(f) \textit{ iff 
 } \cmplcom{M}(f) \polyeq M(f).  \]
\end{lemma}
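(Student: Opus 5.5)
The plan is to prove the two directions separately. The tools are Observation~\ref{fact:dequalscompletion} ($\D(f)=\D(F_D)$), the assumed polynomial relationship $M(F)\polyeq \D(F)$ for every total $F$, and the monotonicity fact $M(f)\le M(F)$ for every completion $F$ of $f$. This last fact holds for all the measures under consideration, because any algorithm or polynomial for $F$ restricts to one for $f$. The proof is a short chain of inequalities, much like Observation~\ref{lem:completions of polynomially tight measures}.

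\emph{($\Leftarrow$)} Assume $\cmplcom{M}(f)\polyeq M(f)$ and let $F_M$ be an optimal $M$-completion. Since $F_M$ is total, $\D(F_M)\polyleq M(F_M)=\cmplcom{M}(f)\polyleq M(f)$. Any algorithm for $F_M$ also computes $f$, so $\D(f)\le \D(F_M)$, which gives $\D(f)\polyleq M(f)$. The reverse bound $M(f)\polyleq \D(f)$ is the partial-function analogue of the standard lower bounds on $\D$, such as Lemma~\ref{lem:degree_lower_bound}. I would note it as an assumed property of the measure, or derive it from the ($\Rightarrow$) chain below, since $M(f)\le M(F_D)\polyleq \D(F_D)=\D(f)$.

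\emph{($\Rightarrow$)} Assume $\D(f)\polyeq M(f)$ and consider the completion $F_D$ given by Observation~\ref{fact:dequalscompletion}. Then
\[
M(f)\le \cmplcom{M}(f)\le M(F_D)\polyleq \D(F_D)=\D(f)\polyleq M(f).
\]
The first inequality is monotonicity, the second is the definition of the minimum, and the third is polynomial relatedness for total functions. The chain closes into a loop, so $\cmplcom{M}(f)\polyeq M(f)$.

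The main obstacle is justifying monotonicity, $M(f)\le M(F)$ for completions $F$. It is also what supplies the easy direction $M(f)\polyleq \D(f)$. For algorithmic measures ($\D,\R,\Q$) it is immediate. For $\approxdegree$ it holds because the paper's definition already requires boundedness on the whole cube, so an approximating polynomial for $F$ is one for $f$. For combinatorial measures defined with $*$ (for example $\perpblocksensitivity{}$) it can fail. I would therefore state it explicitly as a hypothesis on $M$, matching the footnote's intended class of measures (those lower-bounding $\D$ in the way of \cite{beals2001quantum}), and check it for the standard measures.
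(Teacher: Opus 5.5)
Your two chains are nearly the paper's, but as written they prove the lemma only for measures that are monotone under completion ($M(f)\le M(F)$ for every completion $F$). You propose to add this as a hypothesis. It is not needed, and adding it weakens the statement. For example, $\mathrm{bs}'$, which you note can violate monotonicity, equals $\mathrm{bs}$ on total functions, so the lemma covers it but your argument does not. The missing observation is that $\cmplcom{M}(f)\polyeq\D(f)$ holds for \emph{every} partial $f$, regardless of how $M$ behaves on partial functions:
\[
\D(f)\le \D(F_M)\polyleq M(F_M)=\cmplcom{M}(f)\le M(F_D)\polyleq \D(F_D)=\D(f).
\]
This uses only four facts: monotonicity of $\D$ (a decision tree for a completion also computes $f$), minimality of $F_M$, polynomial relatedness on the total functions $F_M$ and $F_D$, and \Cref{fact:dequalscompletion}. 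The paper extracts exactly this from \Cref{lem:polymeasures}, after which the lemma is just transitivity of $\polyeq$.

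Concretely, each appeal to monotonicity of $M$ can be replaced by the hypothesis of that direction. In ($\Rightarrow$), instead of $M(f)\le\cmplcom{M}(f)$, use $M(f)\polyleq\D(f)\le\D(F_M)\polyleq M(F_M)=\cmplcom{M}(f)$. In ($\Leftarrow$), you need neither to assume $M(f)\polyleq\D(f)$ nor to use $M(f)\le M(F_D)$, since $M(f)\polyleq\cmplcom{M}(f)\le M(F_D)\polyleq\D(F_D)=\D(f)$. So what you call the main obstacle is not actually needed, and with these substitutions your proof covers the lemma as stated.
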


In particular, this lemma gives us a way to establish superpolynomial speedup between $\D$ and \emph{an arbitrary measure} $M$ by showing that the measure $M$ is not completable to a total function without a superpolynomial blow-up.

 \begin{proof}
     Assume $D(f) \polyeq M(f)$, the we get the following chain of equalities.
     \[ \cmplcom{M}(f) = M(F_M) \polyeq \D(F_D) = D(f) \polyeq M(f). \]
     The first equality follows from the definition of $F_M$, the second one from \Cref{lem:polymeasures}, the third one from \Cref{fact:dequalscompletion} and the last one follows from the assumption.

     The other direction (assuming $M(f) \polyeq \cmplcom{M}(f)$) also follows via a similar chain of equalities; we write them here for the sake of completion.
     \begin{align*}
         M(f) \polyeq \cmplcom{M}(f) = M(F_M) \polyeq \D(F_D) = D(f).&\qedhere
     \end{align*}
 \end{proof}
 
\subsection{Towards characterizing the \emph{na\"ive} completion}

When considering methods for completing a partial function $f: \domain{f} \rightarrow \{0,1\}$ to a total function, the first thought one might have is to set the output value of every point $x\notin \domain{f}$ to a single value, either $f(x)=0$ or $f(x)=1$ for all $x\notin \domain{f}$. We call this the \emph{na\"ive} completion. We help characterize when the na\"ive completion is possible by showing that one can only do this if there exists an efficient approximation of the indicator polynomial $\mathbbm{1}_{\domain{f}}(x)$ which is $1$ if $x\in \domain{f}$ and $0$ otherwise.

\begin{lemma}\label{lemma:naive_completion}
Let $f: \domain{f} \rightarrow \{0,1\}$ be a partial Boolean function, and let $p(x)$ be a degree-$d$ polynomial that $\epsilon$-approximates $f$ on $\domain{f}$, for some constant $\epsilon \in (0,1/3)$.  
Define the na\"ive completion
\[
F_0(x) = \begin{cases} 
f(x) & x \in \domain{f} \\
0 & x \notin \domain{f}.
\end{cases}
\]
Suppose there exists an algorithm that decides whether $x \in \domain{f}$ using at most $q$ queries to the input bits. Then the approximate degree of $F_0$ satisfies $\approxdegree(F_0) \leq \mathrm{poly}(d,q)$.
\end{lemma}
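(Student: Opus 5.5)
Given a $q$-query deterministic algorithm for membership in $\domain{f}$, we convert it into an exact polynomial for the indicator and then multiply by $p$. The standard fact (as in~\cite{beals2001quantum}) is that a depth-$q$ decision tree computes a function of exact degree at most $q$: sum, over accepting leaves $\ell$, the product of the $q$ literals $x_i$ or $1-x_i$ along the path to $\ell$. This gives a multilinear polynomial $q_X(x)=\mathbbm{1}_{\domain{f}}(x)$ of degree at most $q$. It takes values exactly in $\{0,1\}$ on all of $\{0,1\}^n$, so no approximation error is introduced here.

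Next we set $r(x)=q_X(x)\,p(x)$, which has degree at most $d+q$. For $x\in\domain{f}$ we have $r(x)=p(x)$, so $|r(x)-F_0(x)|\le\epsilon$. For $x\notin\domain{f}$ we have $r(x)=0=F_0(x)$ exactly. Boundedness also holds: $p(x)\in[0,1]$ everywhere by the definition of approximate degree, and $q_X\in\{0,1\}$, so $r(x)\in[0,1]$ on the whole cube.

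Hence $r$ is an $\epsilon$-approximating polynomial for the total function $F_0$, which gives $\approxdegree_\epsilon(F_0)\le d+q$. Since $\epsilon<1/3$ is a constant, $\approxdegree(F_0)=\approxdegree_{1/3}(F_0)\le d+q=\poly(d,q)$ directly, and no amplification is needed.

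The only step needing care is the decision-tree-to-polynomial conversion. We must check that $q_X$ is exactly $0/1$-valued even on inputs outside $\domain{f}$, which holds because the membership algorithm is total and deterministic. If the algorithm were instead randomized or quantum, with bounded error, we would get an approximate indicator $\tilde q$ of degree $O(q)$ or $2q$, with values in $[0,1]$ and error $\epsilon'$. Then $r=\tilde q\,p$ still lies in $[0,1]$, and its error is at most $\epsilon+\epsilon'$ by the triangle inequality. If $\epsilon+\epsilon'$ exceeds $1/3$, we would first amplify both $p$ and $\tilde q$ (for example, by composing with a constant-degree amplification polynomial) before multiplying; this only costs constant factors in the degree.
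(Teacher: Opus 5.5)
Your proof is correct and takes the same route as the paper: multiply $p$ by a low-degree polynomial for the indicator $\mathbbm{1}_{\domain{f}}$, then check the cases $x\in\domain{f}$ and $x\notin\domain{f}$ separately. The paper uses only an $\epsilon'$-approximate indicator and ends with error $\epsilon+\epsilon'$, never noting that this may exceed $1/3$; your exact degree-$q$ decision-tree polynomial avoids that issue, your amplification remark covers the bounded-error case, and you also verify the $[0,1]$-boundedness that the paper's definition of $\approxdegree$ requires.
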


\begin{proof}
Since $p(x)$ $\epsilon$-approximates $f$ on $\domain{f}$, we have:
\[
|p(x) - f(x)| \leq \epsilon \quad \text{for all } x \in \domain{f}.
\]
Let $g(x) = \mathbbm{1}_{\domain{f}}(x)$ be the indicator function on $\domain{f}$. By a standard result, if $g(x)$ can be decided by a $q$-query algorithm, then there exists a polynomial $q^\prime(x)$ that $\epsilon'$-approximates $g(x)$, for some constant $\epsilon' \in (0,1/3)$, such that $\deg(q^\prime) = O(q)$~\cite{beals2001quantum}. That is,
\[
\begin{cases}
|q^\prime(x) - 1| \leq \epsilon' & \text{if } x \in \domain{f}, \\
|q^\prime(x) - 0| \leq \epsilon' & \text{if } x \notin \domain{f}.
\end{cases}
\]

Now define the polynomial:
\[
r(x) \coloneqq p(x) \cdot q^\prime(x).
\]
We claim that $r(x)$ is a $(\epsilon + \epsilon')$-approximating polynomial for $F_0(x)$.

\textbf{Case 1:} $x \in \domain{f}$ \\
Then $F_0(x) = f(x)$ and $|p(x) - f(x)| \leq \epsilon$. Also, $|q^\prime(x) - 1| \leq \epsilon'$. So,
\begin{align*}
|r(x) - f(x)| &= |p(x)q^\prime(x) - f(x)| \\
&= |p(x)q^\prime(x) - p(x) + p(x) - f(x)| \\
&\leq |p(x)| \cdot |q^\prime(x) - 1| + |p(x) - f(x)| \\
&\leq \epsilon' + \epsilon.
\end{align*}

\textbf{Case 2:} $x \notin \domain{f}$ \\
Then $F_0(x) = 0$, and we have:
\[
|r(x) - F_0(x)| = |p(x)q^\prime(x)| \leq |q^\prime(x)| \cdot |p(x)| \leq \epsilon'.
\]

Thus, in both cases, $r(x)$ approximates $F_0(x)$ within additive error at most $\epsilon + \epsilon'$. Finally, since $\deg(q^\prime) = O(q)$, the degree of $r(x)$ is:
\[
\deg(r) = \deg(p) + \deg(q^\prime) \leq \poly(d, q).
\]

Therefore, $\approxdegree(F_0) \leq \mathrm{poly}(d,q)$. 
\end{proof}

We note that an analogous statement may be shown for the other na\"ive completion $F_1(x)$ by repeating the same proof with $r(x)=p(x) q(x)+1-q(x)$.

\begin{corollary}
    Let $f$ be any partial function with domain $\domain{f} \subsetneq \lbrace 0,1 \rbrace^n$.
    If there exists an algorithm\footnote{it does not matter whether the algorithm in deterministic, randomized or quantum as all are polynomially related.} $\mathcal{A}$ for deciding $x \in \domain{f}$ with query complexity $q \polyleq \approxdegree(f)$, then $f$ has no quantum speedup.
\end{corollary}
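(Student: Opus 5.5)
The natural route is to chain \Cref{lemma:naive_completion} with the fact that, for total functions, all the standard measures are polynomially tight. Let $d=\approxdegree(f)$ and let $p$ be a degree-$d$ polynomial $\epsilon$-approximating $f$ on $\domain{f}$ with $p(x)\in[0,1]$ on the whole cube. First amplify $\mathcal{A}$ to error at most a small constant. Its query count $q$ stays $O(q)$, and by hypothesis $q\polyleq d$.

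\textbf{Step 1.} I would apply \Cref{lemma:naive_completion} to the na\"ive completion $F_0$. It gives $\approxdegree(F_0)\le\poly(d,q)=\poly(d)$. One technical point must be checked: the product $r=p\cdot q'$ has to satisfy the boundedness requirement in the definition of approximate degree, i.e.\ $r\in[0,1]$ everywhere. This holds because $p\in[0,1]$ and $q'$ is the acceptance probability of a query algorithm, so $q'\in[0,1]$. Also, $\epsilon+\epsilon'<1/2$ once both errors are made small, so standard error reduction brings the error to $1/3$ with only a constant-factor blow-up in degree.

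\textbf{Step 2.} Since $F_0$ is total, the polynomial relation between $\D$ and $\approxdegree$ for total functions (\cite{beals2001quantum,aaronson2021degree}) gives $\D(F_0)\polyleq\approxdegree(F_0)\polyleq d$.

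\textbf{Step 3.} $F_0$ agrees with $f$ on $\domain{f}$, so any decision tree for $F_0$ also computes $f$, and $\D(f)\le\D(F_0)$. Combining with \Cref{lem:degree_lower_bound} yields
\[
\approxdegree(f)\polyleq\Q(f)\le\D(f)\le\D(F_0)\polyleq\approxdegree(f).
\]
Hence $\D(f)\polyeq\Q(f)$, i.e.\ there is no superpolynomial quantum speedup. Equivalently, this exhibits the completability of $\approxdegree$ in the sense of \Cref{lemma:general}.

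\textbf{Main obstacle.} The key step is making sure the approximating polynomial for $F_0$ is genuinely a bounded approximation in the paper's sense, and that $q$ is measured against the right quantity. The footnote's claim that the model of $\mathcal{A}$ (deterministic, randomized, or quantum) is irrelevant should be justified directly. Any such algorithm with $q$ queries gives a bounded polynomial of degree at most $2q$ approximating $\mathbbm{1}_{\domain{f}}$ on the \emph{whole} cube, because membership is a total function. That is all Step 1 uses, so I would phrase the argument through this polynomial rather than through any claimed equivalence of the three models.
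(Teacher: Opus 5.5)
Your proposal is correct and is essentially the paper's argument. \Cref{lemma:naive_completion} bounds $\approxdegree(F_0)$ by $\poly(\approxdegree(f),q)=\poly(\approxdegree(f))$, and your chain $\D(f)\le\D(F_0)\polyleq\approxdegree(F_0)$ is just the completion principle packaged in \Cref{lemma:general}, giving $\D(f)\polyeq\Q(f)$. Your extra checks are sound and fill in details the paper leaves implicit: that $p\cdot q'$ stays in $[0,1]$, and that amplification keeps $\epsilon+\epsilon'$ below $1/2$, which the lemma's allowance of $\epsilon,\epsilon'\in(0,1/3)$ does not guarantee by itself.
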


The question then is when can we efficiently distinguish between inputs inside and outside of $\domain{f}$ in time that is $\poly(\Q(f))$.

\begin{example}
     For any $T$, let $g: \{0,1\}^{T}\rightarrow \{0,1\}$ be an arbitrary function with $\approxdegree(g)=T$. Let $h: \domain{h} \rightarrow \{0,1\}$ be a partial function with $\domain{h}\subsetneq \{0,1\}^{n}$ such that the output depends only on $T$ locations. We let $f: \domain{f} \rightarrow \{0,1\}$ where $\domain{f}=\{x\in \{0,1\}^{n}: h(x)=1\}$ and $f(x)=g(x_{1},\dotsc,x_{T})$. It is obvious that $\approxdegree(f)=\approxdegree(g)=T$. So, if we define our completion $F(x)=g(x_{1},\dotsc,x_{T})$ if $h(x)=1$ and $F(x)=0$ otherwise, then we also get that $\approxdegree(F)=T$.
\end{example}

\subsection{Towards characterizing the \emph{natural} completion}\label{sec:smoothness}

A natural path when searching for completions of a partial function $f:\domain{f}\rightarrow \{-1,1\}$ is via an approximating polynomial $p(x)$. This motivation is two-fold. Firstly, if $p(x)$ for some $x\not\in \domain{f}$ is sufficiently close to $1$ or $-1$, then via standard boosting techniques we could add $x$ to the domain of $f$ without a super-polynomial increase in the degree of $p$. Secondly, fixing a polynomial allows us to use tools from Boolean analysis. Specifically, the tools rely on using a fixed function, since for a partial function function $f(x)$, the standard tools do not apply due to the fact that $f(x)$ does not possess a unique Fourier expansion, although importantly, a fixed polynomial $p(x)$ does. Given this, we will use $p(x)$ as a \emph{natural} way to characterize the structure of a partial function, and use it to formally define a completion, which we denote as the \emph{natural completion}.

\begin{definition}[Natural completion]
    Consider $f:\domain{f}\rightarrow\{-1,1\}$ with approximating polynomial $p(x)$. The natural completion $F(x)$ of $f(x)$ is,
    \begin{align*}
        F(x) = \begin{cases}
            1 &p(x) \geq 0\\
            -1 &\text{otherwise.}
        \end{cases}
    \end{align*}
    When $\approxdegree(F) = \poly(\approxdegree(f(x))$, we say that $f(x)$ \emph{admits} the natural completion without a degree blow-up.
\end{definition}

Next, we show that under certain Lipschitz properties, a partial function admits the natural completion without a degree blow-up.

\begin{lemma}
\label{lem:lipschitz}
   Consider $f:\domain{f} \rightarrow \{-1,1\}$ with approximating polynomial $p(x)$ such that $\deg(p) = d$. Assume that for each $x\not\in\domain{f}$, there exists some $y\in\domain{f}$ such that,
   \begin{enumerate}
       \item the distance between $x$ and $y$ is at most $r$ (i.e. $\wt(x-y) \leq r$) and,
       \item there exists some constant $c>0$ such that between $x$ and $y$, $p(x)$ has a Lipschitz constant $L\leq \frac{2}{3r} - \frac{1}{d^{c}r}$.
   \end{enumerate} 
   Then $\abs{p(x)}\geq\frac{1}{d^{c}}$ for all $x\in\{-1,1\}^n$, and $f$ admits the natural completion $F$ with $\approxdegree(F)=O(d^{c+1})$.
\end{lemma}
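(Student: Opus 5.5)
The plan has two stages: first show that $|p|$ is bounded away from zero everywhere, then feed that lower bound into the sign-approximation lemma (\Cref{lem:chebyshev_sign}) to approximate the natural completion $F=\mathrm{sign}(p)$.

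\textbf{Step 1: lower bound on $|p|$.} Work in the $\{-1,1\}$ convention. Since $p$ approximates $f$ to error $1/3$, every $y\in\domain{f}$ satisfies $|p(y)|\ge 2/3$, so $|p|\ge 1/d^{c}$ on the domain trivially. Now fix $x\notin\domain{f}$. Take the $y\in\domain{f}$ guaranteed by the hypothesis, with $\wt(x-y)\le r$, and apply the Lipschitz condition between $x$ and $y$:
\[
|p(x)| \;\ge\; |p(y)| - L\,|x-y| \;\ge\; \tfrac{2}{3} - \Bigl(\tfrac{2}{3r}-\tfrac{1}{d^{c}r}\Bigr) r \;=\; \tfrac{1}{d^{c}} .
\]
Here $|x-y|$ is measured as the Hamming distance $\wt(x-y)\le r$, which matches how the statement sets up the constant. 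If one insists on Euclidean distance in $\{-1,1\}^n$, the distance is $2\sqrt{\wt(x-y)}$, and I would absorb this into the normalization, reading the hypothesis so that $L\cdot\mathrm{dist}(x,y)\le 2/3-1/d^c$. This gives $|p(x)|\ge d^{-c}$ for all $x\in\{-1,1\}^n$. It also shows $p(x)\neq 0$ off the domain, so $F$ is well defined and agrees with $f$ on $\domain{f}$: there $p$ has the sign of $f$, since $|p-f|\le 1/3$.

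\textbf{Step 2: amplify.} Since $p$ is bounded in $[-1,1]$, apply \Cref{lem:chebyshev_sign} with $\delta = d^{-c}$ and a constant $\epsilon=1/3$ to get an odd univariate polynomial $P$ of degree $O(d^{c}\log 3)=O(d^{c})$. This $P$ satisfies $|P(t)-\mathrm{sign}(t)|\le 1/3$ whenever $|t|\ge\delta$ and $t\in[-2,2]$, and $|P(t)|\le 1$ throughout that range. Then $P\circ p$, viewed as a multilinear polynomial after reducing $x_i^2=1$, has degree at most $O(d^{c})\cdot d=O(d^{c+1})$. By Step 1 it is $1/3$-close to $F$ at every point of the cube, and it is bounded in $[-1,1]$. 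Translating back to $\{0,1\}$ outputs via $(1-\cdot)/2$ gives $\approxdegree(F)=O(d^{c+1})$.

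\textbf{Main obstacle.} The only delicate point is Step 1. The Lipschitz hypothesis is stated only "between $x$ and $y$", and it is unclear which metric is meant on the cube. I will interpret it exactly as the inequality $|p(x)-p(y)|\le L\,\wt(x-y)$ used above, and note that the choice of metric only changes constants. Multilinear reduction of $P\circ p$ does not raise the degree, and boundedness is preserved because $p$ maps into $[-1,1]\subset[-2,2]$.
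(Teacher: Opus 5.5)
Your proof is correct and follows the paper's own argument essentially step for step. You use $|p(y)|\ge 2/3$ and the Hamming-distance reading $|p(x)-p(y)|\le Lr$ to get $|p(x)|\ge 2/3-Lr\ge d^{-c}$, and then compose $p$ with the sign-approximating polynomial of \Cref{lem:chebyshev_sign} at $\delta=d^{-c}$, $\epsilon=1/3$ to reach degree $O(d^{c+1})$; your extra checks (boundedness of $p$ in $[-1,1]\subset[-2,2]$, and that $F=\mathrm{sign}(p)$ agrees with $f$ on $\domain{f}$) fill in details the paper leaves implicit.
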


\begin{proof}
    Let $x\not\in\domain{f}$ and $y\in\domain{f}$ be a pair of inputs which satisfy the conditions of the statement. Let $L$ be the constant such that $x$ and $y$ are $L$-Lipschitz. By definition, we have that,
    \begin{align*}
        \abs{p(x) - p(y)} \leq Lr.
    \end{align*}
    Next, we lower bound $\abs{p(x)}$. By the reverse triangle inequality,
    \begin{align*}
        \abs{p(x)} &\geq \abs{ \abs{p(y)} - \abs{p(x)-p(y)}}\\
        &\geq \abs*{\frac{2}{3} - Lr}.
    \end{align*}
    Let us show that $\abs{p(x)}\geq \frac{1}{d^c}$ for all $x\in\{-1,1\}^n$. This holds for $x\in\domain{f}$ by definition, so assume $x\not\in\domain{f}$.
    To satisfy the condition, we want $\frac{2}{3} - Lr \geq \frac{1}{d^c}$ for some constant $c$. Rearranging for $L$,
    \begin{align*}
        L \leq \frac{2}{3r} - \frac{1}{d^c r}.
    \end{align*}
    Next, we show $\approxdegree(F) = O(d^{c+1})$. Let $P: \mathbb{R}\rightarrow \mathbb{R}$ be the polynomial in~\Cref{lem:chebyshev_sign} for $\delta = \frac{1}{d^c}$ and $\epsilon = \frac{1}{3}$ which has $\deg(P)=O(d^c)$. Consider the polynomial $\widetilde{p}(x) = P(p(x))$. As $\abs{p(x)}\geq \frac{1}{d^c}$, $\widetilde{p}(x)$ approximates $F(x)$ with $\epsilon=\frac{1}{3}$. Secondly, $\deg(\widetilde{p}) = \deg(P) \cdot \deg(p) = O(d^{c+1})$, completing the proof.
\end{proof}

The remainder of the section is concerned with applying~\Cref{lem:lipschitz}. To do so, we will use the following measure which subsumes the variable $r$ in the statement of~\Cref{lem:lipschitz}.

\begin{definition}[Covering radius]
Let $\mathcal{X} \subseteq \lbrace -1,1 \rbrace^n$. The covering radius of $\mathcal{X}$ is defined as follows:
\[ \rc(\mathcal{X}): = \max_{y \in \lbrace -1,1 \rbrace^n } \min_{ x \in \mathcal{X}} d_H(x,y).\]
Informally, if we draw $\rc$-sized Hamming balls around the set $\mathcal{X}$, then it covers the entire space $\lbrace -1,1 \rbrace^n$.    
\end{definition}

We note that we use $\mathcal{P}_{n,d}(f)$ to denote the set of all multilinear polynomials in $n$-variables of degree $d$ that approximate $f$ up to error $\epsilon=\frac{1}{3}$. Furthermore, for simplicity of notation, we will use the following variable $\eta(f,c)$ which is a simplification of the Lipschitz condition in~\Cref{lem:lipschitz},
\begin{align*}
    \eta(f,c) \coloneqq \frac{\frac{2}{3} - \approxdegree(f)^{-c}}{\rc(\domain{f})}.
\end{align*}

From \Cref{lem:lipschitz} and \Cref{def:discrete_derivative}, we immediately get the following corollary.

\begin{corollary}
    Consider $f:\domain{f}\rightarrow \{-1,1\}$ with an approximating polynomial $p(x)$ of degree $d$ such that for all $i\in[n]$, $\norm{D_{i}p}_{\infty}\leq \frac{\eta(f,c)}{2}$ for some constant $c\geq 0$. Then $f$ admits the natural completion $F$ with $\widetilde{deg}(F)=O(d^{c+1})$.
\end{corollary}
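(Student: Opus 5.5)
The plan is to reduce directly to \Cref{lem:lipschitz} by converting the coordinatewise bound on discrete derivatives into a Lipschitz bound along a shortest Hamming path. Fix $x\notin\domain{f}$. By definition of the covering radius, there is some $y\in\domain{f}$ with $d_H(x,y)\le r:=\rc(\domain{f})$, so condition~1 of \Cref{lem:lipschitz} holds with this $r$.

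For condition~2, I would write $x$ and $y$ as the endpoints of a path $x=z^{(0)},z^{(1)},\dots,z^{(m)}=y$ with $m=d_H(x,y)\le r$. Each step flips a single coordinate $i_k$, so $z^{(k)}=(z^{(k-1)})^{i_k}$. On $\{-1,1\}^n$, flipping coordinate $i$ changes $p$ by exactly $p(z)-p(z^{i})=2D_ip(z)$ by \Cref{def:discrete_derivative}. Hence each step contributes at most $2\norm{D_{i_k}p}_\infty\le \eta(f,c)$. Telescoping gives
\begin{align*}
|p(x)-p(y)| \le \sum_{k=1}^{m} 2\norm{D_{i_k}p}_\infty \le m\,\eta(f,c).
\end{align*}
Thus, measured in Hamming distance, $p$ is $L$-Lipschitz between $x$ and $y$ with $L=\eta(f,c)=\frac{2/3-d^{-c}}{r}$. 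This is exactly the threshold $\frac{2}{3r}-\frac{1}{d^c r}$ required in \Cref{lem:lipschitz}, using $d=\approxdegree(f)$ for an optimal $p$.

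Then \Cref{lem:lipschitz} yields $|p(x)|\ge d^{-c}$ everywhere. Composing with the sign-approximating polynomial of \Cref{lem:chebyshev_sign} gives $\approxdegree(F)=O(d^{c+1})$.

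The main obstacle is bookkeeping rather than content. There are two points to handle.

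\begin{itemize}
\item \emph{Distance metric.} \Cref{lem:lipschitz} is phrased with $|x-y|$ and $\wt(x-y)\le r$. I would read both as Hamming distance, which is how the proof of that lemma uses them: it only invokes $|p(x)-p(y)|\le Lr$.
\item \emph{Bound on $d_H(x,y)$.} We have $d_H(x,y)=m\le r$. The bound $|p(x)-p(y)|\le m\eta(f,c)\le r\eta(f,c)=\tfrac23-d^{-c}$ is all that the lemma's proof needs. The case $m<r$ only helps.
\end{itemize}

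One should also note that $c>0$ is needed for $\eta(f,c)>0$ to be meaningful. For $c=0$ the hypothesis forces $p$ to be constant on each path, and the conclusion should be interpreted with the lemma's convention.
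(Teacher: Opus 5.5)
Your proof is correct and is the argument the paper has in mind: the paper gives no separate proof, saying only that the corollary follows immediately from \Cref{lem:lipschitz} and \Cref{def:discrete_derivative}. Concretely, each single-bit flip changes $p$ by $2|D_ip|\le\eta(f,c)$, and summing along a Hamming path of length at most $\rc(\domain{f})$ to a nearby domain point yields exactly the Lipschitz hypothesis of \Cref{lem:lipschitz}; the same step is spelled out in the proof of \Cref{lem:natural_influence_sparsity}.

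Two small remarks. First, you need not assume $p$ is optimal: $d\ge\approxdegree(f)$ gives $\approxdegree(f)^{-c}\ge d^{-c}$, hence $|p(x)|\ge\approxdegree(f)^{-c}\ge d^{-c}$ directly. Second, for $c=0$ one has $\eta(f,0)=-1/(3\rc(\domain{f}))<0$, so the hypothesis is unsatisfiable and the statement is vacuous; it does not force $p$ to be constant.
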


We show that functions with an approximating polynomial whose indices have low influence and sparsity do not attain superpolynomial speedup.

\begin{lemma}\label{lem:natural_influence_sparsity}
    Consider $f: \domain{f}\rightarrow \{-1,1\}$ with $\approxdegree(f)=d$. Let $c\geq 0$ be some constant such that,
    \begin{align}
        \min_{p\in \mathcal{P}_{n,d}(f)}\max_{i\in [n]}\, \sqrt{\spar_i(p) \Inf_i(p)}\leq \frac{\eta(f,c)}{2}.\label{eq:influence_bound}
    \end{align}
    Then $f$ admits the natural completion $F$ with $\approxdegree(F) = O(d^{c+1})$.
\end{lemma}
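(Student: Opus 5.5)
The plan is to reduce the statement to the preceding corollary, whose hypothesis is $\norm{D_i p}_\infty \leq \eta(f,c)/2$ for all $i\in[n]$. Fix $p\in\mathcal{P}_{n,d}(f)$ attaining the minimum in \eqref{eq:influence_bound}, so that $\sqrt{\spar_i(p)\Inf_i(p)}\leq \eta(f,c)/2$ for every $i$. It then suffices to show, for every $i$ and every $x\in\{-1,1\}^n$,
\[
\abs{D_i p(x)} \leq \sqrt{\spar_i(p)\,\Inf_i(p)} .
\]
Given this bound, the corollary (equivalently, \Cref{lem:lipschitz} with $r=\rc(\domain{f})$) yields the natural completion with $\approxdegree(F)=O(d^{c+1})$.

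The pointwise bound comes from the Fourier expansion of the derivative. For multilinear $p$ in the $\pm1$ basis we have $D_i p(x)=\sum_{S\ni i}\hat{p}(S)\chi_S(x)$: flipping $x_i$ negates exactly the characters containing $i$, and dividing the difference by $2$ leaves those terms. Each $\abs{\chi_S(x)}=1$, so $\abs{D_ip(x)}\leq\sum_{S\ni i,\,\hat p(S)\neq 0}\abs{\hat p(S)}$. Cauchy--Schwarz over the $\spar_i(p)$ nonzero terms gives
\[
\sum_{S\ni i}\abs{\hat p(S)}\leq \sqrt{\spar_i(p)}\Bigl(\sum_{S\ni i}\hat p(S)^2\Bigr)^{1/2}=\sqrt{\spar_i(p)\Inf_i(p)} ,
\]
using \Cref{def:influence}. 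This holds uniformly in $x$, so $\norm{D_ip}_\infty\leq\eta(f,c)/2$.

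The main point requiring care is matching this per-coordinate bound to the Lipschitz hypothesis of \Cref{lem:lipschitz}. Take $x\notin\domain{f}$ and $y\in\domain{f}$ at Hamming distance at most $r=\rc(\domain{f})$, which exists by the definition of the covering radius. Walk from $y$ to $x$ one coordinate at a time. Each flip changes $p$ by $2\abs{D_ip}\leq\eta(f,c)$, so $\abs{p(x)-p(y)}\leq r\,\eta(f,c)=\tfrac23-d^{-c}$. This is exactly the bound $Lr$ used in the proof of \Cref{lem:lipschitz}. Note the factor $2$ in the definition of $D_i$: it is why the hypothesis carries $\eta/2$ rather than $\eta$. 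Since $\abs{p(y)}\geq\tfrac23$ on the domain, we get $\abs{p(x)}\geq d^{-c}$ everywhere. Composing with the sign-approximating polynomial of \Cref{lem:chebyshev_sign} then finishes the argument exactly as in \Cref{lem:lipschitz}.
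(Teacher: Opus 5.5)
Your proposal is correct and follows the paper's proof essentially step for step. The paper also expands $p(x)-p(x^i)$ over the characters containing $i$ and applies Cauchy--Schwarz over the $\spar_i(p)$ nonzero coefficients to get a per-flip bound of $\eta(f,c)$, then invokes \Cref{lem:lipschitz}. Your coordinate-by-coordinate walk over at most $\rc(\domain{f})$ flips simply spells out the step the paper compresses into ``$p$ is $\eta(f,c)$-Lipschitz between all pairs of points.''
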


\begin{proof} We may write the approximating polynomial $p(x)$ as,
    \begin{align*}
        p(x)=\sum_{\substack{S\subseteq[n]\\ |S|\leq d}}\hat{p}(S)\chi_{S}(x).
    \end{align*}
    Consider some $i\in [n]$ and let $\mathbbm{1}[i\in S]$ be the indicator function for whether $i\in S$. By definition of $\chi_S(x)$, we have that $\chi_S(x^i) = (-1)^{\mathbbm{1}[i\in S]}\chi_S(x)$. Therefore,
    \begin{align*}
        \chi_S(x) - \chi_S(x^i) = \begin{cases}
            0 & i\notin S\\
            2\chi_{S}(x) & i\in S.
        \end{cases}
    \end{align*}
    Therefore,
    \begin{flalign*}
        &&|p(x)-p(x^{i})|&\leq2\sum_{S\ni i}|\hat{p}(S)||\chi_{S}(x)| &\\
        &&& = 2\sum_{\substack{S\ni i \\ \hat{p}(S)\neq 0}}|\hat{p}(S)|\\
        &&&\leq 2 \sqrt{\sum_{\substack{S\ni i \\ \hat{p}(S)\neq 0}} \hat{p}(S)^2} \sqrt{\sum_{\substack{S\ni i \\ \hat{p}(S)\neq 0}} 1} &\text{(Cauchy-Schwartz)}\\
        &&&=2 \sqrt{\Inf_i(p)}\sqrt{\spar_i(p)} &\text{(\Cref{def:influence})}\\
        &&&\leq \eta(f,c).
    \end{flalign*}
    Thus, $p(x)$ is $\eta(f,c)$-Lipschitz between all pairs of points $x,y\in \{-1,1\}^n$. By~\Cref{lem:lipschitz}, $f(x)$ admits the natural completion.
\end{proof}

By combining the lemma above with previous results, we get the following.

\begin{theorem}\label{thm:natural_influence}
    Consider $f:\domain{f}\rightarrow \{-1,1\}$. Suppose there exists some constant $c$ such that~\Cref{eq:influence_bound} is satisfied. Then $\D(f)\polyeq \Q(f)$.
\end{theorem}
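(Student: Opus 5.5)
The plan is to chain \Cref{lem:natural_influence_sparsity} with the completion framework of \Cref{sec:completion} and then close with the total-function polynomial relation between $\D$ and $\approxdegree$. Let $d=\approxdegree(f)$. Since~\Cref{eq:influence_bound} holds for the constant $c$, \Cref{lem:natural_influence_sparsity} gives a total function $F$, namely the natural completion, with $F(x)=f(x)$ on $\domain{f}$ and $\approxdegree(F)=O(d^{c+1})$. One point needs checking: the polynomial attaining the minimum in~\Cref{eq:influence_bound} lies in $\mathcal{P}_{n,d}(f)$, so it really does approximate $f$ on the domain and has degree $d$. This is exactly the $p$ used to build $F$.

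Next I use the fact that $F$ is total. For total functions, $\D(F)=O(\approxdegree(F)^{6})$ by~\cite{beals2001quantum}, and the exponent improves via~\cite{aaronson2021degree}. Any decision tree for $F$ also computes $f$, because $F$ agrees with $f$ on $\domain{f}$. Hence
\[
\D(f)\le \D(F)=O\big(\approxdegree(F)^{6}\big)=O\big(d^{6(c+1)}\big).
\]
Finally, \Cref{lem:degree_lower_bound} gives $d=\approxdegree(f)=O(\Q(f))$, which yields $\D(f)=O(\Q(f)^{6(c+1)})$. The other direction, $\Q(f)\le\D(f)$, is trivial, so $\D(f)\polyeq\Q(f)$.

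Equivalently, the argument can be phrased through \Cref{lemma:general} with $M=\approxdegree$: the completion shows $\cmplcom{\approxdegree}(f)\polyeq\approxdegree(f)$, and \Cref{lemma:general} then gives $\D(f)\polyeq\approxdegree(f)\polyleq\Q(f)$. The direct argument above avoids having to invoke $F_M$ and is cleaner.

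There is no real obstacle, since all the work is done in \Cref{lem:natural_influence_sparsity} and \Cref{lem:lipschitz}. The only thing to watch is that $c$ is a constant independent of $n$, so that the exponent $6(c+1)$ stays bounded. I would also note the degenerate case $d=O(1)$, where $\approxdegree(F)=O(1)$ still gives $\D(f)=O(1)$.
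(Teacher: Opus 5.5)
Your proof is correct and follows the paper's route: you obtain the natural completion $F$ from \Cref{lem:natural_influence_sparsity} and then transfer the total-function relation between $\D$ and $\approxdegree$ back to $f$. The paper makes this transfer by citing \Cref{lemma:general}, whereas you inline the one direction actually needed, namely $\D(f)\le\D(F)=O(\approxdegree(F)^6)$, which is all that lemma amounts to here and also gives you the explicit exponent $6(c+1)$.
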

\begin{proof}
    From \Cref{lem:natural_influence_sparsity}, we have that there exists a completion $F$ of $f$ such that $\approxdegree(f)\polyeq \approxdegree(F)$. By \Cref{lemma:general}, this implies that $\approxdegree(f)\polyeq \D(f)$. As $\approxdegree(f) \polyleq \Q(f)\polyleq \D(f)$ for all partial functions, we are done.
\end{proof}

Let us consider an instance of Gap majority where the gap is $\frac{2n}{3}$. Per \Cref{tab:gap-speedup}, this instance separates $\Q(f)$ and $\D(f)$. By \Cref{thm:natural_influence}, it should not satisfy the conditions of \Cref{lem:natural_influence_sparsity}.

\begin{example}[Gap majority]
    Consider an instance of gap majority $f: \domain{f} \rightarrow \{-1,1\}$ where,
    \begin{align*}
        \domain(f) = \{x: \wt(x) \geq \frac{5n}{6} \text{ or } \wt(x)\leq \frac{n}{6} \}.
    \end{align*}
    Then $p(x) = \frac{1}{n}\sum x_i$ is a $1/3$-approximating polynomial of $f$. For each $i\in [n]$, we have $w_i(p)=1$, $\Inf_i(p) = \frac{1}{n^2}$ and $\eta(f,c) < \frac{2}{n}$. As the bound on $\eta(f,c)$ is a strict inequality, $p(x)$ does not satisfy the conditions in~\Cref{eq:influence_bound}.
\end{example}

The tightness shows that the bound on influence in \Cref{eq:influence_bound} cannot be improved. A similar result to \Cref{lem:natural_influence_sparsity} follows in the setting of the $p$-biased hypercube, a generalization of the standard domain.\footnote{When $p=1/2$, the $p$-biased hypercube represents the Boolean hypercube.}

\begin{lemma}\label{lem:biased}
    Consider the function $f^{p}:\domain{f} \rightarrow \{-1,1\}$ with $\approxdegree(f^p) = d$. Let $\beta = \max\left\{ \sqrt{\frac{p}{1-p}}, \sqrt{\frac{1-p}{p}}\right\}$ and $K_{n,d,p} = \sum_{l = 0}^{d-1} \binom{n-1}{l}\beta^{2l}$. Suppose that,
    \begin{align*}
        \min_{p\in \mathcal{P}_{n,d}(f^p)} \max_{i\in [n]}\, \sqrt{\Inf_i[p]} \leq \frac{\sigma\cdot \eta(f,c)}{2\sqrt{K_{n,d,p}}}.
    \end{align*}
    Then there exists a $p$-biased completion $F^p$ of $f^p$ such that $\approxdegree(F^p) = O(d^{c+1})$.
\end{lemma}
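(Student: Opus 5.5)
We mirror the proof of \Cref{lem:natural_influence_sparsity}, replacing the Fourier--Walsh characters $\chi_S$ by the $p$-biased basis $\phi_S$ of \Cref{def:p_biased_basis}. We control the discrete derivative of an approximating polynomial pointwise, deduce a Lipschitz bound of $\eta(f,c)$ across every edge of the cube, and then invoke \Cref{lem:lipschitz}. To avoid a clash with the bias parameter, write $q$ for the approximating polynomial minimizing the left-hand side, so that $q(x)=\sum_{|S|\le d}\hat q(S)\phi_S(x)$.

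\textbf{Step 1: edge differences.} Since each $\phi_S$ is multilinear, the terms with $i\notin S$ cancel. For $i\in S$ we have
\[
\phi(x_i)-\phi(-x_i)=\frac{2x_i}{\sigma},
\]
so
\[
q(x)-q(x^i)=\frac{2x_i}{\sigma}\sum_{S\ni i}\hat q(S)\,\phi_{S\setminus\{i\}}(x).
\]
The coordinate values of $\phi$ are $\phi(1)=\sqrt{p/(1-p)}$ and $\phi(-1)=-\sqrt{(1-p)/p}$, up to orientation. Hence $|\phi(x_j)|\le\beta$, and so $|\phi_{S\setminus\{i\}}(x)|\le\beta^{|S|-1}$.

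\textbf{Step 2: Cauchy--Schwarz with weights.} We bound
\[
|q(x)-q(x^i)|\le\frac{2}{\sigma}\sqrt{\sum_{S\ni i}\hat q(S)^2}\,\sqrt{\sum_{S\ni i,\,|S|\le d}\beta^{2(|S|-1)}}.
\]
The first factor is $\sqrt{\Inf_i[q]}$, using the $p$-biased analogue of \Cref{def:influence}, namely $\Inf_i=\sum_{S\ni i}\hat q(S)^2$, which is valid since the $\phi_S$ are orthonormal under $\pi_p$. For the second factor, there are $\binom{n-1}{l}$ sets $S\ni i$ with $|S|=l+1$, so it equals exactly $\sqrt{K_{n,d,p}}$. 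This replaces the sparsity term of the unbiased case, and explains why $\spar_i$ no longer appears: we simply count all sets of size at most $d$. The hypothesis then gives $|q(x)-q(x^i)|\le\eta(f,c)$ for every $x$ and every $i$.

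\textbf{Step 3: apply \Cref{lem:lipschitz}.} The per-edge bound is exactly the Lipschitz condition that lemma needs. Take any $x\notin\domain{f}$ and a nearest $y\in\domain{f}$, which lies within distance $\rc(\domain{f})$. Walking from $y$ to $x$ along a path of at most $\rc(\domain{f})$ edges gives
\[
|q(x)-q(y)|\le\rc(\domain{f})\,\eta(f,c)=\tfrac23-d^{-c}.
\]
Since $|q(y)|\ge 2/3$, we conclude $|q(x)|\ge d^{-c}$ everywhere. Composing with the polynomial $P$ of \Cref{lem:chebyshev_sign}, taking $\delta=d^{-c}$, yields an approximating polynomial for $F^p=\mathrm{sign}(q)$ of degree $O(d^{c+1})$.

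\textbf{Main obstacle.} This is a bookkeeping point rather than a conceptual one. The lemma and the definitions use the $\{-1,1\}$ convention, while $\phi$ is defined via $\pi_p(0)=p$, so we must get the orientation of $\phi$ right and confirm the pointwise bound $|\phi|\le\beta$. We must also make sure that "approximating" and "degree" are measured in the same sense on the biased cube, so that composing with $P$ is legitimate. Both are degree-preserving, since the $\phi_S$ span the same space of multilinear polynomials, so boundedness and degree carry over unchanged.
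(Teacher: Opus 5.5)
Your proposal is correct and follows essentially the same route as the paper: write $q(x)-q(x^i)$ using $\phi(x_i)-\phi(-x_i)=\pm 2/\sigma$, apply Cauchy--Schwarz to split the bound into $\sqrt{\Inf_i}$ and $\sqrt{K_{n,d,p}}$ via $|\phi|\le\beta$, and conclude with \Cref{lem:lipschitz}. Your explicit edge-path argument through $\rc(\domain{f})$ and your pointwise bound $|\phi(x_j)|\le\beta$ (the paper writes $=\beta$, which is not exact) only spell out details the paper leaves implicit.
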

\begin{proof}
    By~\Cref{def:p_biased_basis}, we have that for any $S\subseteq [n]$ and $i\in S$, $\phi_S(x^i) = \phi(- x_i)\phi_{S\setminus \{i\}}(x)$. As $p(x) = \sum_{\abs{S}\leq d} \hat{p}(S)\phi_S(x)$,
    \begin{align*}
        \abs{p(x) - p(x^i)} &=\abs*{\sum_{S\ni i} \hat{p}(S)(\phi_S(x) - \phi_S(x^i))}\\
        &=\abs*{\sum_{S\ni i} \hat{p}(S)(\phi(x_i) - \phi(-x_i))\phi_{S\setminus \{i\}}(x)}\\
        &= \abs*{\phi(x_i) - \phi(-x_i)} \cdot \abs*{\sum_{S\ni i} \hat{p}(S) \phi_{S\setminus \{i\}}(x)}\\
        &\leq \frac{2}{\sigma} \left(\sum_{S\ni i}\hat{p}(S)^2 \right)^{1/2} \left(\sum_{\abs{S}\leq d; S\ni i} \phi_{S\setminus \{i\}}(x)^2 \right)^{1/2},
    \end{align*}
    where the last equality is an application of the Cauchy-Schwartz inequality and the fact that by~\Cref{def:p_biased_phi}, we have that $\abs*{\phi(x_i) - \phi(-x_i)} = \frac{2}{\sigma}$. Next, we bound each term. Notice that for any $i\in [n], \abs{\phi(x_i)} = \abs{\frac{x_i - \mu}{\sigma}} = \beta$. Therefore,
    \begin{align*}
        \sum_{\substack{\abs{S} \leq d\\ S\ni i}} \phi_{S\setminus \{i\}}(x)^2 \leq \sum_{l=0}^{d-1} \binom{n-1}{l} \beta^{2l} = K_{n,d,p}.
    \end{align*}
    Plugging everything together and using the definition of influence (\Cref{def:influence}),
    \begin{align*}
        \abs{p(x)-p(x^i)} &\leq \frac{2}{\sigma} \frac{\sigma\cdot\eta(f,c)}{2\sqrt{K_{n,d,p}}} \sqrt{K_{n,d,p}} = \eta(f,c).
    \end{align*}
    Therefore, $p(x)$ is $\eta(f,c)$-Lipschitz. By~\Cref{lem:lipschitz}, there exists a $p$-biased completion $F^p$ of $f$ such that $\approxdegree(F^p) = O(d^{c+1})$.
\end{proof}

\subsection{Towards general partial function completions}

In this section we explore the idea of completions of polynomials for partial Boolean functions.

\subsubsection{Completability of exact degree}\label{sec:exact_deg}

When considering quantum query complexity, one is often concerned with the minimum degree of a polynomial that approximated $f$. However, these polynomials do not have the same uniqueness as exact polynomials representing $f$, and can be slightly more difficult to work with~\cite{odonnell2021analysisbooleanfunctions}. For this reason, before we discuss techniques towards general completion of approximate degree for partial functions, we begin by considering how to complete the degree of an exact polynomial representing a partial Boolean function $f$.

\begin{lemma}
    Consider $f:\domain{f}\rightarrow \{0,1\}$ such that $\deg(f)=d$ and let $F$ be a completion of $f$. Then $\deg(F)=d$ if and only if for every $S\subseteq [n]$ such that $\abs{S} > d$, 
    \begin{align*}
        \abs{\{x: F(x) = 1, \wt(x) \text{ is even and } x\subseteq S\}} = \abs{\{x: F(x) = 1, \wt(x) \text{ is odd and } x\subseteq S\}}.
    \end{align*}
    where $x\subseteq S$ denotes $\{i\in [n]: x_i=1\} \subseteq S$.
\end{lemma}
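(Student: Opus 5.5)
The plan is to use the unique multilinear representation of the total function $F$ over $\{0,1\}^n$ and the Möbius inversion formula for its coefficients. First I would show $\deg(F)\ge d$ for every completion, which reduces the claim to characterizing $\deg(F)\le d$.

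\textbf{Step 1: $\deg(F)\ge d$.} Since $F$ is total, it has a unique multilinear polynomial $P_F(x)=\sum_{S\subseteq[n]} c_S\prod_{i\in S}x_i$ with $P_F(x)=F(x)$ for all $x\in\{0,1\}^n$. This polynomial takes values in $\{0,1\}\subseteq[0,1]$ everywhere on the cube. It also agrees with $f$ on $\domain{f}$, because $F$ is a completion of $f$. Hence $P_F$ is feasible in the definition of $\deg(f)$, so $d=\deg(f)\le \deg(P_F)=\deg(F)$. It therefore suffices to prove that $\deg(F)\le d$ holds if and only if the stated parity condition holds for all $|S|>d$.

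\textbf{Step 2: Möbius inversion.} Identify each $x$ with its support $\{i: x_i=1\}$. Evaluating $P_F$ at the indicator vector of a set $T$ gives $F(T)=\sum_{S\subseteq T}c_S$. Möbius inversion on the Boolean lattice then yields
\[
c_S=\sum_{T\subseteq S}(-1)^{|S|-|T|}F(T).
\]
This identity can be checked directly using $\sum_{R\subseteq U\subseteq S}(-1)^{|S|-|U|}=\mathbbm{1}[R=S]$. Since $F$ is $\{0,1\}$-valued, the identity gives
\[
c_S=\pm\bigl(\abs{\{x\subseteq S: F(x)=1,\ \wt(x)\text{ even}\}}-\abs{\{x\subseteq S: F(x)=1,\ \wt(x)\text{ odd}\}}\bigr),
\]
where the overall sign is $(-1)^{|S|}$. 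So $c_S=0$ exactly when the two counts in the statement are equal.

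\textbf{Step 3: Conclude.} By uniqueness of the multilinear representation, $\deg(F)\le d$ if and only if $c_S=0$ for all $|S|>d$. By Step 2, this holds if and only if the even and odd counts agree for every $S$ with $|S|>d$. Combining this with Step 1 gives $\deg(F)=d$ if and only if the condition holds.

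The only subtle point is Step 1. One must notice that the paper's definition of $\deg$ for partial functions requires boundedness in $[0,1]$ on the whole cube, and that $P_F$ satisfies this requirement. Without Step 1, the condition would characterize only $\deg(F)\le d$ rather than equality. The rest is the standard Möbius computation.
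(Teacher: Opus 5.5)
Your proof is correct and follows the paper's argument: both use the M\"obius inversion formula $c_S=\sum_{T\subseteq S}(-1)^{|S|-|T|}F(T)$ to show that the coefficients $c_S$ with $|S|>d$ vanish exactly when the even and odd counts balance. Your Step 1 ($\deg(F)\ge\deg(f)=d$, because the multilinear representation of $F$ is feasible in the definition of $\deg(f)$) is a worthwhile addition: the paper's converse direction only establishes $c_S=0$ for $|S|>d$, i.e.\ $\deg(F)\le d$, and asserts equality without justifying the lower bound.
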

\begin{proof}
    ($\Rightarrow$) Assume $\deg(F) = d$ and let $p(x) = \sum_{S\subseteq [n]} c_s x^S$ be the polynomial representation of $F$. By the M\"obius inversion formula,
    \begin{align*}
        c_S = \sum_{T\subseteq S} (-1)^{|S|-|T|}F(T) = (-1)^{\abs{S}} \sum_{T\subseteq S} (-1)^{\abs{T}} F(T).
    \end{align*}
    Notice that since $\deg(F) = d$, $c_S = 0$ for all $S$ such that $\abs{S} > d$. Therefore,
    \begin{align*}
        \sum_{T\subseteq S, \abs{T} \text{ is even}} F(T) = \sum_{T\subseteq S, \abs{T} \text{ is odd}} F(T).
    \end{align*}
    ($\Leftarrow$) Assume the condition is satisfied. Then from the M\"obius inversion formula, $c_S=0$ for all sets $S$ such that $\abs{S}>d$, meaning that $\deg(F) =d$.
\end{proof}

\subsubsection{Perturbing approximate polynomials}\label{sec:pertubing_approx_polynomial}

We turn from the case of completing exact degree polynomials to a discussion about working with approximating polynomials. The reason for pivoting to approximating polynomials is the same as in \Cref{sec:smoothness}, namely that an approximating polynomial $p(x)$ provides us with a concrete tool to study the partial function $f:\domain{f}\rightarrow \{-1,1\}$.

The main idea behind the proofs in \Cref{sec:smoothness} is that by ensuring that for all $x\not\in \domain{f}$ $p(x)$ is sufficiently far away from $0$, we may apply standard boosting techniques to turn it into an approximating polynomial of the natural completion of $f$. Difficulty arises when $p(x)$ is extremely close to $0$ for some $x\not\in \domain{f}$. Formally, we mean that $\abs{p(x)} \leq \epsilon$ for some $\epsilon < 1/\poly(\deg(p))$. In this case, boosting as was done with the natural completion would blow-up the degree of the resulting approximating polynomial well beyond $\approxdegree(f)$.

A natural idea to avoid this issue is by taking the approximating polynomial $p(x)$ and adding a small \emph{perturbation} to it which removes all problematic inputs $x$ such that $\abs{p(x)}\leq \epsilon$ without introducing new ones. Specifically, we mean that we perturb the Fourier coefficients of $p(x)$ so that it becomes a valid approximating polynomial for some completion of $f$. For example, let $\lambda,\epsilon$ be some positive constants such that $\lambda+\epsilon\leq \frac{1}{2}$ and $\epsilon \geq \poly(\deg(p))^{-1}$. Suppose that for all $x\in \{-1,1\}^{n}$, $p(x) \not\in [\lambda-\epsilon,\lambda+\epsilon]$. Then we may define the perturbed polynomial $p_\Delta(x) = p(x) - \lambda$, meaning that there are no inputs such that $\abs{p(x)} \leq \epsilon$. Therefore, we conclude that $f$ admits the natural completion without a degree blow-up. Let us begin by defining the following quantity.

\begin{definition}[Domain evaluation matrix]
    Consider $f: \domain{f}\rightarrow \{-1,1\}$ where $m = \abs{\domain{f}}$, $s = \approxspar(f)$ and $p$ is the approximating polynomial which achieves $\approxspar(f)$. 
    The \emph{domain evaluation matrix} $A_{D}(p)$ is the $m \times s$ matrix defined as,
    \begin{align*}
        (A_{D}(p))_{i,j} = \phi_j(x_i)
    \end{align*}
    where $x_i\in \domain{f}$ and $\phi_j$ are the characters of the non-zero Fourier coefficients of $p$.
\end{definition}

We emphasize that since the partial function $f$ does not have a canonical Fourier representation, we are using an approximating polynomial. Alternatively, one could obtain the domain evaluation matrix by fixing a representation of $f$. In the domain evaluation matrix $A_{D}(p)$, row $i$ is the vector of basis evaluations at input $x_i$ and column $j$ are the values of basis functions $\phi_{j}$ over all inputs in the domain. Therefore, $A_{D}(p)$ is a linear map between coefficient vectors and their corresponding polynomial evaluations. For example, given the Fourier coefficient vector of $p(x)$, defined as $a=(a_{1},\dots,a_{m})^{T}$, we have that $A_{D}(p)\, a=(p(x_1),\dots,p(x_s))^{T}$. Therefore, $A_{D}(p)$ encodes how the coefficients determine the polynomial values on the domain. Explicitly,

\begin{align*}A_{D}(p)=
    \begin{bmatrix}
        \phi_{1}(x_1) & \phi_{2}(x_{1}) & \cdots & \phi_{m}(x_{1})\\
        \phi_{1}(x_{2}) & \phi_{2}(x_{2}) & \cdots & \phi_{m}(x_{2})\\
        \cdots & \cdots & \cdots &\cdots\\
        \phi_{1}(x_{s}) & \phi_{2}(x_{s}) & \cdots & \phi_{m}(x_{s})
    \end{bmatrix}.
\end{align*}

Therefore, given the coefficient vector $a=(a_{1},\dots,a_{m})^{T}$ for $p(x)$,

\begin{align*}
    A_{D}(p)\,a=\begin{bmatrix}
        a_{1}\phi_{1}(x_{1})+a_{2}\phi_{2}(x_{1}) + \cdots + a_{m}\phi_{m}(x_{1})\\
        a_{1}\phi_{1}(x_{2}) + a_{2}\phi_{2}(x_{2}) + \cdots + a_{m}\phi_{m}(x_{2})\\
        \cdots\\
        a_{1}\phi_{1}(x_{s}) + a_{2}\phi_{2}(x_{s}) + \cdots + a_{m}\phi_{m}(x_{s})
    \end{bmatrix}
    =[p(x_{1}),\dots,p(x_{s})]^{T}.
\end{align*}

A natural object of study in this model are perturbations $\Delta\in \mathbb{R}^{m}$ to the original polynomial $p(x)=\sum_{j=1}^{m}a_{j}\phi_{j}(x)$ to get some perturbed polynomial,

\begin{align*}
    p_{\Delta}(x) \coloneqq p(x)+\sum_{j=1}^{m}\Delta_{j}\phi_{j}(x)=p(x)+L_{x}(\Delta),
\end{align*}

where for each $x, L_{x}(\Delta)=\sum_{j\in [m]}\Delta_{j}\phi_{j}(x)$. Let $B_x$ denote the set of \say{bad} perturbations,
\begin{align*}
    B_x &= \{\Delta \in \mathbb{R}^m: \abs{p_{\Delta}(x)} \leq \epsilon \}\\
    &= \{\Delta \in \mathbb{R}^m: L_x(\Delta)\in [-p(x) - \epsilon, -p(x) + \epsilon] \} .
\end{align*}

Ideally, we want to find a perturbation $\Delta$ such that $\Delta \not \in \bigcup_{x\in \{-1,1\}^n} B_x$ and $\forall x\in \domain{f}$, $p(x) = p_\Delta(x)$. This would ensure that we perturb the approximating polynomial without altering the output of $f$. To this end, we discuss the following helper lemmas. We use $b$ to represent $p(x)$ so that the statements have full generality.

Let $V$ be a real vector space of finite dimension $d\geq 1$ and fix any inner product on $V$. For any linear functional $l: V\rightarrow \mathbb{R}$ and constant $b\in \mathbb{R}$ we let
\[
H(l,b)=\{v\in V: l(v)=b\}
\]
be the affine hyperplane defined by $l$ and $b$ and for $\epsilon>0$ we let
\[
S_{\epsilon}(l,b)=\{v\in V: |l(v)-b|\leq \epsilon\}
\]
be the corresponding (closed) $\epsilon$-slab. In all statements that follow, when we say measure we mean Lebesgue measure.

\begin{lemma}
\label{lem:hyper}
    Let $V$ be a $d$-dimensional real vector space with $d\geq 1$ and let $l_{1},\dots,l_{t}: V\rightarrow \mathbb{R}$ be nonzero linear functionals and let $b_{1},\dots,b_{t}\in \mathbb{R}$. Then,
    \begin{align*}
        V \setminus \bigcup_{i=1}^{t}H(l_{i},b_{i}) \neq \emptyset.
    \end{align*}
\end{lemma}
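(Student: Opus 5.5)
The plan is to show that a finite union of affine hyperplanes cannot exhaust $V$, using either a measure-zero argument or a direct line argument. I would use the elementary line argument, since it avoids measure theory and works for any field with infinitely many elements.

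First, fix a basis so that $V \cong \mathbb{R}^d$. Since each $l_i$ is a nonzero linear functional, each $H(l_i,b_i)$ is a proper affine subspace of $V$ (if $b_i$ lies outside the image of $l_i$ it is empty, but the image of a nonzero functional is all of $\mathbb{R}$, so $H(l_i,b_i)$ is nonempty of dimension $d-1$).

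Next, choose a direction $u\in V$ such that $l_i(u)\neq 0$ for every $i$. Such a $u$ exists because each kernel $\Ker(l_i)$ is a proper linear subspace, and a vector space over an infinite field is not a finite union of proper subspaces. This is the only slightly nontrivial step. To prove it concretely, I would take $u=(1,s,s^2,\dots,s^{d-1})$ in coordinates. Then $l_i(u)$ is a nonzero polynomial in $s$ of degree at most $d-1$, because its coefficients are the coordinates of $l_i$, which are not all zero. Hence each $l_i(u)$ has at most $d-1$ roots, so at most $t(d-1)$ values of $s$ are excluded, and any other real $s$ works.

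Finally, fix any $v_0\in V$ and consider the line $v_0+\lambda u$ for $\lambda \in \mathbb{R}$. The point $v_0+\lambda u$ lies in $H(l_i,b_i)$ exactly when
\[
\lambda = \frac{b_i - l_i(v_0)}{l_i(u)}.
\]
So each hyperplane meets the line in exactly one value of $\lambda$. The union therefore meets the line in at most $t$ points. Choosing any $\lambda$ outside this finite set gives a point of $V\setminus\bigcup_i H(l_i,b_i)$, which completes the proof.

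As an alternative one-line argument, each $H(l_i,b_i)$ is a $(d-1)$-dimensional affine subspace and so has Lebesgue measure zero. A finite union of such sets is also null, while $V$ has positive measure.
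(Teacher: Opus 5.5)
Your proof is correct. Your closing alternative is exactly the paper's proof: each $H(l_i,b_i)$ is Lebesgue-null in $\mathbb{R}^d$, a finite union of null sets is null, so the union cannot be all of $V$.

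Your main argument takes a genuinely different, purely algebraic route. You first find a direction $u$ outside every $\mathrm{Ker}(l_i)$ by taking $u=(1,s,\dots,s^{d-1})$; then $l_i(u)$ is a nonzero polynomial in $s$ of degree at most $d-1$. You then note that the line $v_0+\lambda u$ meets each hyperplane at exactly one parameter value. What this buys:
\begin{itemize}
\item It avoids measure theory entirely.
\item It works over any field with more than $\max\{t,\,t(d-1)\}$ elements.
\item It is constructive: it gives an explicit witness, even one arbitrarily close to a prescribed $v_0$.
\end{itemize}

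It also fits the paper's next lemma on $\epsilon$-slabs better than the measure argument does. There the null-set argument cannot be applied to the slabs themselves, since they have positive measure. So the paper uses measure zero only to get a $u$ with $l_i(u)\neq 0$ for all $i$, which is precisely your first step, and then moves along the ray through $u$. Your line argument handles hyperplanes and slabs the same way. Along the line, each slab occupies a bounded interval of $\lambda$ of length $2\epsilon/|l_i(u)|$, and your polynomial choice of $u$ replaces the measure-zero step there too.

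In short, the paper's version is shorter; yours is more elementary, more general in the choice of field, and explicit.
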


\begin{proof}
It is a standard fact that each such hyperplane $H(l_i, b_i)$ has measure zero in $\mathbb{R}^{d}$. The finite union of measure zero sets is also measure zero, and thus the finite union cannot equal $V$.
\end{proof}

We can generalize this lemma to all $\epsilon$-slabs.

\begin{lemma}
\label{lem:slabs}
    Let $V$ be a $d$-dimensional vector space with $d\geq 1$. Let $l_{1},\dots,l_{t}:V\rightarrow \mathbb{R}$ be nonzero linear functionals and let $b_{1},\dots,b_{t}\in\mathbb{R}$. Fix $\epsilon>0$. The finite union of $\epsilon$-slabs as previously defined cannot cover $V$. That is,
    \[
    V \setminus\bigcup_{i=1}^{t}S_{\epsilon}(l_{i},b_{i})\neq \emptyset.
    \]
\end{lemma}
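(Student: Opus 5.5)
The plan is a measure-theoretic argument, the same one used for \Cref{lem:hyper}, applied to a bounded region of $V$. Since $V$ is finite dimensional with a fixed inner product, we identify $V$ with $\mathbb{R}^d$. By the Riesz representation theorem, each nonzero functional $l_i$ is $l_i(v)=\langle a_i, v\rangle$ for some nonzero vector $a_i$.

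The slab $S_\epsilon(l_i,b_i)$ is the region between two parallel hyperplanes orthogonal to $a_i$, at distance $2\epsilon/\norm{a_i}$ apart. Its measure is infinite when $d\geq 2$, so we cannot simply sum measures over all of $V$. Instead we intersect everything with the cube $Q_R=[-R,R]^d$ and compare volumes.

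The key estimate is
\[
\mathrm{vol}\bigl(S_\epsilon(l_i,b_i)\cap Q_R\bigr)\leq \frac{2\epsilon}{\norm{a_i}}\,(2\sqrt{d}R)^{d-1}.
\]
To see this, take an orthonormal basis whose first vector is $a_i/\norm{a_i}$. The cube $Q_R$ lies inside the ball of radius $\sqrt d R$. In the new coordinates the slab constrains the first coordinate to an interval of length $2\epsilon/\norm{a_i}$, and each remaining coordinate ranges over an interval of length at most $2\sqrt d R$. Fubini then gives the bound.

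Summing over $i$, the union of the slabs meets $Q_R$ in volume at most $C\,R^{d-1}$, where $C=\sum_{i=1}^{t}\frac{2\epsilon}{\norm{a_i}}(2\sqrt d)^{d-1}$ does not depend on $R$. The cube itself has volume $(2R)^d$. Choosing $R$ large enough that $(2R)^d > C R^{d-1}$, which is possible because $d\geq 1$, shows that $Q_R$ is not covered by the slabs. So some $v\in Q_R$ lies outside every slab, which is the claim.

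The only delicate step is the uniform $O(R^{d-1})$ bound on each slab's volume inside the cube. The rotation plus Fubini argument above handles it, and the case $d=1$ reduces to a finite union of bounded intervals in $\mathbb{R}$, which trivially cannot cover the line.
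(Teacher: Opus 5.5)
Your proof is correct, but it takes a different route from the paper's. The paper never estimates volumes. It uses the measure-zero fact only for the hyperplanes $\Ker(l_i)$ through the origin, to pick a direction $u$ with $l_i(u)\neq 0$ for every $i$. It then notes that $|s\,l_i(u)-b_i|>\epsilon$ for all $i$ once $|s|>\max_i (|b_i|+\epsilon)/|l_i(u)|$, so the ray $\{su\}$ eventually escapes every slab.

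That argument is shorter and needs no rotations or Fubini. It also shows that almost every line through the origin eventually leaves all slabs. However, the escaping point can lie arbitrarily far out, depending on the offsets $b_i$ and on how small the $|l_i(u)|$ are. This is the ``stretching'' the paper comments on after the lemma.

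Your density comparison gives more. The slabs fill a vanishing fraction of $Q_R$, so a uniformly random point of $Q_R$ avoids them with probability at least $1-C/(2^dR)$. Any $R>\sum_i \frac{\epsilon}{\norm{a_i}}\,d^{(d-1)/2}$ already guarantees an uncovered point with $\norm{v}_\infty\le R$ in your coordinates, and this bound is independent of the $b_i$. That is closer in spirit to the norm-bounded perturbation needed in \Cref{lem:chebyshev}. Two features keep it far from a polynomial bound there: the linear dependence on the number $t$ of slabs (up to $2^n$ in that application), and the crude factor $d^{(d-1)/2}$ from enclosing the ball in a cube.
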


\begin{proof}
    For each $i$ with $l_{i}\not\equiv0, \mathrm{Ker}(l_{i})$ is a hyperplane and therefore has measure $0$ in $\mathbb{R}^{d}$. The finite union $\bigcup_{i=1}^{t}\mathrm{Ker}(l_{i})$ has measure $0$ and so
    \[
    V \setminus \bigcup_{i=1}^{t}\mathrm{Ker}(l_{i})
    \]
    is non-empty. Thus there exists a $u\in V$ with $l_{i}(u)\not= 0$ for every $i$. For each $i$ consider the affine function $\phi_{i}(t)=l_{i}(tu)-b_{i}=tl_{i}(u)-b_{i}$ for the scalar $t$. Then since $l_{i}(u)\not=0$, we have that $|\phi_{i}(t)|\rightarrow \infty$ as $|t|\rightarrow \infty$. This for each $i$ there exists a $T_{i}>0$ such that for all $|t|>T_{i}$ we have $|\phi_{i}(t)|>\epsilon$, and so $tu\notin S_{\epsilon}(l_{i},b_{i})$. If we let $T=\max_{i}T_{i}$, then for all $|t|>T$, vector $tu$ avoids every slab $S_{\epsilon}(l_{i},b_{i})$. Therefore the union of slabs does not cover $V$.
\end{proof}

\Cref{lem:slabs} says that there always exists some perturbation $\Delta$ which can avoid all $\epsilon$-slabs. Thus, it gives the perturbed polynomial $p_{\Delta}(x)$ the necessary margin away from $0$ in order to use $p_{\Delta}(x)$ to define the natural completion.
It is important to emphasize that the vector in \Cref{lem:slabs} exists because we are able to arbitrarily stretch it out enough in order to avoid all $\epsilon$ slabs.
However, if we want to use $\Delta$ to define $p_{\Delta}(x)$, we need to ensure that its norm remains polynomially-related to $\approxdegree(f)$. A visualization of this idea is illustrated in \Cref{fig:pert}. To this end, we give the following lemma.

\begin{figure}[ht]
    \centering
    \input{perturbation_vis}
    \caption{The ball here has radius $R\leq \poly(d)$, and we see two example $\epsilon$-slabs which would correspond to two points $x\notin \domain{f}$. The vector $\Delta$ shown here avoids these $\epsilon$-slabs while simultaneously being bounded in magnitude within the volume of the ball. This is what we will then choose as our perturbation for the polynomial $p_{\Delta}(x)$.}
    \label{fig:pert}
\end{figure}
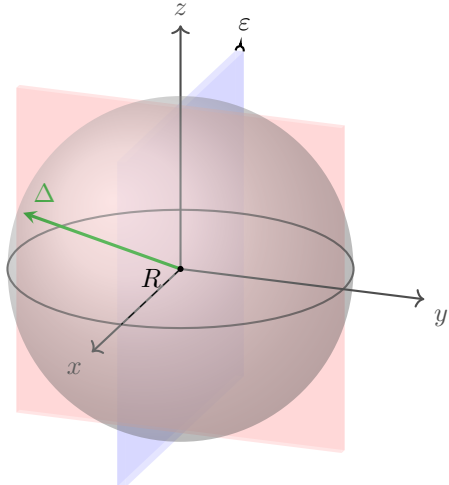

\begin{lemma}\label{lem:chebyshev}
    Consider $f:\domain{f}\rightarrow \{-1,1\}$ and let $p$ be an approximating polynomial of $f$. Suppose there exists some $\Delta \in \Ker(A_D(p))$ such that $\norm{\Delta}_{\infty} \polyleq \deg(p)$ and $\Delta \not\in \bigcup_{x\in \{-1,1\}^n} B_x$. Then $\D(f) \polyeq \Q(f)$.
\end{lemma}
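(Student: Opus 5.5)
The plan is to show that the perturbed polynomial $p_\Delta(x) = p(x) + L_x(\Delta)$ is, after sign-rounding, a low-degree approximating polynomial for a completion of $f$, and then to conclude with \Cref{lemma:general} exactly as in the proof of \Cref{thm:natural_influence}. Since $\Delta$ only perturbs coefficients of characters already present in $p$, we get $\deg(p_\Delta) \leq \deg(p) = d$. Since $\Delta \in \Ker(A_D(p))$, we get $L_x(\Delta) = (A_D(p)\Delta)_x = 0$ for every $x \in \domain{f}$. Hence $p_\Delta(x) = p(x)$ on the domain, so $p_\Delta$ still $\frac13$-approximates $f$ there. Because $\Delta \notin \bigcup_x B_x$, we also have $\abs{p_\Delta(x)} > \epsilon$ for every $x \in \{-1,1\}^n$, where $\epsilon \geq 1/\poly(d)$ as in the surrounding discussion.

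Next I would define the completion $F(x) = \mathrm{sign}(p_\Delta(x))$. This agrees with $f$ on $\domain{f}$ because $\abs{p(x) - f(x)} \leq \frac13$ there. To apply \Cref{lem:chebyshev_sign} we need $p_\Delta$ confined to a bounded interval, so I would bound $\norm{p_\Delta}_\infty$. We have $\abs{L_x(\Delta)} \leq \sum_j \abs{\Delta_j}\abs{\phi_j(x)} \leq s \norm{\Delta}_\infty$, where $s$ is the number of characters in $p$. Together with $\abs{p} \leq 1$ this gives $M := \norm{p_\Delta}_\infty \leq 1 + s\,\norm{\Delta}_\infty$.

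Rescaling to $q = p_\Delta / M$ puts $q$ in $[-1,1]$ with margin $\abs{q(x)} \geq \epsilon / M$. Then $P \circ q$, with $P$ from \Cref{lem:chebyshev_sign} for $\delta = \epsilon/M$ and error $\frac13$, has degree $O(dM/\epsilon)$. After the affine shift to $[0,1]$ it $\frac13$-approximates $F$ everywhere. Therefore $\approxdegree(F) \polyleq d$ provided $M \polyleq d$. \Cref{lemma:general} with $M = \approxdegree$ (polynomially related to $\D$ for total functions) then gives $\D(f) \polyeq \approxdegree(f) \polyleq \Q(f) \polyleq \D(f)$, taking $p$ of degree $\approxdegree(f)$.

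The main obstacle is bounding $M$. The hypothesis controls only $\norm{\Delta}_\infty$, while the number of characters $s$ could be as large as $\binom{n}{\leq d}$, which need not be $\poly(d)$. I would first try to interpret $p$ as the $\approxspar$-achieving polynomial, as in the definition of $A_D(p)$, and assume or argue $s \polyleq d$. Failing that, I would strengthen the norm used, requiring the $\ell_1$ norm of $\Delta$ (equivalently, $\sup_x \abs{L_x(\Delta)}$) to be $\poly(d)$. That suffices verbatim for the argument above, and is what the ``ball of radius $\poly(d)$'' picture in \Cref{fig:pert} suggests. A secondary point is fixing $\epsilon = 1/\poly(d)$ in the definition of $B_x$, which I would make explicit.
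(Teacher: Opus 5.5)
Your argument follows the paper's proof step for step. $\Delta\in\Ker(A_D(p))$ gives $p_\Delta=p$ on $\domain{f}$, and $\Delta\notin\bigcup_x B_x$ gives the margin. Composing a rescaled $p_\Delta$ with the sign polynomial of \Cref{lem:chebyshev_sign} approximates $F=\mathrm{sign}(p_\Delta)$, and \Cref{lemma:general} finishes. The only difference is the normalization. The paper divides by $\norm{\Delta}_\infty$ and asserts $\abs{p_\Delta(x)}/\norm{\Delta}_\infty\in[\delta,2]$ for all $x$. You divide by $\norm{p_\Delta}_\infty\le 1+\sup_x\abs{L_x(\Delta)}$.

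The obstacle you flag is genuine, and the paper does not get around it: its bound $\abs{p_\Delta(x)}\le 2\norm{\Delta}_\infty$ is asserted without justification. It already fails on $\domain{f}$ whenever $\norm{\Delta}_\infty<1/3$, because there $\abs{p_\Delta(x)}=\abs{p(x)}\ge 2/3$; for $\Delta=0$ the normalization is undefined. More seriously, $\norm{\Delta}_\infty\polyleq d$ bounds $\abs{L_x(\Delta)}$ only by $s\norm{\Delta}_\infty$.

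So your proposal does not prove the statement verbatim, but neither does the paper's proof. Your argument becomes a complete and correct proof once three conditions are made explicit:
\begin{itemize}
\item $\sup_x\abs{L_x(\Delta)}\polyleq d$ (e.g.\ $\norm{\Delta}_1\polyleq d$);
\item $\epsilon\ge 1/\poly(d)$;
\item $\deg(p)\polyleq\approxdegree(f)$.
\end{itemize}
The paper also uses the last two implicitly. The degree condition cannot be dropped. For an arbitrary approximating $p$, the exact $\pm1$-valued representation of the completion $F_D$ with $\Delta=0$ satisfies every hypothesis for Deutsch--Jozsa, where the conclusion fails.

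Of your two repairs, only the second works in general. Arguing $s\polyleq d$ already fails for Deutsch--Jozsa. There any approximating polynomial must depend on more than $n/2$ variables. Otherwise it could not separate $1^n$ from a balanced string that agrees with $1^n$ on those variables. Hence $s>n/(2d)$, even though $d=O(1)$.
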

\begin{proof}
    Let $M=\norm{\Delta}_{\infty}$ and $\gamma$ be the largest constant such that for all $x\in\{-1,1\}^n$, $\abs{p_{\Delta}(x)} \geq \gamma$. As $\Delta \not\in \bigcup_{x\in \{-1,1\}^n} B_x$, we may assume that $\gamma \polyeq \deg(p)^{-1}$. Furthermore, let $P: \mathbb{R}\rightarrow \mathbb{R}$ be the polynomial in \Cref{lem:chebyshev_sign} where $\epsilon = \frac{1}{3}$, $\delta = \frac{\gamma}{M}$ and $\deg(P) = \poly(\deg(p))$. Define the approximating polynomial $\tilde{p}(x)$ as,
    \begin{align*}
        \tilde{p}(x) \coloneqq P\left(\frac{p_{\Delta}(x)}{\norm{\Delta}_{\infty}}\right).
    \end{align*}
    Observe that for all $x\in\{-1,1\}^n$, since $\abs{p_{\Delta}(x)}\geq \gamma$, we have that $\abs{\frac{p_{\Delta}(x)}{M}} \in [\delta, 2]$. Therefore, by \Cref{lem:chebyshev_sign} $\abs{\tilde{p}(x)} \in [\frac{2}{3}, 1]$. Furthermore, since $\Delta\in \Ker(A_D(p))$, we find that $\textup{sign}(p(x)) = \textup{sign}(\tilde{p}(x))$. Letting $F(x) = \textup{sign}(\tilde{p}(x))$ be a natural completion of $f$, as $\abs{p_{\Delta}(x)}\geq \gamma$, we have that $\tilde{p}(x)$ is an approximating polynomial of $F$ within error $\frac{1}{3}$. Furthermore, $\deg(\tilde{p}) = \deg(P)\cdot \deg(p) \polyeq \deg(p)$. Therefore, by \Cref{lem:degree_lower_bound} and \Cref{lemma:general}, $\D(f)\polyeq \Q(f)$.
\end{proof}

In \Cref{lem:chebyshev}, we assumed that $\Delta \in \Ker(A_D(p))$. This way we ensure that for $x\in\domain{f}$, $p(x)=p_{\Delta}(x)$. However, in order to ensure that $\tilde{p}(x)$ is a valid approximating polynomial for some natural completion $F$, we may replace the requirement that $\Delta \in \Ker(A_D(p))$ with the relaxed version that $\textup{sign}(p(x)) = \textup{sign}(p_{\Delta}(x))$. It is also useful to note that in our discussion thus far, we have only considered the case where we perturb the original Fourier coefficients for a particular approximating polynomial $p(x)$ for a partial function $f$. However, one can easily generalize this thinking to the case where we pick some cutoff level $d'=\poly(\approxdegree(f))$ and add back all $\sum_{i=1}^{d'}\binom{n}{d'}$ basis vectors $\chi(x)$ of degree at most $d'$ with Fourier coefficients of $0$, and then continue looking for a perturbation in the way we have described above.  We conjecture that this relaxation on $\Delta$ exactly characterizes when the natural completion of $f$ is possible.

\begin{conjecture}
    A partial function $f: \domain{f} \rightarrow \{-1,1\}$ with an approximating polynomial $p(x)$ of degree $d$ admits the natural completion $F$ (for approximate degree) if and only if there exists a perturbation $\Delta$ such that $p_{\Delta}(x)$ is a degree $\poly(d)$ approximating polynomial for $F$.
\end{conjecture}

Observe that per our discussion above, the second condition implies the first. We note that none of the ideas discuss or techniques used in this section were particularly constrained to multilinear polynomials and the ideas and techniques presented in this section may be of independent interest to fields in pure mathematics such as approximation and perturbation theory. The matrix $A_{D}(p)$ was partially formed from the basis functions of the Fourier representation of Boolean functions, but any general linear map could be considered and the same techniques from all the above lemmas would apply. Notably, the techniques in this section are immediately applicable to univariate polynomials as well as any polynomials in $n$ indeterminates $\mathbb{R}[x_{1},\dots,x_{n}]$ and the univariate case in particular may be of independent interest in the field of approximation theory with respect to polynomial approximation as constraint satisfaction problems.

\subsubsection{Hardness of Perturbing}

We now prove some hardness results for the previously discussed problem of finding a well behaved perturbation which allows us to define the natural completion with respect to a polynomial $p$ while maintaining only a polynomial blowup in the degree of $p$. We begin by defining the \emph{Perturbation Finding} (\pf) problem.

\begin{problem}[\pf]\label{prob:pf}
    Suppose we are given multilinear polynomials $p,q_{1},\dots,q_{n}: \{-1,1\}^{t}\rightarrow \mathbb{R}$ which may be described efficiently\footnote{By efficiently, we mean $\poly(2^t)$.} and constants $\epsilon, B > 0$. The $\pf(\epsilon, B)$ problem asks whether there exists some efficiently-describable vector $\Delta \in \mathbb{R}^n$ such that for all $x\in \{-1,1\}^t$,
    \begin{align*}
        \abs*{p(x) + \sum_{j=1}^{n}\Delta_{j}q_{j}(x)}&\geq \epsilon,\\
        \norm{\Delta}_{\infty}&\leq B.
    \end{align*}
    We let $\pf = \pf(1,1)$.
\end{problem}

\begin{observation}
    For any positive $(\epsilon,B)$ which may be described efficiently, $\pf(1,1)$ is polynomial-time equivalent to $\pf(\epsilon, B)$.
\end{observation}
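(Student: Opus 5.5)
The plan is to give explicit polynomial-time reductions in both directions. Both rest on one fact: the constraint system is homogeneous under simultaneous scaling of $p$, the $q_j$, and the thresholds. Write an instance of $\pf(\epsilon,B)$ as $(p,q_1,\dots,q_n)$. A vector $\Delta$ satisfies
\[
\Bigl|p(x)+\sum_{j}\Delta_j q_j(x)\Bigr|\geq \epsilon \quad\text{and}\quad \norm{\Delta}_\infty\leq B
\]
if and only if $\Delta' = \Delta/B$ satisfies $\norm{\Delta'}_\infty\le 1$ together with
\[
\Bigl|\tfrac{1}{\epsilon}p(x)+\sum_j \Delta'_j\,\tfrac{B}{\epsilon}q_j(x)\Bigr|\ge 1 .
\]
To check this, substitute $\Delta_j = B\Delta'_j$ and divide the first inequality by $\epsilon>0$.

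\emph{Reduction from $\pf(\epsilon,B)$ to $\pf(1,1)$.} The map sends $(p,q_1,\dots,q_n)$ to $(p',q'_1,\dots,q'_n)$ with $p'=p/\epsilon$ and $q'_j = (B/\epsilon)\,q_j$. These are still multilinear polynomials on $\{-1,1\}^t$. Each Fourier coefficient is multiplied by a fixed efficiently describable rational (or algebraic) constant, so the new description has size polynomial in the old one, and the map is computable in polynomial time. By the equivalence above, yes-instances map to yes-instances and no-instances to no-instances. The witness condition transfers too: $\Delta$ is efficiently describable if and only if $\Delta/B$ is, since scaling by $B$ or $1/B$ only adds the description length of $B$.

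\emph{Reduction from $\pf(1,1)$ to $\pf(\epsilon,B)$.} This is the inverse map, $p\mapsto \epsilon p$ and $q_j\mapsto (\epsilon/B)q_j$, with witness $\Delta\mapsto B\Delta$. The same calculation, run backwards, shows that this map also preserves membership.

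The only point needing care is the phrase ``efficiently-describable'': one has to ensure that scaling by $\epsilon$ and $B$ keeps both the polynomials and the witnesses within the $\poly(2^t)$ description bound. I expect this to be the main (if mild) obstacle. I would handle it by using the hypothesis that $\epsilon$ and $B$ are themselves efficiently describable, for instance rationals of polynomial bit length. Under that hypothesis, multiplying every coefficient by $\epsilon^{\pm1}$ and $B^{\pm1}$ increases bit length only additively by $O(\mathrm{size}(\epsilon)+\mathrm{size}(B))$ per coefficient.
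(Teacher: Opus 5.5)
Your proposal is correct and follows essentially the same route as the paper. The paper maps an instance of $\pf(1,1)$ to $\bigl(\epsilon p, (\epsilon/B) q_1, \dots, (\epsilon/B) q_n\bigr)$ with witness $\Delta' = B\Delta$, and notes that $p'(x)+\sum_j \Delta'_j q'_j(x) = \epsilon\bigl(p(x)+\sum_j \Delta_j q_j(x)\bigr)$; this is exactly your second reduction, while you also spell out the inverse direction and the description-length bookkeeping that the paper leaves implicit.
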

\begin{proof}
    Given an instance $A = (p,q_1, \dots q_n)$ of \pf, the following input $A^\prime = (p^\prime, q^\prime_1,\dots q^\prime_n)$ is an instance of $\pf(\epsilon, B)$,
    \begin{align*}
        p^\prime(x) &= \epsilon\cdot p(x),\\
        q_i^\prime(x) &= \frac{\epsilon}{B}\cdot q_i(x).
    \end{align*}
    Furthermore, we claim that $\Delta$ solves $A$ if and only if $\Delta^\prime = B \Delta$ solves $A^\prime$. First, $\norm{\Delta}_{\infty}\leq 1$ if and only if $\norm{\Delta^\prime}_{\infty}\leq B$. Secondly, notice that,
    \begin{align*}
        p^\prime(x) + \sum_{j=1}^{n}\Delta_{j}^\prime q^\prime_{j}(x) &= p(x)\epsilon + \sum_{j=1}^{n} B\Delta_{j} \cdot \frac{\epsilon}{B} q_{j}(x)\\
        &=\epsilon \left(p(x) + \sum_{j=1}^{n} \Delta_{j} \cdot q_{j}(x)\right).
    \end{align*}
    Therefore, the $\epsilon$-condition holds in $A$ if and only if it holds in $A^\prime$.
\end{proof}

\begin{claim}
\label{clm:hardness_ab}
    Let $m=\poly(n)$ and $N=n+m$. Let $A\in \mathbb{R}^{N\times n}$ and $b\in\mathbb{R}^N$ be inputs which may be described using $\poly(n)$. We refer to the problem of deciding whether there exists a vector $\Delta\in \mathbb{R}^n$ such that for every $r\in [N]$,
    \begin{align*}
        \abs{(A \Delta + b)_r} &\geq 1 \\
        \norm{\Delta}_{\infty} &\leq 1
    \end{align*}
    as Linear Margin Avoidance $(\lma)$. \lma is \NP-hard.
\end{claim}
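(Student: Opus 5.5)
We reduce from 3-SAT. Let $\varphi$ be a 3-CNF formula over variables $y_1,\dots,y_n$ with $m=\poly(n)$ clauses $C_1,\dots,C_m$. The idea is to let each coordinate $\Delta_i$ encode a truth value via its sign, forced to be close to $\pm 1$, and then to encode each clause as a single linear row whose margin condition holds only when some literal is satisfied. The $N=n+m$ rows split accordingly: $n$ ``variable rows'' and $m$ ``clause rows''.

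\textbf{Variable rows.} For each $i\in[n]$ we add the row $|\Delta_i|\ge 1$, i.e.\ $A$ has row $e_i^T$ and $b_i=0$. Combined with $\norm{\Delta}_\infty\le 1$, this forces $\Delta\in\{-1,1\}^n$. So every feasible $\Delta$ is a Boolean point, and we interpret $\Delta_i=1$ as $y_i$ true and $\Delta_i=-1$ as $y_i$ false.

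\textbf{Clause rows.} For a clause $C_r=(\ell_1\vee\ell_2\vee\ell_3)$, write each literal as $s_k\Delta_{i_k}$, with $s_k=+1$ for a positive literal and $s_k=-1$ for a negated one. On Boolean $\Delta$, the sum $S=\sum_k s_k\Delta_{i_k}$ lies in $\{-3,-1,1,3\}$, and $S=-3$ exactly when the clause is violated. We need a single affine form $a\cdot\Delta+b$ whose absolute value is at least $1$ iff $S\ge -1$.

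A plain shift does not work. The row $S+2$ takes values $-1,1,3,5$, all of absolute value at least $1$, so it excludes nothing. The row $S+3$ takes values $0,2,4,6$, and it has absolute value below $1$ exactly when $S=-3$. So the clause row is $(A\Delta+b)_r=\sum_k s_k\Delta_{i_k}+3$. This is \emph{violated} (value $0$) precisely when the clause is falsified, and satisfied otherwise.

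If a variable repeats within a clause, we take the clause with its literal multiset, or first preprocess $\varphi$ to have three distinct variables per clause, which is standard. If a clause contains both $y_i$ and $\neg y_i$, it is always satisfied and can be dropped.

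\textbf{Correctness and size.} If $\varphi$ is satisfiable, setting $\Delta_i=\pm1$ according to the satisfying assignment meets every row. Conversely, any feasible $\Delta$ is Boolean by the variable rows, and the clause rows then say that every clause has a true literal. All entries of $A$ and $b$ lie in $\{0,\pm1,3\}$, so the instance has size $\poly(n)$ and the reduction is polynomial time. Hence \lma is \NP-hard.

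The main obstacle is the clause gadget: the margin constraint must exclude exactly one of the four Boolean values of $S$ with a single affine row. The key point is that the single excluded point $S=-3$ is mapped to $0$, and every other value has gap at least $2$ from it. Everything else is routine bookkeeping.
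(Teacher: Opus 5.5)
Your proposal is correct and follows the paper's proof essentially verbatim. It uses the same 3-SAT reduction: variable rows $e_i^T$ with $b_i=0$ force $\Delta\in\{-1,1\}^n$, and clause rows $\sum_k s_k\Delta_{i_k}+3$ vanish exactly on falsified clauses and are at least $2$ otherwise. Your treatment of repeated or complementary literals adds a little care that the paper omits, but it does not change the argument.
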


\begin{proof}
    We proceed via a reduction from \sat. Let $\Phi = C_1 \land \cdots \land C_m$ be a $3$-CNF formula over literals $u_1,\dots u_n$. We define $A$ and $b$ where $N = m+n$ as follows. For literal $i\in [n]$, we define the first $n$ rows of $A$ as follows: for row $i\in [n]$ of $A$ set it to be $e_i$ and set $b_i = 0$, meaning the condition of row $i$ becomes $\abs{\Delta_i} \geq 1$. Therefore, as $\norm{\Delta}_{\infty} \leq 1$, we enforce $\Delta_i \in \{-1,1\}$.

    Next, for $j \in [m]$, each $n+j$ row in $A$ and $b$ represents the conditions for each clause $C_j$. Specifically, over the literals $u_i$, the row $a_{n+j}$ is defined as,
    \begin{align*}
        (a_{n+j})_i = \begin{cases}
        1 &\text{if $u_i\in C$}\\
        -1 &\text{if $\neg u_i\in C$}\\
        0 &\text{otherwise.}
        \end{cases}
    \end{align*}
    Furthermore, for all $j \in [m]$, $b_{n+j} = 3$. Thus the clause constraint at row $n+j$ becomes,
    \begin{align}
        \abs{a_{n+j}\Delta + 3 } \geq 1.\label{eq:clause_constraint}
    \end{align}
    If $C_j$ is unsatisfied, we have that $a_{n+j}\Delta = -3$ and $a_{n+j}\Delta \geq -2$ otherwise, meaning that \Cref{eq:clause_constraint} is satisfied if and only if $C_j$ is satisfied. Thus, every satisfying assignment of $\Phi$ corresponds to a feasible $\Delta\in \mathbb{R}^n$ and vice-versa, completing the proof.
\end{proof}

\begin{theorem}
    \pf is \NP-complete.
\end{theorem}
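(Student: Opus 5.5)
We prove membership in \NP and \NP-hardness separately, and the hardness half carries most of the work.

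For membership, the certificate is the vector $\Delta$ itself. The problem statement restricts $\Delta$ to efficiently-describable vectors, meaning rational entries of size $\poly(2^t)$. The verifier checks $\norm{\Delta}_{\infty}\leq 1$ and then evaluates $p(x)+\sum_j \Delta_j q_j(x)$ at every $x\in\{-1,1\}^t$. There are $2^t$ such points, and each evaluation costs $\poly(2^t,n)$ using the efficient descriptions of $p$ and the $q_j$. The input size is itself $\poly(2^t)$, so the whole check runs in time polynomial in the input length.

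For hardness, we reduce \lma to \pf and then invoke \Cref{clm:hardness_ab}. Take an \lma instance $(A,b)$ with $N=n+m$ rows, and pick $t=\lceil \log_2 N\rceil$, so that $2^t\le 2N$. Fix an injection $r\mapsto x^{(r)}$ from $[N]$ into $\{-1,1\}^t$. Every function on $\{-1,1\}^t$ has a unique multilinear (Fourier) expansion, so we may define $p$ and each $q_j$ by their values at points. On the image of the injection we set
\[
p(x^{(r)})=b_r,\qquad q_j(x^{(r)})=A_{r,j}.
\]
Each polynomial's Fourier expansion has $2^t=O(N)$ coefficients, and these are computable in $\poly(N)$ time, so the description is efficient. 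At the points of the image, $\abs{p(x^{(r)})+\sum_j\Delta_jq_j(x^{(r)})}\geq 1$ is then exactly row $r$ of the \lma constraints, and the bound $\norm{\Delta}_\infty\le 1$ carries over unchanged.

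The main obstacle is the $2^t-N$ points outside the image, because these must not add new constraints. For each such $x$ we set $p(x)=3$ and $q_j(x)=0$ for all $j$. Then $\abs{p(x)+\sum_j\Delta_j q_j(x)}=3\geq 1$ for every $\Delta$, so these points are satisfied automatically. With this padding, a $\Delta$ solves the \pf instance if and only if it solves the \lma instance.

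One detail remains in the direction from \lma to \pf: the solution $\Delta$ must be efficiently describable. For the instances produced in \Cref{clm:hardness_ab} this holds directly, since any feasible $\Delta$ has entries in $\{-1,1\}$. We therefore apply the reduction to those instances, which suffices for \NP-hardness. Combined with membership, this shows that \pf is \NP-complete.
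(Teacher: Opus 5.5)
Your proof is correct and follows the paper's argument. Membership in \NP{} is shown by guessing $\Delta$ and checking all $2^t$ points. Hardness comes from embedding the $N$ rows of the \lma instance from \Cref{clm:hardness_ab} into $\{-1,1\}^t$ with $t=\lceil\log N\rceil$, and padding the unused points with a constant value of $p$ and $q_j=0$ (the paper uses $2$, you use $3$). Your extra remark that the \sat-derived \lma instances force $\Delta\in\{-1,1\}^n$, so the required $\Delta$ is efficiently describable, fills in a detail the paper leaves implicit.
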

\begin{proof}
    Given a fixed $\Delta$, let $p_{\Delta}(x) =p(x) + \sum_{j=1}^{n}\Delta_{j}q_{j}(x)$.
    First, notice that the input can be described in $\poly(n, 2^t)$ characters. Therefore, we may check a yes-instance by receiving a vector $\Delta$ and check it for all $2^t$ inputs $x$, each requiring $\poly(2^t)$ time. Therefore, $\pf\in\NP$. 
    
    We now reduce \lma to \pf, which implies that \pf is \NP-hard. Let $t = \lceil \log N\rceil$ and define the subset of inputs $Z = \{z_r\}_{r\in [N]} \subseteq \{-1,1\}^t$. We note that $Z$ depends on $N$ and if $N$ is a power of $2$, then we can simply use this to cover all of the points, but in the case that it is not, we may define these extra points. We define the polynomials $p, q_1,\dots q_n$ as follows,
    \begin{align*}
        q_j(x) &= \begin{cases}
            A_{r,j} &\text{if $x=z_r\in Z$}\\
            0 &x\notin Z,
        \end{cases}\\
        p(x) &= \begin{cases}
            b_r &\text{if $x=z_r\in Z$}\\
            2 &x\notin Z.
        \end{cases}
    \end{align*}
    Therefore, $p_{\Delta}(x)$ becomes,
    \begin{align*}
        p_{\Delta}(x)= \begin{cases}
            (A\Delta + b)_r &\text{if $x=z_r\in Z$}\\
            2 &x\notin Z.
        \end{cases}
    \end{align*}
    Therefore, we get that $\abs{p_{\Delta}(x)} \geq 1$ for all $x\in \{-1,1\}^t$ if and only if $\abs{(A\Delta + b)_r} \geq 1$ for all $r\in [N]$. As $\Delta$ has the same condition $\norm{\Delta}_{\infty} \leq 1$ in both instances, we have reduced the problem in \Cref{clm:hardness_ab} to \pf.
\end{proof}

The hardness of the \pf problem shows the difficulty of finding a well behaved perturbation that allows us to find a completion in general. One may therefore inquire about the hardness of any completion, whether it be natural, na\"ive, or any other completion one could put forth. We note here that if we loosen one of the restrictions for finding a valid completion, this problem can be connected to a well-known conjecture in complexity theory. Namely, thus far in our discussion, we have required that all of our completions satisfy the condition that if $F$ is a completion of some partial function $f$, then $F(x)=f(x)$ for all $x$ in the domain of $F$. However, we may relax this to the case where we only require to match the partial function $f$ on a subset of inputs in its domain. Specifically, one may consider the case where given some polynomial $p(x)$ where $|p(x)|\geq 1/\poly(\approxdegree(f))$, that $\textup{sign}(p_{\Delta}(x)) = f(x)$ on a $(1-\delta)$-fraction of $x\in\domain{f}$. This is closely related to the formulation of the Aaronson-Ambainis conjecture~\cite{Need_For_Structure}. Let us recall the conjecture for convenience.

\begin{conjecture}[Aaronson-Ambainis conjecture; Conjecture $7.1$ in~\cite{Need_For_Structure}]
    Let $S\subseteq \{-1,1\}^{N}$ with $|S|\geq c2^{N}$, and let $f:S \rightarrow \{-1,1\}$. Then there exists a deterministic classical algorithm that makes $\poly(\Q(f),1/\alpha,1/c)$ queries, and that computes $f(X)$ on at least a $1-\alpha$ fraction of $X\in S$.
\end{conjecture}

In our completion framework the equivalent approximate degree question is as follows.

\begin{conjecture}
    Let $f:\domain{f} \rightarrow \{-1,1\}$ be a partial function with approximate degree $d$ where $|\domain{f}|\geq\epsilon2^{n}$. Then there exists a completion $F$\footnote{Technically, the completion is from a subfunction of $f$ and not the original $f$ itself.} of $f$ where $F(x)=f(x)$ on $(1-\delta)|\domain{f}|$ many inputs $x\in \domain{f}$ and set arbitrarily otherwise, with $\approxdegree(F)=\poly(d,1/\epsilon,1/\delta)$.
\end{conjecture}

We note that our condition is slightly different in the sense that we consider the relationship between $\D(f)$ and $\approxdegree(f)$ as opposed to $\D(f)$ versus $\Q(f)$.

%% file: perturbation_vis.tex
\tdplotsetmaincoords{70}{110}

\begin{tikzpicture}[tdplot_main_coords, scale=1.15]

\def\sphereradius{2}
\def\epsilon{0.1}

\fill[blue!20, opacity=0.5] 
    ({-\sphereradius},{-\epsilon/2},{-\sphereradius}) -- 
    ({\sphereradius},{-\epsilon/2},{-\sphereradius}) -- 
    ({\sphereradius},{-\epsilon/2},{\sphereradius}) -- 
    ({-\sphereradius},{-\epsilon/2},{\sphereradius}) -- cycle;

\fill[blue!20, opacity=0.5] 
    ({-\sphereradius},{\epsilon/2},{-\sphereradius}) -- 
    ({\sphereradius},{\epsilon/2},{-\sphereradius}) -- 
    ({\sphereradius},{\epsilon/2},{\sphereradius}) -- 
    ({-\sphereradius},{\epsilon/2},{\sphereradius}) -- cycle;

\fill[blue!20, opacity=0.5] 
    ({\sphereradius},{-\epsilon/2},{-\sphereradius}) -- 
    ({\sphereradius},{\epsilon/2},{-\sphereradius}) -- 
    ({\sphereradius},{\epsilon/2},{\sphereradius}) -- 
    ({\sphereradius},{-\epsilon/2},{\sphereradius}) -- cycle;

\fill[blue!20, opacity=0.5] 
    ({-\sphereradius},{-\epsilon/2},{-\sphereradius}) -- 
    ({-\sphereradius},{\epsilon/2},{-\sphereradius}) -- 
    ({-\sphereradius},{\epsilon/2},{\sphereradius}) -- 
    ({-\sphereradius},{-\epsilon/2},{\sphereradius}) -- cycle;

\fill[red!20, opacity=0.5] 
    ({-\epsilon/2},{-\sphereradius},{-\sphereradius}) -- 
    ({-\epsilon/2},{\sphereradius},{-\sphereradius}) -- 
    ({-\epsilon/2},{\sphereradius},{\sphereradius}) -- 
    ({-\epsilon/2},{-\sphereradius},{\sphereradius}) -- cycle;

\fill[red!20, opacity=0.5] 
    ({\epsilon/2},{-\sphereradius},{-\sphereradius}) -- 
    ({\epsilon/2},{\sphereradius},{-\sphereradius}) -- 
    ({\epsilon/2},{\sphereradius},{\sphereradius}) -- 
    ({\epsilon/2},{-\sphereradius},{\sphereradius}) -- cycle;

\fill[red!20, opacity=0.5] 
    ({-\epsilon/2},{\sphereradius},{-\sphereradius}) -- 
    ({\epsilon/2},{\sphereradius},{-\sphereradius}) -- 
    ({\epsilon/2},{\sphereradius},{\sphereradius}) -- 
    ({-\epsilon/2},{\sphereradius},{\sphereradius}) -- cycle;

\fill[red!20, opacity=0.5] 
    ({-\epsilon/2},{-\sphereradius},{-\sphereradius}) -- 
    ({\epsilon/2},{-\sphereradius},{-\sphereradius}) -- 
    ({\epsilon/2},{-\sphereradius},{\sphereradius}) -- 
    ({-\epsilon/2},{-\sphereradius},{\sphereradius}) -- cycle;


\draw[black!80, thick, opacity=0.7] (0,0,0) circle (\sphereradius);

\draw[thick,->, black!70] (0,0,0) -- (3,0,0) node[anchor=north east]{$x$};
\draw[thick,->, black!70] (0,0,0) -- (0,3,0) node[anchor=north west]{$y$};
\draw[thick,->, black!70] (0,0,0) -- (0,0,3) node[anchor=south]{$z$};

\draw[-stealth, very thick, green!60!black] (0,0,0) -- (1.42,-1.42,1) node[anchor=south west]{$\Delta$};

\draw[thick, decoration={brace, amplitude=3pt}, decorate] 
    ({-\sphereradius},{-\epsilon/2},{\sphereradius}) -- ({-\sphereradius},{\epsilon/2},{\sphereradius}) 
    node[midway, above=10pt, left=-8pt]{$\varepsilon$};

      \shade[ball color = gray!40, opacity = 0.4] (0,0) circle (2cm);
  \fill[fill=black] (0,0) circle (1pt);
  \draw[dashed] (0,0 ) -- node[above]{$R$} (2,0);

\end{tikzpicture}

%% file: appendix.tex
\section{Improvements on sculpting results}\label{section:sculpting_results}

\begin{definition}[Fractional Certificate Complexity] For any (partial) Boolean function $f$,
\[
\begin{aligned}
     \FC(f) = & \max_x \min_w \sum_{i \in [n]} w(x,i)  \\
    &\textit{s.t.} \sum_{i : x_i \neq y_i} w(x,i) \geq 1 : \forall y \in \Dom(f) \text{ s.t. } f(x) \neq f(y).\\
    & \textit{and  } w(x,i) \geq 0 & \forall x \in \Dom(f), \forall i \in [n].
\end{aligned}
\]
\end{definition}

Here we present a tighter bound than Theorem 3 of~\cite{aaronson2015sculptingquantumspeedups} which replaces the $\approxdegree(f)$ by $\Q(f)$.

\begin{lemma}\label{lemma:appendix}
    For every (possible partial) Boolean function $f$, $\D(f) = O(\approxdegree (f)^2 \cdot \log^2 |\dom(f)| )$.
\end{lemma}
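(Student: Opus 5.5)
The plan is to adapt the Aaronson--Ambainis-style argument from \cite{aaronson2015sculptingquantumspeedups}: a deterministic algorithm that repeatedly queries a small set of variables chosen so that, whatever the answers, a constant fraction of the surviving domain is eliminated. Concretely, we maintain the set $S\subseteq\dom(f)$ of domain inputs consistent with the answers so far, starting from $S=\dom(f)$. We stop when $f$ is constant on $S$. If each phase costs $O(\approxdegree(f)^2\log|\dom(f)|)$ queries and shrinks $|S|$ by a constant factor, then after $O(\log|\dom(f)|)$ phases the total cost is $O(\approxdegree(f)^2\log^2|\dom(f)|)$.

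To build a phase, consider the restriction $f|_S$; restricting the domain does not increase approximate degree, so $\approxdegree(f|_S)\le\approxdegree(f)$. Assume $f|_S$ is nonconstant, and without loss of generality that at most half of $S$ lies in $f^{-1}(1)$ (otherwise swap the roles of $0$ and $1$). Take the majority-type partial assignment: for each $i$, let $m_i$ be the majority value of $x_i$ over $x\in S$, and let $m\in\{0,1\}^n$ be the string of these majority bits. The goal is a set $I$ of size $O(\approxdegree(f)^2\log|\dom(f)|)$ such that querying $I$ on any input either reveals a disagreement with $m$ on a coordinate where the minority holds a constant fraction, or pins down the answer.

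The key combinatorial tool is the fractional-certificate / block-sensitivity bound. I would use $\FC(f|_S)\le O(\bs(f|_S))$, or at least $\FC\le O(\approxdegree^2)$, via \Cref{lem:bs_leq_approxdeg} together with the standard LP-duality relation between $\FC$ and fractional block sensitivity; the latter is $O(\approxdegree^2)$ by the same symmetrization argument. Take an optimal fractional certificate $w$ for a suitably chosen point or distribution over $S$ (a minimax version: a weight vector $w$ with $\sum_i w_i=O(\approxdegree^2)$ that hits every conflicting pair with weight $\ge1$ on average). Sample each coordinate $i$ independently with probability $\min(1,\,c\,w_i\log|\dom(f)|)$. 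A union bound over the at most $|\dom(f)|$ relevant inputs shows that some fixed set $I$ of size $O(\approxdegree^2\log|\dom(f)|)$ intersects the disagreement set $\{i:x_i\ne y_i\}$ for every required pair. Querying $I$ then either certifies the value of $f$ on all of $S$ or exposes a coordinate whose observed value eliminates at least a constant fraction of $S$.

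The main obstacle is converting ``hits every disagreement set'' into ``eliminates a constant fraction of $S$''. I would first try the potential argument of \cite{aaronson2015sculptingquantumspeedups}: choose the fractional certificate for the majority string relative to the minority label class. Then every observed disagreement with $m$ falls on a coordinate whose queried value is a minority value, which by itself removes at least half of $S$ only if the coordinate is chosen carefully. If that fails, I would fall back to a weighted version, replacing $\log|\dom(f)|$ phases with halving of $|S|$ measured by the potential $\log|S|$.
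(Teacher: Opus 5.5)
Your skeleton is the paper's: at most $\log_2|\dom(f)|$ halving rounds (its $\mathrm{Bal}(f)\le\log|\dom(f)|$), each querying a set $I$ obtained by rounding a fractional certificate with an $O(\log|\dom(f)|)$ loss (its $\certificate{f_{a,b}}\le\RC\cdot\log|\dom(f)|$), with $\fbs=\FC=O(\approxdegree^2)$ supplying the weights. However, the step that carries the argument is missing. A round needs weights $w\ge 0$ with $\sum_i w_i=O(\approxdegree(f)^2)$ and $\sum_{i:\,y_i\neq m_i} w_i\ge 1$ for every $y$ in a target class $B\subseteq S$. That is a fractional certificate \emph{at the majority string $m$}, which is generally not in $\dom(f)$. $\FC(f|_S)$, at any point of $S$ or averaged over a distribution on $S$, says nothing about $m$, and ``hits every conflicting pair on average'' does not give this. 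No argument using only $\FC$, $\fbs$ or $\RC$ of $f$ can work. Take $\dom(f)=\{e_1,\dots,e_n\}$ with $f(e_i)=1$ iff $i\le n/2$. Then $\FC(f)=\fbs(f)=1$ (weight $1$ on coordinate $i$ certifies $e_i$) and $|\dom(f)|=n$, yet $\D(f)\ge n/2$. The same example shows the proof must use the approximating polynomial \emph{off} the domain: $\sum_{i\le n/2}x_i$ agrees with $f$ on $\dom(f)$ and has degree $1$, and only boundedness on all of $\{0,1\}^n$ excludes it. Two further errors. First, choosing $B$ as the smaller label class is wrong: for $S=\{0^n,e_1,\dots,e_n\}$ with $f(0^n)=1$ and $f(e_i)=0$, the majority string $0^n$ lies in the minority class and cannot be separated from it. Second, the obstacle you single out is not one. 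An input disagreeing with $m$ at any coordinate has the minority value there, so at most $|S|/2$ candidates survive automatically.

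The missing idea is to evaluate a bounded approximating polynomial $p$ of degree $d=\approxdegree(f)$ at $m$. Let $b\in\{0,1\}$ be the value nearest to $p(m)$, and let $B=S\cap f^{-1}(1-b)$. Then $|p(y)-p(m)|\ge 1/6$ for every $y\in B$. Composing $p$ with a constant-degree univariate amplifier therefore gives a degree-$O(d)$ bounded $\tfrac13$-approximation of the partial function $g$ with $\dom(g)=\{m\}\cup B$, $g(m)=b$ and $g\equiv 1-b$ on $B$. LP duality and the cited bound, applied to $g$ rather than $f$, give $\FC(g,m)=\fbs(g,m)=O(\approxdegree(g)^2)=O(d^2)$, which is exactly the weight vector required. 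Your rounding then produces $I$ with $|I|=O(d^2\log|\dom(f)|)$. After querying $I$, either the input agrees with $m$ on $I$, so it lies outside $B$ and has $f$-value $b$, or $S$ is halved. This is also how the paper's chain has to be read. It states the bound on $\certificate{f_{a,b}}$ through $\RC(f)$ and $\fbs(f)$, but by the example above that is valid only with these measures taken for the distinguishing function $f_{a,b}$, whose domain contains the out-of-domain string $a$.
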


\begin{proof}
From~\cite{aaronson2015sculptingquantumspeedups}, $\D(f) = O(\D(f_{a,b}) \cdot \text{Bal}(f))$. Here $f_{a,b}$ is the function that distinguishes an input $a$ from domain $b$. This argument is similar to Theorem 10~\cite{aaronson2015sculptingquantumspeedups}. 

Now $\D(f_{a,b}) = \certificate{f_{a,b}}$. (from the observation of the proof of Theorem 3 on page 21 in~\cite{aaronson2015sculptingquantumspeedups} that certificate complexity of distinguishing an input matches the deterministic query complexity).

We need to show a bound on $\certificate{f_{a,b}}$. Notice that $\certificate{f_{a,b}} \leq \RC(f,x) \cdot \log |\dom(f)|$ follows from the Lemma 21 of~\cite{aaronson2015sculptingquantumspeedups}.
Next we show $\RC(f) = \fbs(f) = \FC(f)$. This follows from~\cite{kulkarni2016fractional, aaronson2008quantum} as $\fbs(f,x) = \FC(f,x)$ for all $x$ and $\RC(f) = \FC(f)$. Thus, $\certificate{f_{a,b}} \leq \fbs(f) \cdot \log | \dom(f)|$ since each of these measures is a maximization over all inputs $x$.
Finally, $\fbs(f) \leq \approxdegree(f)^2$ by Lemma 28 of~\cite{anshu2020querytocommunicationliftingadversarybounds} and for every $f$, $Bal(f) \leq \log |\dom(f)|$. Thus we obtain the desired result.

\end{proof}

\begin{corollary}
\label{cor:mdegdomain}
    For any measure $M$, $\cmplcom{M}(f) \polyeq \approxdegree (f) \cdot \log |\dom(f)|$.
\end{corollary}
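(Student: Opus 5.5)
The plan is to read the corollary as an upper bound: for every measure $M$ that is polynomially related to $\D$ on total functions (the same class of measures as in Lemma~\ref{lemma:general}), we want $\cmplcom{M}(f) \polyleq \approxdegree(f)\cdot\log|\dom(f)|$. We would also establish the matching lower bound $\approxdegree(f) \polyleq \cmplcom{M}(f)$. The idea is to sandwich $\cmplcom{M}(f)$ between quantities we already control, using the deterministic completion $F_D$ from Observation~\ref{fact:dequalscompletion} as the bridge between the partial function and total functions.

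\emph{Upper bound.} Let $F_D$ be the completion induced by an optimal deterministic decision tree for $f$. It is total and satisfies $\D(F_D)=\D(f)$. Since $\cmplcom{M}(f)$ is a minimum over completions and $M$ is polynomially tight with $\D$ on total functions, we get
\[
\cmplcom{M}(f) \le M(F_D) \polyeq \D(F_D) = \D(f).
\]
Lemma~\ref{lemma:appendix} gives $\D(f) = O(\approxdegree(f)^2\log^2|\dom(f)|)$, and the right-hand side is polynomial in $\approxdegree(f)\cdot\log|\dom(f)|$. Chaining these yields $\cmplcom{M}(f)\polyleq \approxdegree(f)\log|\dom(f)|$.

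\emph{Lower bound.} Let $F$ be any completion of $f$. Every algorithm for $F$ also computes $f$, so $\D(F)\ge \D(f)$. Polynomial tightness on the total function $F$ then gives $M(F)\polyeq \D(F)\ge\D(f)\ge \approxdegree(f)$, where the last step uses Lemma~\ref{lem:degree_lower_bound}. Minimizing over $F$ gives $\approxdegree(f)\polyleq\cmplcom{M}(f)$. Hence $\cmplcom{M}(f)$ lies polynomially between $\approxdegree(f)$ and $\approxdegree(f)\log|\dom(f)|$. In particular, whenever $\log|\dom(f)|\le\poly(\approxdegree(f))$, we have $\cmplcom{M}(f)\polyeq\approxdegree(f)$.

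The main obstacle is the literal two-sided ``$\polyeq$''. The factor $\log|\dom(f)|$ cannot in general be bounded from above by $\cmplcom{M}(f)$: a constant function on the full cube has $\log|\dom(f)|=n$ while every measure is $O(1)$. So I would state and prove the result as the upper bound above, together with the $\approxdegree$ lower bound, and flag this reading explicitly. The only other point needing care is that the polynomial relation between $M$ and $\D$ is applied only to total functions ($F_D$ and arbitrary $F$), never to $f$ itself. This is exactly the regime covered by the hypothesis on $M$ and by Lemma~\ref{lem:polymeasures}.
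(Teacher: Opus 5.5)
Your upper bound is exactly the paper's argument, which simply combines Lemma~\ref{lem:polymeasures} (through the optimal deterministic completion $F_D$ with $\D(F_D)=\D(f)$) with Lemma~\ref{lemma:appendix}, and you are right that only the direction $\cmplcom{M}(f)\polyleq\approxdegree(f)\cdot\log|\dom(f)|$ is actually established, and true in general. One small fix to your side remark: a constant function has $\approxdegree(f)=0$, so the product vanishes and does not show that the reverse direction fails; use a non-constant example such as the dictator $f(x)=x_1$ on the full cube, where $\cmplcom{M}(f)=O(1)$ but $\approxdegree(f)\cdot\log|\dom(f)|=n$.
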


\begin{proof}
 Combining \Cref{lem:polymeasures} with \Cref{lemma:appendix} we get the result. 
\end{proof}

In particular, we get $\cmplcom{\approxdegree}(f) \polyeq \approxdegree (f) \cdot \log |\dom(f)|$.

%% file: main.bib
@article{Need_For_Structure,
 author = {Aaronson, Scott and Ambainis, Andris},
 title = {The Need for Structure in Quantum Speedups},
 year = {2014},
 pages = {133--166},
 doi = {10.4086/toc.2014.v010a006},
 publisher = {Theory of Computing},
 journal = {Theory of Computing},
 volume = {10},
 number = {6},
 URL = {https://theoryofcomputing.org/articles/v010a006},
}

@article{BUHRMAN200221,
title = {Complexity measures and decision tree complexity: a survey},
journal = {Theoretical Computer Science},
volume = {288},
number = {1},
pages = {21-43},
year = {2002},
note = {Complexity and Logic},
issn = {0304-3975},
doi = {https://doi.org/10.1016/S0304-3975(01)00144-X},
url = {https://www.sciencedirect.com/science/article/pii/S030439750100144X},
author = {Harry Buhrman and Ronald {de Wolf}},
}

@inproceedings{bendavid2014structurepromisesquantumspeedups,
  title={The Structure of Promises in Quantum Speedups},
  author={Ben-David, Shalev},
  booktitle={11th Conference on the Theory of Quantum Computation, Communication and Cryptography},
  year={2016}
}

@book{odonnell2021analysisbooleanfunctions, place={Cambridge}, title={Analysis of Boolean Functions}, publisher={Cambridge University Press}, author={Ryan O'Donnell}, year={2014}}

@inproceedings{aaronson2015sculptingquantumspeedups,
  title={Sculpting Quantum Speedups},
  author={Aaronson, Scott and Ben-David, Shalev},
  booktitle={31st Conference on Computational Complexity (CCC 2016)},
  pages={26--1},
  year={2016},
  organization={Schloss Dagstuhl--Leibniz-Zentrum f{\"u}r Informatik}
}

@misc{Shalev,
      title={Polynomials, Part 1: Symmetrization}, 
      author={Ben-David Shalev},
      year={2020},
      url={https://cs.uwaterloo.ca/~s4bendav/CS860S20.html}, 
}

@inproceedings{anshu2020querytocommunicationliftingadversarybounds,
  title={On Query-To-Communication Lifting for Adversary Bounds},
  author={Anshu, Anurag and Ben-David, Shalev and Kundu, Srijita},
  booktitle={36th Computational Complexity Conference (CCC 2021)},
  pages={30--1},
  year={2021},
  organization={Schloss Dagstuhl--Leibniz-Zentrum f{\"u}r Informatik}
}

@article{aaronson2008quantum,
  title={Quantum certificate complexity},
  author={Aaronson, Scott},
  journal={Journal of Computer and System Sciences},
  volume={74},
  number={3},
  pages={313--322},
  year={2008},
  publisher={Elsevier}
}

@article{kulkarni2016fractional,
  title={On fractional block sensitivity},
  author={Kulkarni, Raghav and Tal, Avishay},
  journal={Chicago J. Theor. Comput. Sci},
  volume={8},
  pages={1--16},
  year={2016}
}

@article{ben2024symmetries,
  title={Symmetries, graph properties, and quantum speedups},
  author={Ben-David, Shalev and Childs, Andrew M and Gily{\'e}n, Andr{\'a}s and Kretschmer, William and Podder, Supartha and Wang, Daochen},
  journal={SIAM Journal on Computing},
  volume={53},
  number={6},
  pages={FOCS20--368},
  year={2024},
  publisher={SIAM}
}

@article{beals2001quantum,
  title={Quantum lower bounds by polynomials},
  author={Beals, Robert and Buhrman, Harry and Cleve, Richard and Mosca, Michele and De Wolf, Ronald},
  journal={Journal of the ACM (JACM)},
  volume={48},
  number={4},
  pages={778--797},
  year={2001},
  publisher={ACM New York, NY, USA}
}

@inproceedings{aaronson2021degree,
  title={Degree vs. approximate degree and quantum implications of Huang’s sensitivity theorem},
  author={Aaronson, Scott and Ben-David, Shalev and Kothari, Robin and Rao, Shravas and Tal, Avishay},
  booktitle={Proceedings of the 53rd Annual ACM SIGACT Symposium on Theory of Computing},
  pages={1330--1342},
  year={2021}
}

@inproceedings{chailloux2019note,
  title={A note on the quantum query complexity of permutation symmetric functions},
  author={Chailloux, Andr{\'e}},
  booktitle={ITCS 2019-10th Annual Innovations in Theoretical Computer Science},
  year={2019}
}

@article{yamakawa2024verifiable,
  title={Verifiable quantum advantage without structure},
  author={Yamakawa, Takashi and Zhandry, Mark},
  journal={Journal of the ACM},
  volume={71},
  number={3},
  pages={1--50},
  year={2024},
  publisher={ACM New York, NY}
}

@article{buhrman2002complexity,
  title={Complexity measures and decision tree complexity: a survey},
  author={Buhrman, Harry and De Wolf, Ronald},
  journal={Theoretical Computer Science},
  volume={288},
  number={1},
  pages={21--43},
  year={2002},
  publisher={Elsevier}
}

@inproceedings{nisan1989crew,
  title={CREW PRAMs and decision trees},
  author={Nisan, Noam},
  booktitle={Proceedings of the twenty-first annual ACM symposium on Theory of computing},
  pages={327--335},
  year={1989}
}

@inproceedings{ambainis2018understanding,
  title={Understanding quantum algorithms via query complexity},
  author={Ambainis, Andris},
  booktitle={Proceedings of the International Congress of Mathematicians: Rio de Janeiro 2018},
  pages={3265--3285},
  year={2018},
  organization={World Scientific}
}

@article{aaronson2021open,
  title={Open problems related to quantum query complexity},
  author={Aaronson, Scott},
  journal={ACM Transactions on Quantum Computing},
  volume={2},
  number={4},
  pages={1--9},
  year={2021},
  publisher={ACM New York, NY}
}

@article{simon1997power,
  title={On the power of quantum computation},
  author={Simon, Daniel R},
  journal={SIAM journal on computing},
  volume={26},
  number={5},
  pages={1474--1483},
  year={1997},
  publisher={SIAM}
}

@article{ben2024separations,
  title={Separations in query complexity for total search problems},
  author={Ben-David, Shalev and Kundu, Srijita},
  journal={arXiv preprint arXiv:2410.16245},
  year={2024}
}

@article{podder2025fine,
  title={On the fine-grained query complexity of symmetric functions},
  author={Podder, Supartha and Yao, Penghui and Ye, Zekun},
  journal={computational complexity},
  volume={34},
  number={1},
  pages={1--77},
  year={2025},
  publisher={Springer}
}

@InProceedings{chakraborty2022certificate,
  author =	{Chakraborty, Sourav and G\'{a}l, Anna and Laplante, Sophie and Mittal, Rajat and Sunny, Anupa},
  title =	{{Certificate Games}},
  booktitle =	{14th Innovations in Theoretical Computer Science Conference (ITCS 2023)},
  pages =	{32:1--32:24},
  series =	{Leibniz International Proceedings in Informatics (LIPIcs)},
  ISBN =	{978-3-95977-263-1},
  ISSN =	{1868-8969},
  year =	{2023},
  volume =	{251},
  editor =	{Tauman Kalai, Yael},
  publisher =	{Schloss Dagstuhl -- Leibniz-Zentrum f{\"u}r Informatik},
  address =	{Dagstuhl, Germany},
  URN =		{urn:nbn:de:0030-drops-175353},
  doi =		{10.4230/LIPIcs.ITCS.2023.32}
}

@article{shor1999polynomial,
  title={Polynomial-time algorithms for prime factorization and discrete logarithms on a quantum computer},
  author={Shor, Peter W},
  journal={SIAM review},
  volume={41},
  number={2},
  pages={303--332},
  year={1999},
  publisher={SIAM}
}

@article{Dinur_2024,
   title={Sparse juntas on the biased hypercube},
   volume={Volume 3},
   ISSN={2751-4838},
   url={http://dx.doi.org/10.46298/theoretics.24.18},
   DOI={10.46298/theoretics.24.18},
   journal={TheoretiCS},
   publisher={Centre pour la Communication Scientifique Directe (CCSD)},
   author={Dinur, Irit and Filmus, Yuval and Harsha, Prahladh},
   year={2024},
   month=jul }

@ARTICLE{mossel2005noisestabilityfunctionslow,
  title     = "Noise stability of functions with low influences: Invariance and
               optimality",
  author    = "Mossel, Elchanan and O'Donnell, Ryan and Oleszkiewicz, Krzysztof",
  journal   = "Ann. Math.",
  publisher = "Annals of Mathematics, Princeton U",
  volume    =  171,
  number    =  1,
  pages     = "295--341",
  month     =  mar,
  year      =  2010,
  doi = {10.4007/annals.2010.171.295}
}

@ARTICLE{keevash2021globalhypercontractivityapplications,
  title     = "Hypercontractivity for global functions and sharp thresholds",
  author    = "Keevash, Peter and Lifshitz, Noam and Long, Eoin and Minzer, Dor",
  journal   = "J. Amer. Math. Soc.",
  publisher = "American Mathematical Society (AMS)",
  month     =  jul,
  year      =  2023,
  language  = "en"
}

@article{Sousi_2020,
   title={Cutoff for random walk on dynamical {E}rdős–{R}ényi graph},
   volume={56},
   ISSN={0246-0203},
   url={http://dx.doi.org/10.1214/20-AIHP1057},
   DOI={10.1214/20-aihp1057},
   number={4},
   journal={Annales de l’Institut Henri Poincaré, Probabilités et Statistiques},
   publisher={Institute of Mathematical Statistics},
   author={Sousi, Perla and Thomas, Sam},
   year={2020},
   month=nov }

@misc{nagda2025optimaldistinguishersplantedclique,
      title={On optimal distinguishers for Planted Clique}, 
      author={Ansh Nagda and Prasad Raghavendra},
      year={2025},
      eprint={2505.01990},
      archivePrefix={arXiv},
      primaryClass={cs.CC},
      url={https://arxiv.org/abs/2505.01990}, 
}

@INPROCEEDINGS{noisey_p-biased,
  author={Lifshitz, Noam and Minzer, Dor},
  booktitle={2019 IEEE 60th Annual Symposium on Foundations of Computer Science (FOCS)}, 
  title={Noise Sensitivity on the p -Biased Hypercube}, 
  year={2019},
  volume={},
  number={},
  pages={1205-1226},
  doi={10.1109/FOCS.2019.00075}}

@InProceedings{khot2025biasedlinearitytesting1,
  author =	{Khot, Subhash and Mittal, Kunal},
  title =	{{Biased Linearity Testing in the 1\% Regime}},
  booktitle =	{40th Computational Complexity Conference (CCC 2025)},
  pages =	{10:1--10:23},
  series =	{Leibniz International Proceedings in Informatics (LIPIcs)},
  ISBN =	{978-3-95977-379-9},
  ISSN =	{1868-8969},
  year =	{2025},
  volume =	{339},
  editor =	{Srinivasan, Srikanth},
  publisher =	{Schloss Dagstuhl -- Leibniz-Zentrum f{\"u}r Informatik},
  URL =		{https://drops.dagstuhl.de/entities/document/10.4230/LIPIcs.CCC.2025.10},
  URN =		{urn:nbn:de:0030-drops-237046},
  doi =		{10.4230/LIPIcs.CCC.2025.10}
}

@inproceedings{biased_agreement_test,
author = {Dinur, Irit and Filmus, Yuval and Harsha, Prahladh},
title = {Analyzing boolean functions on the biased hypercube via higher-dimensional agreement tests},
year = {2019},
publisher = {Society for Industrial and Applied Mathematics},
address = {USA},
booktitle = {Proceedings of the Thirtieth Annual ACM-SIAM Symposium on Discrete Algorithms},
pages = {2124–2133},
numpages = {10},
location = {San Diego, California},
series = {SODA '19}
}

@inproceedings{Gily_n_2019, series={STOC ’19},
   title={Quantum singular value transformation and beyond: exponential improvements for quantum matrix arithmetics},
   url={http://dx.doi.org/10.1145/3313276.3316366},
   DOI={10.1145/3313276.3316366},
   booktitle={Proceedings of the 51st Annual ACM SIGACT Symposium on Theory of Computing},
   publisher={ACM},
   author={Gilyén, András and Su, Yuan and Low, Guang Hao and Wiebe, Nathan},
   year={2019},
   month=jun, pages={193–204},
   collection={STOC ’19} }

@article{goos2025pseudodeterministic,
  title={Pseudodeterministic Communication Complexity},
  author={G{\"o}{\"o}s, Mika and Harms, Nathaniel and Riazanov, Artur and Sofronova, Anastasia and Sokolov, Dmitry and Yuan, Weiqiang},
  journal={arXiv preprint arXiv:2511.04794},
  year={2025}
}

@article{gavinsky2025unambiguous,
  title={Unambiguous Parity-Query Complexity},
  author={Gavinsky, Dmytro},
  journal={Random Structures \& Algorithms},
  volume={66},
  number={3},
  pages={e70010},
  year={2025},
  publisher={Wiley Online Library}
}

@inproceedings{paturi1992degree,
  title={On the degree of polynomials that approximate symmetric boolean functions (preliminary version)},
  author={Paturi, Ramamohan},
  booktitle={Proceedings of the twenty-fourth annual ACM symposium on Theory of computing},
  pages={468--474},
  year={1992}
}

@inproceedings{balaji2015graph,
  author       = {Nikhil Balaji and
                  Samir Datta and
                  Raghav Kulkarni and
                  Supartha Podder},
  editor       = {Piotr Faliszewski and
                  Anca Muscholl and
                  Rolf Niedermeier},
  title        = {Graph Properties in Node-Query Setting: Effect of Breaking Symmetry},
  booktitle    = {41st International Symposium on Mathematical Foundations of Computer
                  Science, {MFCS} 2016, Krak{\'{o}}w, Poland, August 22-26, 2016},
  series       = {LIPIcs},
  volume       = {58},
  pages        = {17:1--17:14},
  publisher    = {Schloss Dagstuhl - Leibniz-Zentrum f{\"{u}}r Informatik},
  year         = {2016},
  url          = {https://doi.org/10.4230/LIPIcs.MFCS.2016.17},
  doi          = {10.4230/LIPICS.MFCS.2016.17},
  bibsource    = {dblp computer science bibliography, https://dblp.org}
}

@inproceedings{kulkarni2016quantum,
  title={Quantum Query Complexity of Subgraph Isomorphism and Homomorphism},
  author={Kulkarni, Raghav and Podder, Supartha},
  booktitle={33rd Symposium on Theoretical Aspects of Computer Science},
  year={2016}
}

@article{deutsch_jozsa,
    author = {Deutsch, David and Jozsa, Richard},
    title = {Rapid solution of problems by quantum computation},
    journal = {Proceedings of the Royal Society of London. Series A: Mathematical and Physical Sciences},
    volume = {439},
    number = {1907},
    pages = {553-558},
    year = {1992},
    month = {12},
    issn = {0962-8444},
    doi = {10.1098/rspa.1992.0167},
    url = {https://doi.org/10.1098/rspa.1992.0167},
    eprint = {https://royalsocietypublishing.org/rspa/article-pdf/439/1907/553/68698/rspa.1992.0167.pdf},
}

@article{bun2022approximate,
  title={Approximate degree in classical and quantum computing},
  author={Bun, Mark and Thaler, Justin},
  journal={Coding for Interactive Communication: A Survey},
  volume={15},
  number={3-4},
  pages={229--423},
  year={2022},
  publisher={Emerald Publishing Limited}
}

@inproceedings{huynh2012virtue,
  title={On the virtue of succinct proofs: Amplifying communication complexity hardness to time-space trade-offs in proof complexity},
  author={Huynh, Trinh and Nordstrom, Jakob},
  booktitle={Proceedings of the forty-fourth annual ACM symposium on Theory of computing},
  pages={233--248},
  year={2012}
}

@article{Larson_2021,
   title={Manifold Sampling for Optimizing Nonsmooth Nonconvex Compositions},
   volume={31},
   ISSN={1095-7189},
   url={http://dx.doi.org/10.1137/20M1378089},
   DOI={10.1137/20m1378089},
   number={4},
   journal={SIAM Journal on Optimization},
   publisher={Society for Industrial & Applied Mathematics (SIAM)},
   author={Larson, Jeffrey and Menickelly, Matt and Zhou, Baoyu},
   year={2021},
   month=jan, pages={2638–2664} }

@misc{magnani2009contactequationslipschitzextensions,
      title={Contact equations, Lipschitz extensions and isoperimetric inequalities}, 
      author={Valentino Magnani},
      year={2009},
      eprint={0711.5003},
      archivePrefix={arXiv},
      primaryClass={math.AP},
      url={https://arxiv.org/abs/0711.5003}, 
}

@ARTICLE{Hua19,
  title     = "Induced subgraphs of hypercubes and a proof of the Sensitivity
               Conjecture",
  author    = "Huang, Hao",
  journal   = "Ann. Math.",
  publisher = "Annals of Mathematics",
  volume    =  190,
  number    =  3,
  pages     = "949",
  month     =  nov,
  year      =  2019
}
